\documentclass[fleqn,leqno]{CSML}

\def\dOi{13(2:2)2017}
\lmcsheading%
{\dOi}
{1--32}
{}
{}
{Aug.~14, 2015}
{Apr.~28, 2017}
{}

\subjclass{F.1.3.}
\amsclass{03D25, 68Q15}
\keywords{Algorithmic Randomness, Complexity Classes of Computation.}

\usepackage{amssymb,hyperref}
\usepackage{amsmath,enumitem}
\setlength{\mathindent}{1cm}







\newcommand{\ssi}{\makebox[1cm]{iff}}


\numberwithin{equation}{section}

\def\dotminus{\mathbin{\ooalign{\hss\raise1ex\hbox{.}\hss\cr
  \mathsurround=0pt$-$}}}
\newcommand{\squ}{\sqsubseteq}

\newcommand{\pol}{$\pmb {\mathrm{P}}$}
\newcommand{\expo}{$\pmb {\mathrm{EXP}}$}
\newcommand{\pspace}{$\pmb {\mathrm{PSPACE}}$}
\newcommand{\toexp}{$\pmb {\mathrm{TOWER}}$-$\pmb{\mathrm{EXP}}$}
\newcommand{\texp}{$\pmb {\mathrm{TO}}$-$\pmb{\mathrm{EXP}}$}
\newcommand{\primrec}{$\pmb {\mathrm{PRIM}}$-$\pmb{\mathrm{REC}}$}
\newcommand{\C}{$\pmb{\mathrm C}$}
\newcommand{\expc}{$\pmb {\mathrm{EXP(C)}}$}

\newcommand{\n}{\mathbb N}
\newcommand{\q}{\mathbb Q}
\newcommand{\reel}{\mathbb R}
\newcommand{\z}{\mathbb Z}

\newcommand{\espa}{\makebox[1cm]{}}
\newcommand{\esp}{\makebox[.5cm]{}}
\let\esp=\relax

\newtheorem{cl}{Claim}

\newtheorem{df}[cl]{Definition}

\newtheorem*{theo}{An algorithm for $M$}

\newcommand{\point}{${\scriptstyle{\bullet}}\ $}


\newcommand{\inv}{\mathit{Inv}}

\newcommand{\curl}{\preccurlyeq}
\newcommand{\cur}{\prec}
\newcommand{\curllex}{\curl_{\text{llex}}}
\newcommand{\curlex}{\cur_{\text{llex}}}
\newcommand{\strin}{\ensuremath{ \subseteq\!\!\!\!\!\!_{\scriptscriptstyle{/}\ }}}


\newcommand{\words}{\{0,1\}^\ast}
\newcommand{\cantor}{\{0,1\}^{\n}}

\newcommand{\+}[1]{{\mathcal{#1}}}






\begin{document}
\title[Subcomputable Schnorr Randomness]{Subcomputable Schnorr Randomness}

\author[C.~Sureson]{Claude Sureson}
\address{IMJ, CNRS \& Universit\'e Paris 7 Denis Diderot, France.} 
\email{sureson@math.univ-paris-diderot.fr}

\begin{abstract}
The notion of Schnorr randomness refers to computable reals or computable functions. We propose a version of Schnorr randomness for subcomputable classes and characterize it in different ways: by Martin-L\"of tests, martingales or measure computable machines.
\end{abstract}

\maketitle


\section{Introduction }

Martin-L\"of randomness~\cite{ml} is the standard notion of randomness
for infinite binary sequences. Its original definition appeals to
measure but there exist different characterizations based on
Kolmogorov-Chaitin theory of information~\cite{ch},~\cite{levin} or on
the theory of martingales~\cite{sch2,sch3}.

A Martin-L\"of test (written ML test) is a  sequence $(G_n)_{n\in\n}$ of uniformly computably enumerable open subsets of $\cantor$ such that, for each $n\in\n$, the measure  \,$\mu(G_n)$\,  is $\leq 2^{-n}$.\\
An infinite binary sequence $\xi\in\cantor$ is Martin-L\"of random if
for every ML test $(G_n)_{n\in\n}$, $\xi\notin\bigcap_{n\in\n}G_n$ \
($\xi$ avoids all ``effectively null set").

Schnorr viewed this notion as too restrictive and proposed to consider only ML tests $(G_n)_{n\in\n}$ such that the sequence $(\mu(G_n))_{n\in\n}$ is uniformly computable.
He also obtained a characterization in terms of martingales and
orders. More recently Downey and Griffiths~\cite{dogri} characterized
Schnorr's notion using Kolmogorov complexity for ``computable measure
machines" (a prefix-free machine is measure computable if the measure
$\Omega_M$ of the open set generated by the domain of $M$ is
computable).

One thus has (by Schnorr, Downey, Griffiths):
\ \ for any $\xi\in\cantor$, 
 
\espa$\begin{array}[t]{lcl}
\xi \text{ is Schnorr random}&\text{iff}& 
\begin{array}[t]{l}\text{for any ML test }(G_n)_{n\in\n}\text{ with }\\
                   (\mu(G_n))_{n\in\n} \text{ uniformly computable,}\\
                   \xi\notin\bigcap_{n\in\n}G_n.
                   \end{array}\\
         &\text{iff}& 
\begin{array}[t]{l}\text{for any computable martingale } d\\
\text{and any computable order } h,\\
d(\xi\restriction i)< h(i) \text{ \ almost everywhere.}
\end{array}\\
 &\text{iff}& 
\begin{array}[t]{l}\text{for any computable measure machine }M,\\ \text{there is }b\in\n \text{ such that for any }i\in\n\\
K_M(\xi\restriction i)>i-b.
\end{array}
\end{array}$\\

Our work originated from the following question: ``can one recast these results in the primitive recursive framework or in an even  weaker one"?\\
Schnorr showed that one can restrict to ML tests $(G_n)_{n\in\n}$ such that for any $n\in\n$, $\mu(G_n)=2^{-n}$, to define Schnorr's randomness. Similarly Downey and Griffiths proved that one can restrict to machines $M$ such that \,$\Omega_M=1$. Hence the natural amendments do not work (requiring $(\mu(G_n))_{n\in\n}$ to be ``uniformly primitive recursive" or \,$\Omega_M$\, to be a ``primitive recursive real"). But for a subcomputable class \,\C\, of functions, one can nevertheless define a notion of ML-\,\C\,-S(chnorr) test and a notion of measure \,\C\, computable machine (one focuses on the pace of obtention of the measure).\\
If in the martingale formulation, we allow all martingales and orders in \,\C,  the notion will be too strong and we shall not obtain equivalence with the other characterizations. Let $h$ be a computable order. Then its inverse $\text{Inv}_h$ (defined as \,$\text{Inv}_h(n)=\text{least }k\ h(k)\geq n$) is also computable. It is not true anymore for primitive recursive functions. Hence,  relatively to a class \,\C\, of functions, we shall call an order $h$  a true \,\C\,-order if both $h$ and $\text{Inv}_h$ belong to \,\C.

In this article, we shall study the relations between the three following notions: for 
$\xi\in\cantor$,
\begin{itemize}
\item $\xi$ is ML-\,\C\,-S random \ssi $\xi$ passes all ML-\,\C\,-S tests.
\item $\xi$ is Kolmogorov-\,\C\,-S random \ssi  for any measure \,\C\ computable machine $M$, there is $b\in\n$ such that for any $n\in\n,\ K_M(\xi\restriction n)> n-b$.
\item  $\xi$ is martingale-\,\C\,-S random \ssi $\left(\begin{array}{l}
\text{for any martingale }d:\words\to\q_2\ \text{ in \,\C,}\\ \text{and any true \,\C\,-\,order }h, \ d(\xi\restriction i)< 2^{h(i)} \ \text{a.e.}
\end{array}\right.$\\
\end{itemize}

 We show that if \,\C\, is the class of primitive recursive functions or the class  \,\pspace, then these three notions coincide.

One can check by using \,ML-\,\C\,-S tests that (ML) \primrec -S randomness is strictly weaker than Schnorr randomness. But the martingale approach is better suited to separate the different notions of randomness as \,\C \ varies among time-complexity classes. We shall thus rely on the important amount of work centered around the martingale tool and the associated notions of subcomputable randomness. This is the field of Resource Bounded Randomness initiated by Lutz and developed by  Ambos-Spies, Lutz, Mayodormo,  Wang and other people.\\
We shall compare our notion of martingale-\,\C -S randomness with Lutz~\cite{lu1,lu2} notion of p-randomness, with Wang's \,(\pol,\pol)-S randomness and with Buss, Cenzer and remmel~\cite{buss} weaker notion of BP-randomness  (~\cite{buss} results about primitive recursiveness have been a strong motivation to us). 
Wang's notion is a version of Schnorr randomness for the class of polynomial time computable functions, the martingales are required to be in \,\pol\  and  all orders in \,\pol\ \,are allowed. (By a delaying computation argument) this is the same as allowing all computable orders. Our concern with the status of the inverse of the order weakens the notion and enables more variety inside the set of computable sequences.
Building on techniques of Wang and results of Schnorr, we show that one can obtain a whole hierarchy.

The two following tableaux summarize the situation: \,\C \ randomness is the analog for the class \,\C\ of computable randomness, martingale-\,\C -S randomness is abbreviated to \,\C -S randomness, and \,\C -W randomness stands fo weak (Kurz) randomness with regard to the class \,\C.
Implications in the tableaux cannot be reversed and in the second tableau, this holds even when restricting to the class of computable infinite sequences.\\

\noindent$\begin{array}{|ccccc|}\hline
&&&&\vspace{-4mm}\\
\!\text{Computable randomness}\!\!\!&\!\!\!\overset{\esp}{\Rightarrow}&\text{Schnorr randomness}&\Rightarrow&\text{weak randomness}
\\
&&&&\vspace{-4mm}\\
\Downarrow\esp &&\Downarrow\esp&&\Downarrow\esp
\\
&&&&\vspace{-4mm}\\
\!\!\!\!\text{\primrec\ randomness}\!\!\!\!\!\!&\!\!\!\!\!\!\!\Rightarrow\!\!\!\!&\!\!\!\!\text{\primrec\,-S randomness}\!\!\!\!&\!\!\!\!\Rightarrow\!\!\!\!&\!\!\!\!\text{\primrec\,-W randomness}\!\!\\
\hline
\end{array}$\\

\centerline{Tableau 1}

\noindent$\begin{array}{|ccccc|}
\hline
&&&&\vspace{-4mm}\\
\text{\primrec\ randomness}\!\!\!&\!\!\!\!\!\Rightarrow\!\!\!\!&\!\!\!\!\text{\primrec\,-S randomness}\!\!\!\!&\!\!\!\!\Rightarrow\!\!\!\!&\!\!\!\!\text{\primrec\,-W randomness}\!\!\\
\Downarrow &&\Downarrow&&\Downarrow\\
\text{\expo\ randomness}&\Rightarrow&\text{\expo\,-S\ randomness}&\Rightarrow&\text{\expo\,-W randomness}\\
\Downarrow &&\Downarrow&&\Downarrow\\
&&&&\vspace{-3mm}\\
\text{\pol\ randomness}&\Rightarrow&\text{\pol\,-S\ randomness}&\Rightarrow&\text{\pol\,-W randomness}\\
\hline
\end{array}$\\

\centerline{Tableau 2}

\nocite{ko,ch,lu2}

\section{A few classical definitions.}
 
\subsection{Some notation.}

$\n,\,\q,\,\q_2,\,\reel$ denote respectively the set of natural, rational, dyadic rational and real numbers (dyadic rational numbers are  of the form \ $m2^{-n}$, for $m\in\z$ and $n\in\n$).

$\words$ is the set of finite binary sequences (or strings on
$\{0,1\}$) and\, $\cantor$ is the set of infinite binary sequences.
\begin{itemize}
\item If $x$ is a finite sequence, then $|x|$ represents its
  length. For an integer $i\in\n,\ x\restriction i$ is the restriction
  of $x$ onto the set $\{0,1,\ldots,i-1\}$.

\noindent We consider the (prefix) partial ordering $\curl$ defined on finite
binary sequences by
\[\mbox{\ $x\curl y$ \ssi $x$ is a prefix of $y$ \ \ (that is \
  iff \ $|x|\leq|y|$ and $y\restriction |x|=x$)\,.}
\]

\noindent We shall also use the well-ordering $\curllex$ (length-lexicographic
ordering):
\[\mbox{$x\curllex y \ssi \begin{cases}
|x|< |y|\text{ \,or}\\
|x|=|y|\text{ and  } x \text{ is before }y \text{ in lexicographic order.}
\end{cases}$}
\]
\noindent $\cur$ (respectively $\curlex$) denotes the corresponding strict ordering.

\item Now for $\alpha\in\cantor$ and $i\in\n$, we also write $\alpha\restriction i$ for the restriction of $\alpha$ onto the set\linebreak $\{0,1,\cdots,i-1\}$. If $x\in\words$, the notation $x\curl\alpha$ means $\alpha\restriction |x|=x$.\\
If $x,y\in\words,\ i\in\{0,1\},\ \alpha\in\cantor$, we write $xy,\ xi,\ x\alpha$ for the corresponding concatenation.

\item Given a finite set $X$, $|X|$ is the number of elements of $X$. To avoid confusion, for $r\in\reel$, we shall  write \,$\Vert r\Vert$\, to mean the absolute value of $r$.\\
The function \ $\langle\ ,\ \rangle:\n\times\n \to\n$ \ is the
classical polynomial time bijection  defined as
\[\langle m,n\rangle=m+(m+n)(m+n+1)/2\mbox{, for }m,n\in\n\,.\]
 Let \,$(\ )_0,\,(\ )_1:\n\to\n$ \,denote the (polynomial time) inverse functions: for $i\in\n$, $\langle(i)_0,(i)_1\rangle\,=\,i$.

\item Our references in Recursion Theory are \cite{odi1,odi2}, and in Algorithmic Randomness, we rely on ~\cite{dohi} and~\cite{nies}. We thus write $K_M$ for Kolmogorov complexity when considering a prefix-free Turing machine $M$ (see ~\cite[Ch.3.5]{dohi}).\\
The terms ``recursive" and ``computable" have similar meanings. 

\item  Concerning  topology and measure, we consider the classical product topology on $\cantor$ (see ~\cite{dohi,nies}). If  $x\in\words$, then we denote by $[x]$ the basic open set $\{x\alpha:\alpha\in\cantor\}$ and if  $X\subseteq\words$, $[X]$ is the open subset of $\cantor$ generated by $X$, that is
\[[X] = \{x\alpha : x\in X,\ \alpha\in\cantor\}\,.\]
 $\mu$ is the uniform measure on $\cantor$: if $x\in\words$, then $\mu([x])=2^{-|x|}$. When computing measure, we shall always deal with open (and hence measurable) sets.

\item Generally, we use lowcase greek letters $\alpha,\xi...$ for infinite binary sequences and lowcase roman letters $x,y...$ for finite sequences.
\end{itemize}

\subsection{Schnorr Randomness.}

 We recall here the definition of Schnorr randomness and give three different characterizations (due to Schnorr, Downey and Griffiths). For the definitions of a ``computable real" or of a ``computable (real valued) function", we refer to ~\cite[5.1 and 5.2.1]{dohi}.

\begin{defi}\hfill
\begin{enumerate}[label=\({\alph*}]
\item A sequence $(G_n)_{n\in\n}$ of open subsets of $\cantor$ is a Martin-L\"of test (abbreviated as ML test) if there is a recursively enumerable set $X\subseteq\n\times\words$ such that setting, for $n\in\n$, \ $X_n=\{x\in\words : (n,x)\in X\},$ one has \ 
$G_n=[X_n]$ and $\mu(G_n)\leq 2^{-n}$.

\item A sequence $\xi\in\cantor$ passes  the ML test $(G_n)_{n\in\n}$ if \ $\xi\notin\bigcap_{n\in\n}G_n$ \ (otherwise it fails the test).

\item A sequence $\xi\in\cantor$ is random if it passes all ML tests.
\end{enumerate}
\end{defi} 

\noindent Schnorr viewed this notion of randomness as too strong and proposed the following:

\begin{defi}[\textnormal{\ Schnorr}]  \label{schno-historic}\hfill
\begin{itemize}[label=$-$]
\item A Schnorr test is an ML test $(G_n)_{n\in\n}$ \ such that $\mu(G_n)$ is uniformly computable in $n$.

\item A sequence $\xi\in\cantor$ is Schnorr random if it passes all
  Schnorr tests.
\end{itemize}
\end{defi}

\noindent There is a characterization of Schnorr randomness in terms of martingales. We recall:

\begin{defi}\hfill
\begin{enumerate}[label=\({\alph*}]
\item A function $d:\words\to\reel^+$ is a martingale if for any $x\in\words, \ d(x0)+d(x1)=2d(x)$.

\item A function $h:\n\to\n$ is an order if it is nondecreasing and unbounded.
\end{enumerate}
\end{defi}

\begin{thm}[\textnormal{~\cite{schno}}]
A sequence $\xi\in\cantor$ is Schnorr random iff for any computable martingale $d$ and any computable order $h$, \ $d(\xi\restriction n)< h(n)$ \ a.e. (a.e. stands for ``almost everywhere").
\end{thm}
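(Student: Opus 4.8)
The plan is to prove both implications by contraposition: I will show that $\xi$ is \emph{not} Schnorr random exactly when there exist a computable martingale $d$ and a computable order $h$ with $d(\xi\restriction n)\geq h(n)$ for infinitely many $n$. Throughout I normalise martingales to have value $1$ on the empty string $\varepsilon$, and I will freely use the elementary maximal inequality: if $d\geq 0$ is a martingale with $d(\varepsilon)=1$ and $c>0$, then $\mu(\{\alpha\in\cantor : \sup_k d(\alpha\restriction k)\geq c\})\leq 1/c$.

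\emph{From a successful martingale/order pair to a Schnorr test.} Assume $d(\xi\restriction n)\geq h(n)$ infinitely often. Since $h$ is a computable order, $\text{Inv}_h(m)$ (the least $k$ with $h(k)\geq m$) is computable; put $j_n=\text{Inv}_h(2^{n})$, so $h(k)\geq 2^{n}$ for all $k\geq j_n$, and set
\[
V_n=\{\alpha\in\cantor : \exists k\geq j_n\ \ d(\alpha\restriction k)\geq h(k)\}.
\]
These are uniformly r.e.\ open; by the maximal inequality $\mu(V_n)\leq\mu(\{\alpha:\sup_k d(\alpha\restriction k)\geq 2^{n}\})\leq 2^{-n}$; and $\xi\in V_n$ for every $n$ because $d(\xi\restriction k)\geq h(k)$ holds arbitrarily late. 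The one delicate point is that $\mu(V_n)$ be \emph{computable}, uniformly in $n$, and not merely r.e.: I would note that the truncations $V_n^{<m}$ (the quantifier restricted to $j_n\leq k<m$) are clopen with a uniformly computable rational measure, that $V_n=\bigcup_m V_n^{<m}$, and that $V_n\setminus V_n^{<j_{n+l}}$ is contained in $\{\alpha:\sup_k d(\alpha\restriction k)\geq 2^{n+l}\}$, hence has measure $\leq 2^{-(n+l)}$; thus $\mu(V_n^{<j_{n+l}})$ converges to $\mu(V_n)$ at a computable rate. So $(V_n)_n$ is a Schnorr test failed by $\xi$.

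\emph{From non-randomness to a successful martingale/order pair.} Now assume $\xi$ fails a Schnorr test $(U_n)$, so $\xi\in\bigcap_n U_n$. After the routine reduction of replacing $U_n$ by $\bigcap_{k\leq n}U_k$ (which keeps the intersection and keeps the measures uniformly computable — the approximation error for $\bigcap_{k\leq n}U_k$ being at most the sum of those of $U_0,\dots,U_n$) I may take the $U_n$ nested, with $\mu(U_n)$ uniformly computable and $\mu(U_n)\leq 2^{-n}$. Because $\mu(U_n)$ is computable, $\mu(U_n\cap[x])$ is computable uniformly in $n,x$ (enumerate $U_n$ until the captured measure is within the target precision of the known value $\mu(U_n)$), so the ``capturing'' martingales $e_n(x)=2^{|x|}\mu(U_n\cap[x])$ are uniformly computable, with $e_n(\varepsilon)=\mu(U_n)\leq 2^{-n}$, $e_n\leq 1$, and $e_n(\xi\restriction j)=1$ as soon as $[\xi\restriction j]\subseteq U_n$.

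I expect the real obstacle to lie in what comes next. The tempting choice $d=\sum_{n\geq 1}e_n$ is a computable martingale with $d(\varepsilon)\leq 1$ and $d(\xi\restriction j)\geq|\{n:[\xi\restriction j]\subseteq U_n\}|\to\infty$, but the rate at which it grows is governed by how slowly the prefixes of $\xi$ sink into the $U_n$, which is not computable, so no computable order need sit below $j\mapsto d(\xi\restriction j)$ infinitely often. My plan to repair this is an interleaved, ``scheduled'' martingale: choose in advance a computable, sufficiently sparse sequence of block boundaries $0=t_0<t_1<\cdots$ together with computable stakes, and over the levels $[t_r,t_{r+1})$ let $d$ wager, for each $k\leq r$, a small computable fraction of its current capital that the length-$t_{r+1}$ extension of the current prefix lands in the \emph{finite} clopen approximation of $U_k$ reached after $r$ enumeration steps; using only finite approximations keeps $d$ computable. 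For each fixed $k$ that approximation eventually contains the cylinder of $\xi$, so from some block onward the $k$-th wager is won in every block; choosing the blocks sparse enough and the stakes so that a win multiplies the wagered portion by a definite factor, one forces $d(\xi\restriction t_{r+1})\geq h(t_{r+1})$ for all large $r$ (hence for infinitely many $n$), where $h$ is an order read off computably from the schedule $(t_r)$ and the stakes. The genuinely delicate part, and where I expect the actual work, is this bookkeeping: arranging the approximation schedule so that computability of $d$ coexists with eventual capture of $\xi$, and so that the lower bound obtained along $\xi$ is genuinely a computable order.
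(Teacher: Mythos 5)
Your first direction (from a computable martingale $d$ and order $h$ with $d(\xi\restriction n)\geq h(n)$ infinitely often to a Schnorr test) is essentially right, and it is the same mechanism as Proposition~\ref{mart-to-ML}: index the sets $V_n$ through $\text{Inv}_h$, get $\mu(V_n)\leq 2^{-n}$ from the maximal inequality, and get a computable modulus from the tail estimate $\mu(V_n\setminus V_n^{<j_{n+l}})\leq 2^{-(n+l)}$. One point you should not gloss over: for a computable \emph{real-valued} $d$ the relation $d(x)\geq h(|x|)$ is $\Pi^0_1$, not decidable, so $V_n$ is not obviously c.e.\ and $\mu(V_n^{<m})$ is not obviously a computable rational; you must first replace $d$ by a rational-valued computable martingale within an additive constant (as in Lemma~\ref{q2appro}), or work with strict rational thresholds. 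That is routine.

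The second direction has a genuine gap, and it sits exactly where you located ``the actual work''. Your scheduled martingale bets, in block $[t_r,t_{r+1})$, on the \emph{early} approximation $U_k[r]$ of each $U_k$. You correctly observe that for each fixed $k$ the bet is eventually won in every block; but once $[\xi\restriction t_r]$ is contained in $U_k[r]$, the conditional probability of the event you are backing equals $1$, so a fair wager on it returns the stake with no profit. All of the multiplicative gain (a factor of order $1/\mu(U_k)\geq 2^{k}$) is earned during the blocks in which $\mu(U_k[r]\mid \xi\restriction t_r)$ climbs from about $\mu(U_k)$ up to $1$, and the location of those blocks is governed by the stage at which $\xi$'s prefix is enumerated into $U_k$ --- which is not computable. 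So your $d$ is computable and satisfies $d(\xi\restriction n)\to\infty$, but that is exactly the predicament you already diagnosed for $\sum_n e_n$: a quantity tending to infinity at a rate that no computable order need undercut infinitely often. The missing idea (Schnorr's, reproduced in this paper in Proposition~\ref{ml-mart}, Claim~\ref{intersec}) is to bet on the \emph{late} parts $C^k_n=[X_n\setminus X_{n,g(k)}]$ with weights $2^k$, where $g$ comes from the computable modulus of convergence of $\mu(U_n[s])$ to $\mu(U_n)$ so that $\sum_{n,k}2^k\mu(C^k_n)$ converges; the computable order then falls out of the trivial but crucial observation that a string of length $i$ cannot be enumerated in fewer than $i$ steps, so when $\xi$ enters $X_n$ at length $i_n$ it automatically lies in $X_n\setminus X_{n,g(k)}$ for every $k$ with $g(k)<i_n$, giving $B(\xi\restriction i_n)\geq 2^{\text{Inv}_g(i_n)-1}$ --- a computable order evaluated at the \emph{length} $i_n$, independent of the non-computable enumeration stage. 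Without some device of this kind, no amount of bookkeeping in your scheme will produce a computable lower bound along $\xi$.
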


The last characterization we shall consider in this article is more recent and due to Downey and Griffiths.

\begin{defi}\hfill
\begin{itemize}[label=$-$]
\item If $M$ is a Turing machine, then \,$\Omega_M$\, is the measure $\mu([\text{dom}(M)])$.

\item A prefix-free machine $M$ is called a computable measure machine
  if \ $\Omega_M$ \ is a computable real.
\end{itemize}
\end{defi}

\begin{thm}[\textnormal{~\cite{dogri}}]\label{dogri}
A sequence $\xi\in\cantor$ is Schnorr random iff for each computable measure machine $M$, there is \,$b\in\n$\, such that for any $n, \ K_M(\xi\restriction n)>n-b$.
\end{thm}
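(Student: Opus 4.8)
The plan is to establish the two implications separately, arguing each by contraposition; in both directions the machine existence (Kraft--Chaitin) theorem and the computability of $\Omega_M$ are the load-bearing tools.

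For the implication from Schnorr randomness to the complexity condition I would argue as follows. Suppose $\xi$ is Schnorr random, yet some computable measure machine $M$ satisfies: for every $b$ there is $n$ with $K_M(\xi\restriction n)\leq n-b$. Put $G_b=\bigcup\{[\sigma]:K_M(\sigma)\leq |\sigma|-b\}$. Since $K_M$ is upper semicomputable, $(G_b)_{b\in\n}$ is uniformly c.e.; assigning to each contributing $\sigma$ a shortest $M$-program $p_\sigma$ (these are pairwise distinct) and using $2^{-|\sigma|}\leq 2^{-|p_\sigma|-b}$ together with prefix-freeness gives
\[\mu(G_b)\ \leq\ \sum_\sigma 2^{-|\sigma|}\ \leq\ 2^{-b}\!\!\sum_{p\in\text{dom}(M)}\!\!2^{-|p|}\ =\ 2^{-b}\,\Omega_M\ \leq\ 2^{-b}.\]
Computability of $\Omega_M$ enters only in the uniform computability of $\mu(G_b)$: to get it within $2^{-k}$, enumerate $\text{dom}(M)$ until the enumerated weight exceeds $\Omega_M-2^{-k}$, at which stage every cylinder that could still enter $G_b$ contributes in total at most $2^{-b}2^{-k}\leq 2^{-k}$, so the measure of the finite union of cylinders seen so far approximates $\mu(G_b)$ adequately. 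Hence $(G_b)_b$ is a Schnorr test, and the hypothesis on $M$ says exactly that $\xi\in\bigcap_b G_b$, contradicting Schnorr randomness.

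For the converse I would start from a Schnorr test $(G_n)_n$ with $\xi\in\bigcap_n G_n$ and manufacture a computable measure machine $M$ failing the complexity condition at $\xi$. First compress the test: set $H_k=G_{2k}$ (with a harmless index shift so the total weight below comes out $\leq 1$), so $\mu(H_k)\leq 2^{-2k}$, $(\mu(H_k))_k$ is still uniformly computable, and $\xi\in\bigcap_k H_k$; write $H_k=[Y_k]$ with $Y_k$ a uniformly c.e.\ antichain, so that $\sum_{\sigma\in Y_k}2^{-|\sigma|}=\mu(H_k)$ exactly. Feed the Kraft--Chaitin theorem \cite{dohi} the request set $\{(|\sigma|-k,\sigma):k,\ \sigma\in Y_k\}$, whose total weight is $\sum_k 2^k\mu(H_k)$, a geometric-type convergent series of value $\leq 1$. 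This yields a prefix-free $M$ with $K_M(\sigma)\leq |\sigma|-k$ for all $\sigma\in Y_k$ and $\Omega_M$ equal to that total weight. Since each $\mu(H_k)$ is computable uniformly in $k$ and the tail of $\sum_k 2^k\mu(H_k)$ past $k=m$ is $\leq 2^{-m}$, $\Omega_M$ is a computable real, so $M$ is a computable measure machine. Finally, for each $k$ the fact $\xi\in H_k$ produces $\sigma\in Y_k$ with $\sigma\curl\xi$; writing $n_k=|\sigma|$ one has $K_M(\xi\restriction n_k)\leq n_k-k$, while $2^{-n_k}\leq\mu(H_k)\leq 2^{-2k}$ forces $n_k\to\infty$, so no single constant $b$ can work for $M$, as required.

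I expect the real obstacle to be precisely this second construction: a naive use of Kraft--Chaitin produces a machine whose $\Omega_M$ is only left-c.e. The two ingredients that repair it are (i) compressing the test to $\mu(H_k)\leq 2^{-2k}$ before charging each $\sigma\in Y_k$ a description of length $|\sigma|-k$, which makes the per-level weights summable, and (ii) the uniform computability of the $\mu(H_k)$, which is the one place where the Schnorr (rather than plain Martin-L\"of) hypothesis is genuinely used. The forward implication is comparatively routine, its sole subtlety being that computability of $\Omega_M$ tells us when ``enough'' of $\text{dom}(M)$ has been enumerated to pin down $\mu(G_b)$.
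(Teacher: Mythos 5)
Your proposal is correct and follows essentially the same route as the source the paper cites (Downey--Griffiths) and as the paper's own subcomputable refinements: the forward direction is the standard test $R^M_b=[\{\sigma:K_M(\sigma)\le|\sigma|-b\}]$ with $\mu(R^M_b)\le 2^{-b}\Omega_M$ and measure approximated by enumerating $\mathrm{dom}(M)$ up to weight $\Omega_M-2^{-k}$ (this is exactly the mechanism behind Proposition~\ref{prop-comp}), while the converse is the Kraft--Chaitin construction on the request set $\{(|\sigma|-k,\sigma):\sigma\in Y_k\}$ after compressing the test to $\mu(H_k)\le 2^{-2k}$, exactly as in the request set~\eqref{request} of Proposition~\ref{prop-Ml-meas}. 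Both implications, including the two points you flag as the genuine content (prefix-freeness of the $Y_k$ so that weights are exact, and uniform computability of the $\mu(H_k)$ making $\Omega_M$ computable rather than merely left-c.e.), check out.
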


When trying to extend these definitions to subcomputable classes, one must be cautious. For instance, the notion of ``primitive recursive real" is problematic (see ~\cite{chen}). We shall  consider Cauchy style definitions (rather than left cut ones) and all the definitions which have been omitted in this review paragraph, will be provided.

\section{Notions of Resource bounded Schnorr randomness.}

\esp An important body of concepts and results (now classical) has been obtained by  Ambos-Spies, Ko, Lutz, Mayodormo, Wang and others (see~\cite{ambomayo} for a survey). Our original motivation came from primitive recursiveness and the article of Buss, Cenzer and Remmel~\cite{buss}. We shall thus build on all these works to propose here, relatively to a class of functions, three possible characterizations (inspired from the  previous section) of Schnorr resource bounded randomness: in terms of Martin-L\"of tests, Kolmogorov complexity and martingales. (Depending on the chosen class) we shall study when these different approaches lead to the same notion. Later in the paper, we shall also compare these definitions with a different concept proposed by Wang~\cite{wang2000}.

To motivate our definitions, let us note that some properties are given for free when dealing with recursive functions. For instance, if $f:\n\to\n$ is recursive and unbounded, its inverse $\inv_f$ (defined by $\inv_f(n)=\text{least }k\ f(k)\geq n$) is also recursive. It is not true anymore for primitive recursive functions: there exists a primitive recursive function whose inverse is the Ackermann function \
 $n\mapsto A(n,n)$. Hence our definitions will have to incorporate new conditions.
 
 \subsection{Definitions.}

 We shall consider  time-complexity classes \,\C\, of functions of the form: 
\[\hbox{\C\ }=\ \bigcup_{f\in F_C}\mathrm{FDTIME(f(n))}\,,\]
 for \,$F_C$\, a class with appropriate closure properties of time-constructible functions.

Such classes \,\C\, are:
 
\begin{defi}\hfill
\begin{enumerate}
\item \pol $\ =\ \bigcup_{k\in\n}\text{FDTIME}(n^k)$,

\item \expo $\ =\ \bigcup_{k\in\n} \text{FDTIME}(2^{n^k})$,

\item Let $T:\n\times\n\to\n$ be the function recursively defined  by  
\[\begin{array}[t]{l}
  T(0,n)=n \\
  T(k+1,n)=2^{T(k,n)}.
  \end{array} \]
  Then \ \toexp\ $\ =\ \bigcup_{k\in\n}\text{FDTIME}(T(k,n))$.

\item  \primrec\ is the class or recursively primitive functions.
\end{enumerate}
\end{defi}
 
 To view \primrec\ as a time-complexity class may require some justification: an easy modification of ~\cite[Thm VIII 8.8]{odi2}) gives:
 
 \begin{lem}
A (total) function $f$ is primitive recursive iff it can be computed by a Turing machine in time $\mathcal O(g(n))$, for $g$ primitive recursive and time-constructible.
\end{lem}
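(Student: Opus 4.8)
The plan is to derive this from two classical facts: a Turing‑machine computation run for a primitive recursive number of steps can be simulated by a primitive recursive function, and the primitive recursive functions are exactly the union of the Grzegorczyk classes $\mathcal E^k$, each of which has a time‑constructible dominating function (see~\cite[Ch. VIII]{odi2}).

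For the implication ``$\Leftarrow$'', suppose $f$ is computed by a Turing machine $M$ in time $\mathcal O(g(n))$ with $g$ primitive recursive. Code configurations of $M$ (state, head position, tape contents) as natural numbers in the standard way, so that the one‑step transition map $\mathrm{Next}_M$ is primitive recursive; in fact it is computable in linear time. After $t$ steps on an input $x$ the used portion of the tape has length $\leq t+|x|+\mathcal O(1)$, so the code of the $t$‑th configuration is bounded by $2^{\mathcal O(t+|x|)}$; hence the map $(x,t)\mapsto\mathrm{Config}_M(x,t)$, obtained from $\mathrm{Next}_M$ by primitive recursion on $t$, is primitive recursive. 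Since $M$ halts within $c\cdot g(|x|)$ steps for a constant $c$, and $n\mapsto c\cdot g(n)$ is primitive recursive, $f(x)$ is the decoding of $\mathrm{Config}_M(x,c\cdot g(|x|))$, a composition of primitive recursive functions. (Time‑constructibility of $g$ plays no role in this direction.)

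For ``$\Rightarrow$'', proceed by induction on a primitive recursive derivation of $f$, maintaining the stronger statement that $f$ is computable by a Turing machine within time $\tau_f(n)$ for some primitive recursive \emph{and time‑constructible} $\tau_f$. The base functions (constants, successor, projections) are computable in linear time. For composition $f(\vec x)=h(g_1(\vec x),\dots,g_k(\vec x))$, one runs the machines for the $g_i$ and then the machine for $h$; as the length of each output $g_i(\vec x)$ is at most its running time $\tau_{g_i}(n)$, the total time is bounded by a primitive recursive function of $n$. For primitive recursion $f(\vec x,0)=g(\vec x)$, $f(\vec x,y+1)=h(\vec x,y,f(\vec x,y))$, one iterates the machine for $h$ at most $y\leq n$ times; since the intermediate values $f(\vec x,i)$ and their lengths are bounded by a fixed primitive recursive function, the total time is again primitive recursive in $n$. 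In each case one obtains a primitive recursive time bound $\tau$; to upgrade it to a time‑constructible one, majorize $\tau$ by a constant iterate of $E_k$, the fast‑growing function generating the Grzegorczyk class $\mathcal E^k$ containing $\tau$ (e.g. an iterate of $2^n$ at the appropriate level), which is primitive recursive, time‑constructible, and closed under the operations used.

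The one delicate point — the main obstacle — is precisely the last step: one must know that every primitive recursive function admits a primitive recursive \emph{time‑constructible} majorant, and that intermediate values along a primitive recursion are primitive‑recursively bounded. Both are standard consequences of the Grzegorczyk hierarchy and constitute the ``easy modification'' of~\cite[Thm VIII 8.8]{odi2} referred to above; everything else is routine bookkeeping about machine encodings and closure of the primitive recursive functions under composition and bounded iteration.
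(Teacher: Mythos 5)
Your proof is correct and is essentially the argument the paper relies on: the paper gives no proof of this lemma, deferring to Odifreddi's Theorem VIII.8.8 (primitive recursive $=$ computable in primitive-recursive time) together with an ``easy modification'', and your two directions --- the primitive-recursive simulation of a machine clocked by $c\cdot g(n)$ steps, and the induction on derivations producing a machine with a primitive recursive time bound that is then majorized by a time-constructible function from the Grzegorczyk hierarchy --- are precisely that theorem and that modification. The point you single out as delicate (every primitive recursive function admits a time-constructible primitive recursive majorant, e.g.\ an appropriate $E_k$) is exactly the content of the paper's ``easy modification'', so your write-up correctly fills in what the paper leaves to the reference.
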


For example, if \,\C\,=\,\pol, then we take for \,$F_C$\, the class of polynomial functions with coefficients in $\n$; if \,\C\,=\,\primrec, then we consider for \,$F_C$\, the class of primitive recursive functions which are time-constructible.

\begin{defi}
If \ \C\  \ $=\ \bigcup_{f\in F_C}\text{FDTIME}(f(n))$, with \,$F_C$\,\ as above, then let \ \ \expc \ \ $=\ \ \bigcup_{f\in F_C}\text{FDTIME}(2^{f(n)})$.
\end{defi}

For instance, \ \expo \ = \expo(\pol) \ and \ \C\ =\ \expo(\C)\        
 when \ \C\,\ is \ \toexp\ \,or\, \primrec \ \,(all this will allow us to state our results in a unified way).
 We shall also consider the classical space related class:
 
 \begin{defi}
 \pspace\ \ =\ \ $\bigcup_{k\in\n}\text{FSPACE}(n^k)$.
 \end{defi}
 
 When dealing with functions \,$f:\n\to\n$,
  martingales \,$d:\words\to\q _2$ 
  \,or approximations of \,$\reel$
  valued martingales \,$g:\words\times\n\to\q_2$, to decide whether these functions belong to one of the above classes \C, we must fix a representation of the different inputs and outputs, and hence a measurement of their size.
 \begin{itemize}
 \item integers will be under unary representation: $n\in\n$ is thus viewed as $1^n$ and its size is $n$,
 \item strings $x\in\words$ have classically size $|x|$,
 \item there is a constant $\theta\in\n$ such that dyadic rational numbers  of the form $m2^{-n}$, for $m\in\n$ and $n\in\n$, are (reasonably) coded by a string in $\words$ of length \,$\leq \theta(\log(m)+n)$
\end{itemize}

\noindent To introduce subcomputable Martin-L\"of tests, let us state a few definitions. We often identify a Turing machine $M$ with the (partial) recursive function it computes. We write $M(x)\downarrow$ \ to mean that the machine $M$ halts on input $x$ (yielding as output whatever is written on a dedicated tape).

\begin{defi} \label{def-basic}\hfill
\begin{enumerate}[label=\({\alph*}]
\item Let $M$ be a Turing machine. Then for $t\in\n,\ x,y\in\words$, we set
\begin{itemize}[label=$-$]
\item $M_t(x)=y \ssi (\begin{array}{l}
              M(x)\downarrow \text{\,and on input }x,\ M\\
              \text{ outputs }y\text{ in at most }t\text{ steps.}
              \end{array}$

\item $M^{\text{space}}_t(x)=y \ssi (\begin{array}{l}
              M(x)\downarrow \text{\,and on input }x,\ M\\
              \text{ outputs }y\text{ having used at most }t\text{ cells}\ \ (t\geq |x|,|y|).
              \end{array}$
              \end{itemize}

\item Given a recursively enumerable set $X\subseteq\n\times\words$ \,and a machine $M$ such that $X=\text{dom}(M)$, we set for $m,t\in\n$,
\begin{itemize}[label=$-$]
\item
 $X_m=\{x\in\words : (m,x)\in X\}=\{x\in\words : M(m,x)\downarrow\}$
\item $X^M_{m,t}=\{x\in\words : (m,x)\in \text{dom}(M_t)\},$
\item $X^{M,\text{space}}_{m,t}=\{x\in\words : (m,x)\in
\text{dom}(M^{\text{space}}_t)\}.$
\end{itemize}
\end{enumerate}
\end{defi}

\noindent Let $(G_n)_{n\in\n}$ be a Schnorr test (Definition~\ref{schno-historic}): the sequence $(\mu(G_n))_{n\in\n}$ is thus uniformly computable. This implies the existence of a computable function $F:\n\times\n\to\q_2$ such that for any $i,n\in\n,\ \Vert \mu(G_n)-F(n,i)\Vert \ \leq 2^{-i}$.

Hence a natural attempt to extend the notion of ML Schnorr  test to the primitive recursive context would be to require $F$ to be primitive recursive and to consider as random, infinite sequences which pass all such tests . This is doomed:

\begin{thm}[\textnormal{~\cite{schno}}]\hfill
Let $(G_n)_{n\in\n}$ be a Schnorr test. Then there exists a Schnorr test \,$(O_n)_{n\in\n}$ such that for any $n$, $\mu(O_n)=2^{-n}$ and \ $\bigcap_{n\in\n}G_n\,\subseteq\,\bigcap_{n\in\n}O_n$.
\end{thm}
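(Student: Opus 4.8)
The plan is to construct, uniformly in $n$, an r.e.\ open set $O_n\supseteq G_n$ with $\mu(O_n)=2^{-n}$ \emph{exactly}. This suffices: such a family $(O_n)_{n\in\n}$ is then automatically an ML test (it is uniformly r.e., and $\mu(O_n)=2^{-n}\le 2^{-n}$) whose measure function $n\mapsto 2^{-n}$ is trivially uniformly computable, hence a Schnorr test; and $G_n\subseteq O_n$ for every $n$ gives $\bigcap_n G_n\subseteq\bigcap_n O_n$ at once. So the whole task is to ``pad'' each $G_n$ up to measure $2^{-n}$.

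First I would fix the usual data. Let $X$ be r.e.\ with $G_n=[X_n]$, and let $G_{n,s}$ be the clopen set generated by those $x\in X_n$ with $|x|\le s$ that are enumerated within $s$ steps, so the $G_{n,s}$ increase to $G_n$. From the uniform computability of $(\mu(G_n))_n$, fix a computable $(n,s)\mapsto r_{n,s}\in\q_2$ with $\Vert\mu(G_n)-r_{n,s}\Vert\le 2^{-s}$, and put $u_{n,s}=\min_{i\le s}(r_{n,i}+2^{-i})$ and $\alpha_{n,s}=\max(0,\,2^{-n}-u_{n,s})$. Then $u_{n,s}$ is a nonincreasing sequence of dyadic rationals with $\mu(G_n)\le u_{n,s}\to\mu(G_n)$, so $\alpha_{n,s}$ is a nondecreasing sequence of dyadic rationals with $0\le\alpha_{n,s}\to\alpha_n:=2^{-n}-\mu(G_n)$. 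The point of $\alpha_{n,s}$ is that it is a \emph{certified padding budget}: if at any stage I adjoin to $G_{n,s}$, disjointly, a clopen set of measure $\le\alpha_{n,s}$, the union still has measure $\le\mu(G_n)+\alpha_n=2^{-n}$.

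For $n=0$ I take $O_0=\cantor$. For $n\ge 1$ I build, stage by stage, a clopen ``padding'' $P_{n,s}$ disjoint from $G_{n,s}$ with $\mu(P_{n,s})\le\alpha_{n,s}$, and set $O_{n,s}=G_{n,s}\cup P_{n,s}$, $O_n=\bigcup_s O_{n,s}$. During stage $s$, all clopen sets are kept presented as finite unions of basic clopen sets $[\tau]$ with $|\tau|=s$. At stage $s$ (with $P_{n,-1}=\emptyset$): first delete the part of the old padding that $G_n$ has meanwhile entered, $P^{-}:=P_{n,s-1}\setminus G_{n,s}$; then, while the current padding has measure $\le\alpha_{n,s}-2^{-s}$, adjoin one further basic clopen $[\tau]$ with $|\tau|=s$ disjoint from $G_{n,s}$ and from the current padding. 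Such a $[\tau]$ exists whenever the loop runs, since then the union of $G_{n,s}$ with the current padding has measure $\le\mu(G_n)+\alpha_{n,s}\le 2^{-n}<1$, so there is a length-$s$ basic clopen set disjoint from it; the loop halts after at most $2^{s}$ steps, and the resulting $P_{n,s}$ satisfies $\Vert\mu(P_{n,s})-\alpha_{n,s}\Vert<2^{-s}$ together with $\mu(P_{n,s})\le\alpha_{n,s}$. Since $O_{n,s}=G_{n,s}\cup P_{n,s}\supseteq G_{n,s}\cup P^{-}=G_{n,s}\cup P_{n,s-1}\supseteq O_{n,s-1}$, the $O_{n,s}$ increase, so $O_n$ is r.e.\ open with $G_n\subseteq O_n$; and the entire construction is carried out uniformly in $n$ and $s$.

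It then remains to compute $\mu(O_n)$ for $n\ge 1$. From $P_{n,s}\cap G_{n,s}=\emptyset$ and $\mu(P_{n,s})\le\alpha_{n,s}$ one gets $\mu(O_{n,s})=\mu(G_{n,s})+\mu(P_{n,s})\le\mu(G_n)+\alpha_n=2^{-n}$, so $\mu(O_n)\le 2^{-n}$; and since $\mu(P_{n,s})\to\alpha_n$ (because $\Vert\mu(P_{n,s})-\alpha_{n,s}\Vert<2^{-s}$ and $\alpha_{n,s}\to\alpha_n$) while $\mu(G_{n,s})\to\mu(G_n)$, continuity of measure from below yields $\mu(O_n)=\lim_s\mu(O_{n,s})=\mu(G_n)+\alpha_n=2^{-n}$. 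I expect the one genuine difficulty to be exactly what forces the stage-by-stage rebuilding of $P_{n,s}$: since $G_n$ is uncontrolled, a basic clopen set disjoint from the current approximation $G_{n,s}$ may be overrun by $G_n$ at a later stage, so the padding cannot be chosen once and for all. Deleting the overrun part and refilling up to the current budget repairs this, but one must then check three competing demands at once --- that $(O_{n,s})_s$ stays increasing (true because $O_{n,s}\supseteq G_{n,s}\cup P_{n,s-1}$), that $\mu(O_{n,s})$ never exceeds $2^{-n}$ (true because $\alpha_{n,s}$ never overestimates $2^{-n}-\mu(G_n)$), and that $\mu(O_{n,s})\to 2^{-n}$ (true because $\alpha_{n,s}$ increases to $\alpha_n$ while each refilling reaches within $2^{-s}$ of the current budget).
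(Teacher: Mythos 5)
Your construction is correct: the budget $\alpha_{n,s}=\max(0,\,2^{-n}-u_{n,s})$ never overestimates $2^{-n}-\mu(G_n)$ because $u_{n,s}\geq\mu(G_n)$, so $\mu(O_{n,s})\leq 2^{-n}$ at every stage, while the refill-to-within-$2^{-s}$ rule and the monotone convergence $\alpha_{n,s}\to 2^{-n}-\mu(G_n)$ force $\mu(O_n)=2^{-n}$ exactly; the deletion-and-refill step correctly handles cylinders later overrun by $G_n$, and $O_{n,s}\supseteq G_{n,s}\cup P_{n,s-1}$ keeps the approximations increasing. The paper quotes this result from Schnorr without giving a proof, and your padding argument is essentially the standard one, so there is nothing to reconcile.
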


\noindent Hence ML tests $(G_n)_{n\in\n}$ with  $\mu(O_n)=2^{-n}$,  for  $n\in\n$, suffice to define Schnorr randomness.

Another try consists in noting (it is implicit in several classical proofs) that if $(G_n)_{n\in\n}$ is a Schnorr test -      associated with $X\subseteq\n\times\words$ and a machine $M$ - then using the approximating function $F$ above,
 one can check the existence  of a computable function $f:\n\times\n\to\n$ such that 

\espa\point for any $n,i\in\n, \ \mu([X_n])-\mu([X^M_{n,f(n,i)}])\leq 2^{-i}$.

We could thus require $f$ to be primitive recursive and consider the following definition:\\

\begin{defi} ~\label{def-ML-test}\hfill
\begin{itemize}[label=$-$]
\item Let \ \C\ \ be one of our time-complexity classes. An ML test $(G_n)_{n\in\n}$ is called an ML-\,\C\,-S test if there exist a recursively enumerable set $X\subseteq\n\times\words$ associated with a machine $M$ such that \ $X=\text{dom}(M)$ and a function \ $f:\n\to\n$ \ in \ \C \,, called the controlling function, such that for any $m,i\in\n$,
\begin{itemize}[label=$\circ$]
\item $G_m=[X_m]$,
\item $\mu([X_m])-\mu([X^M_{m,f(m+i)})]) \ \leq \ 2^{-i}$ \ 
\end{itemize}
\item For \ \C\ =\ \pspace, we require $f:\n\to\n$ to be in \,\pspace\, and to satisfy for any $m,i\in\n$,
 \ \,$\mu([X_m])-\mu([X^{\text{M,space}}_{m,f(m+i)}]) \ \leq \ 2^{-i}$
 \end{itemize}
\end{defi}

\noindent(The ``S" in ML-\,\C\,-S-test stands for ``Schnorr")

Concerning an extension of the Downey-Griffiths characterization,  we could restrict to prefix-free machines $M$ such that \ $\Omega_M$ \ is a primitive recursive real. But again, this does not produce a new notion since by ~\cite{dogri}, to characterize  Schnorr randomness, one can restrict to machines $N$ with \ $\Omega_N=1$. Hence we proceed as in the previous definition.

\begin{defi}\label{defi-c-measure}
Given a prefix-free machine $M$ and $t\in\n$, we set \\ \centerline{$\Omega_{M_t}=\mu([\text{dom}(M_t)])$ \ and \ $\Omega_{M^{\text{space}}_t}=\mu([\text{dom}(M^{\text{space}}_t)]))$.}
\begin{itemize}[label=$-$]
\item For \ \C\ \ one of our time-complexity classes, a prefix-free machine $M$ is termed ``measure \,\C\ computable" if there is a function $g:\n\to\n$ \,in \ \C, (also) called the controlling function, such that for any $i\in\n$, \ $\Omega_M - \Omega_{M_{g(i)}}\leq 2^{-i}$.
\item If \ \C\ =\ \pspace, we require the existence of $g$ in \,\pspace\,  such that  $\Omega_M - \Omega_{M^{\text{space}}_{g(i)}}\leq 2^{-i}$.
\end{itemize}
\end{defi}

\begin{rem}\label{controlling}\hfill
\begin{itemize}[label=$-$]
\item Given \,\C\, a time complexity-class, since integers are under unary representation, one gets the same notion of test or of measure \C\,-computability by requiring the controlling function to be in \,\C\, or in \,$F_C$.
\item Similarly for \,\pspace\,, one can indifferently require the controlling function to be a polynomial function (with coefficients in $\,\n$), a function in \,\pol\ \,or a function in \,\pspace.
\end{itemize}
\end{rem}

\noindent As mentioned earlier, part(ii) of the following definition is redundant in the recursive case:

\begin{defi}\label{def-true-order}\hfill
\begin{enumerate}[label=\({\roman*}]
\item Given an unbounded function \,$f:\n\to\n$, one defines the ``inverse" of $f$ as follows: 
 \esp for $n\in\n$,
\ $\text{Inv}_f(n)=\,\text{least $k$ s.t. }f(k)\geq n$.
\item Let \,\C\ \,be one of our complexity classes, an order $h$ 
is a true \,\C\,-order if both $h$ and $\text{Inv}_h$ belong to \,\C.
\end{enumerate}
\end{defi}

\noindent In the subcomputable framework, to define martingale related randomness, we shall restrict to $\q _2$-valued martingales (this is not absolutely necessary, one can consider $\reel$-valued martingales which are\,\ \C\,-\,approximable, as does Lutz~\cite{lu1} for \ \C\,=\,\pol\,).

We now state the respective definitions of \,\C\,-Schnorr randomness. \newpage

\begin{defi}\label{def-csrandom}
Let \ \C\ \,be one of our complexity classes and let\, $\xi\in\cantor$.
\begin{enumerate}[label=\({\alph*}]
\item $\xi$ \,is ML-\,\C\,-S random \ssi $\xi$ passes all ML-\,\C\,-S tests.
\item $\xi$ \,is Kolmogorov-\,\C\,-S random \ssi  for any measure \,\C\ computable machine $M$, there is $b\in\n$ such that for any $n\in\n,\ K_M(\xi\restriction n)> n-b$.
\item \ $\xi$ \,is martingale-\,\C\,-S random \ssi $\left(\begin{array}{l}
\text{for any martingale }d:\words\to\q_2\ \text{ in \,\C,}\\ \text{and any true \,\C\,-order }h, \ d(\xi\restriction i)< 2^{h(i)} \ \text{a.e.}
\end{array}\right.$
\end{enumerate}
\end{defi}

\begin{rem}
If \,\C\ \,is \,\toexp\ \,or\, \primrec, \,then the condition ``$d(\xi\restriction i)< 2^{h(i)} \ \text{a.e.}$" gives the same notion of randomness as the usual one ``$d(\xi\restriction i)< h(i) \ \text{a.e.}$".
\end{rem}

\noindent We shall first study the relation between Martin-L\"of and Kolmogorov complexity notions of randomness, and later the link between Martin-L\"of and martingale notions of randomness.

\subsection{The relation between the Martin-L\"of and the Kolmogorov complexity notions.}

In our definition of ML-\,\C\,-\,S tests, we did not require the generating sets to be prefix-free. To obtain this in a uniform way, we shall resort to the classical argument showing that a recursively enumerable generating set can be replaced by a recursive prefix-free one (see~\cite[1.8.26]{nies}). We propose here a quadratic time algorithm (or linear space) algorithm yielding the new generating set.

\begin{clm}\label{claim-pf-quadra}
Let $X\subseteq\n\times\words$ and let $M$ be a machine such that \ $\text{dom}(M)=X$. Then one can define a set \,$Y\subseteq\n\times\words$ and a machine $N$ such that:
\begin{enumerate}[label=\({\alph*}]
\item
 \ $Y\,=\,\{(n,x)\in\n\times\words: N(n,x)\downarrow\}$ \,and there is a constant $d\in\n$ such that \\
\centerline{for any $(n,x),\ \ \,N(n,x)\downarrow \ \  \Leftrightarrow \ N \text{ on input }(n,x) \text{ halts in at most } d(n+|x|)^2 \text{ steps}$.}
\item \ Setting for $n,t\in\n,\ \left(\begin{array}{l}
Y_n\,=\,\big\{x\in\words: N(n,x)\downarrow\big\} \ \text{and}\\
  Y_n(t)\,=\,\big\{x\in\words : |x|=t \text{ and } x\in Y_n\big\},
\end{array}\right.$\\
  one has \ $[X^M_{n,t}]\,=\,[Y_n(t)]$.
\item \ $Y_n\,=\,\bigcup_{s\in\n} Y_n(s)$ \ is prefix free.
\item \ For any $n,s\in\n$, \,$[X_n]=[Y_n]$ \,and\, $[X^M_{n,s}] \subseteq [Y^N_{n,d(n+s)^2}]$.
\end{enumerate}
\end{clm}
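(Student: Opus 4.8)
The plan is to run the standard construction that converts a c.e.\ generating set into a recursive prefix-free one, but to do so with an explicit time (resp. space) bound. First I would recall the idea behind~\cite[1.8.26]{nies}: given $X_n=\{x : M(n,x)\downarrow\}$, we want to keep, for each $n$, a maximal antichain of strings that still generates $[X_n]$. The difficulty in getting a fast algorithm is that membership ``$x\in X_n$'' is only semidecidable, so we cannot test it directly; instead we use the \emph{time-bounded} approximations $X^M_{n,t}$ from Definition~\ref{def-basic}. The key device is to let the length $t$ of a candidate string double as a time budget: I would put $x$ into $Y_n$ exactly when $|x|=t$, $x\in X^M_{n,t}$ (that is, $M$ on input $(n,x)$ halts within $t$ steps), and no proper prefix $y\cur x$ has already been enumerated into $Y_n$ by the same rule at its own length $|y|<t$. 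Checking ``$M(n,x)$ halts within $|x|$ steps'' costs $O((n+|x|)^2)$ on a suitable machine (simulating $M$ for $t=|x|$ steps with the clock in unary), and checking the prefix condition requires running the same test on each of the $|x|$ proper prefixes, again within a quadratic bound overall; bookkeeping the constant $d$ is then routine. This machine $N$ halts on $(n,x)$ iff it accepts within $d(n+|x|)^2$ steps, which is part~(a).

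For part~(b), the point is that $[X^M_{n,t}]=[Y_n(t)]$, where $Y_n(t)$ consists of the length-$t$ strings kept by $N$. The inclusion $[Y_n(t)]\subseteq[X^M_{n,t}]$ is immediate since every $x\in Y_n(t)$ satisfies $x\in X^M_{n,|x|}=X^M_{n,t}$. For the reverse, take $x\in X^M_{n,t}$, so $|x|$-many candidates of length $t$ extending... more precisely, consider the string $x$ itself of length $\geq$ something — here I would argue that for any $\beta\in[X^M_{n,t}]$, letting $x=\beta\restriction t$, either $x\in Y_n(t)$ or some proper prefix $y\cur x$ with $|y|<t$ was already put into $Y_n$; but such a $y$ satisfies $y\in X^M_{n,|y|}\subseteq X^M_{n,t}$ (time-bounded halting sets grow with $t$), so $\beta\in[y]\subseteq[X^M_{n,t}]$ in either case, and conversely $[x]\subseteq[X^M_{n,t}]$ gives $\beta\in[Y_n(t)]$. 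This uses monotonicity of $t\mapsto X^M_{n,t}$, which holds because ``halts within $t$ steps'' is monotone in $t$. Part~(c) is built into the construction: a string is admitted to $Y_n$ only if none of its proper prefixes was admitted, so $Y_n=\bigcup_s Y_n(s)$ is an antichain, hence prefix-free.

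Finally part~(d). The identity $[X_n]=[Y_n]$ follows by taking the union over $t$ of the part~(b) identities: $[X_n]=\bigcup_t[X^M_{n,t}]=\bigcup_t[Y_n(t)]=[Y_n]$, the first equality because $M(n,x)\downarrow$ iff $M(n,x)$ halts within some $t$. For the inclusion $[X^M_{n,s}]\subseteq[Y^N_{n,d(n+s)^2}]$: by part~(b), $[X^M_{n,s}]=[Y_n(s)]$, and every $x\in Y_n(s)$ has $|x|=s$ and is enumerated into $Y_n$ by $N$ within $d(n+s)^2$ steps (since at length $s$ the relevant clock budget is exactly $s$ and the verification cost is $d(n+|x|)^2=d(n+s)^2$); hence $x\in Y^N_{n,d(n+s)^2}$, giving $Y_n(s)\subseteq Y^N_{n,d(n+s)^2}$ and the desired inclusion of open sets. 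The space version is identical with $M^{\text{space}}_t$ in place of $M_t$ and a linear space bound in place of the quadratic time bound, since simulating $M$ within $s$ cells and scanning the $\leq s$ prefixes uses $O(n+s)$ cells.

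The main obstacle I expect is part~(b): getting the two open sets to coincide \emph{exactly} (not just up to measure) forces the construction to use $|x|$ itself as the time bound — if one instead used an external clock one would only get $[X^M_{n,t}]\subseteq[Y_n(t')]$ for some larger $t'$, which would be too weak for the later measure-preservation arguments. Tying the clock to the string length is what makes~(b) an equality, and verifying that this choice is still consistent with prefix-freeness and with a clean quadratic bound is the delicate bookkeeping step.
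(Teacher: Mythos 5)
Your overall strategy --- replace the c.e.\ generating sets by recursive prefix-free ones, using the length of a string as a clock so that $[X^M_{n,t}]=[Y_n(t)]$ holds exactly --- is the same as the paper's, but your admission rule for $Y_n$ is wrong, and the error sits precisely at the point you yourself flag as delicate. You admit $x$ into $Y_n$ only when $M$ on input $(n,x)$ \emph{itself} halts within $|x|$ steps, and you test each proper prefix $y$ against \emph{its own} length $|y|$. This misses every $y\in X_n$ whose halting time exceeds $|y|$, which is the typical case: $\text{dom}(M)$ is only c.e., so halting times are not bounded by any computable function of the input length. Concretely, suppose $M(n,0)$ halts in exactly $5$ steps (for the one-bit string $0$) and $M(n,x)$ diverges for every other $x$. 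Then $X_n=\{0\}$ and $[X_n]=[0]$, but under your rule nothing is ever admitted: $0\notin X^M_{n,1}$, and no proper extension $x$ of $0$ has $M(n,x)\downarrow$ at all, so $Y_n=\emptyset$ and both (b) (at $t=5$) and (d) fail. Your argument for the reverse inclusion in (b) papers over this: the dichotomy ``either $x=\beta\restriction t\in Y_n(t)$ or some proper prefix of $x$ was already enumerated'' omits the third case in which some prefix $z\curl x$ lies in $X^M_{n,t}$ while neither $x$ nor any of its prefixes satisfies your admission rule.

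The paper's machine $N$ applies the budget $|x|$ to the \emph{prefixes} of $x$ rather than to $x$ itself: $N(n,x)$ halts iff some $y\curl x$ lies in $X^M_{n,|x|}$ while no $y\curl x$ lies in $X^M_{n,|x|-1}$. A string $y$ entering $X_n$ at time $t_0>|y|$ is then ``paid for'' by all of its length-$t_0$ extensions (provided no prefix entered earlier), which is exactly what makes $[X^M_{n,t}]=[Y_n(t)]$ an identity of open sets; prefix-freeness still holds because if $y\cur x$ and $y\in Y_n$, then some prefix of $y$ lies in $X^M_{n,|y|}\subseteq X^M_{n,|x|-1}$ and so $x$ is rejected. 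With that rule in place, the remaining steps of your write-up --- the quadratic time bound from simulating $M$ for $|x|$ steps on each of the $|x|+1$ prefixes, the union-over-$t$ derivation of $[X_n]=[Y_n]$, the inclusion $Y_n(s)\subseteq Y^N_{n,d(n+s)^2}$, and the linear-space variant --- do go through and match the paper.
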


%
\begin{proof} (Sketch)\hfill
\begin{enumerate}[label=\({\alph*}]
\item
Let us consider the following algorithm for the machine $N$: on input $(n,x)$
\begin{enumerate}[label=\({\arabic*}]
\item
 \ if there is \,$y\curl x$\, such that $y\in X^M_{n,|x|-1}$, then $N$ rejects the input (may loop indefinitely),
\item \ otherwise $\begin{array}[t]{l}\text{(2.1) \ if there is $y\curl x$ \ such that\  \,$y\in X^M_{n,|x|}$, then $N$ halts.}\\
\text{(2.2) \ otherwise, $N$ rejects $(n,x)$}.
\end{array}$
\end{enumerate}
If $N$ halts on an input, it does so in quadratic time. Let us set \ $Y\,=\,\text{dom}(N)$.
\item  One can check by induction on $s\in\n$, \,$[X^M_{n,s}]\,\subseteq\,[Y^N_n(s)]$.
\end{enumerate}
The rest follows.\end{proof}


\noindent Concerning the class \,\pspace\,, we note that if we replace \,$X^M_{n,t}$ in the above algorithm by \,$X^{M,\text{space}}_{n,t}$, the algorithm requires linear space. We thus get:

\begin{clm}
Under the same hypotheses as in the previous claim, we obtain \,$Y\subseteq\n\times\words$ \,and a constant $d\in\n$ such that for any $n,s\in\n$ \,and\, 
\begin{itemize}
\item \,$[X_n]=[Y_n]$, \, $Y_n$ prefix-free and
\item \,$[X_{n,s}^{M,\text{space}}] \subseteq [Y_{n,d(n+s)}^{N,\text{space}}]$. 
\end{itemize}
\end{clm}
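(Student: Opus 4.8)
The plan is to adapt the algorithm and analysis from the proof of Claim~\ref{claim-pf-quadra} by substituting the space-bounded simulation for the time-bounded one throughout. First I would take the same machine description as in part~(a) of the previous claim, but reading ``$y\in X^{M,\text{space}}_{n,|x|-1}$'' in step~(1) and ``$y\in X^{M,\text{space}}_{n,|x|}$'' in step~(2.1). The key observation is that testing, for a given input $(n,x)$, whether some prefix $y\curl x$ lies in $X^{M,\text{space}}_{n,t}$ for $t=|x|-1$ or $t=|x|$ only requires running $M$ on each of the $|x|+1$ prefixes of $x$ under a space budget of $t\leq|x|$ cells (plus the space to hold the input $(n,y)$, which is $O(n+|x|)$), reusing the same work tape for successive prefixes; hence the whole computation of $N$ on $(n,x)$ uses space linear in $n+|x|$. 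This gives $N$ and $Y=\dom(N)$ with $Y_n$ prefix-free for each $n$, exactly as before.

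Next I would verify the two displayed conclusions. The identity $[X_n]=[Y_n]$ and prefix-freeness of $Y_n$ follow verbatim from the argument for parts~(c)--(d) of Claim~\ref{claim-pf-quadra}: an input $(n,x)$ is accepted iff no proper prefix of $x$ has already entered the generating set at an earlier space-stage while $x$ itself does, and every string of $X_n$ has some prefix (possibly itself) in $Y_n$. For the inclusion $[X^{M,\text{space}}_{n,s}]\subseteq[Y^{N,\text{space}}_{n,d(n+s)}]$, I would first show by induction on $s$ the analogue of the inductive step quoted in the sketch, namely $[X^{M,\text{space}}_{n,s}]\subseteq[Y_n(s)]$ where $Y_n(s)=\{x\in Y_n:|x|=s\}$; then observe that any $x\in Y_n(s)$ is accepted by $N$ within space $d(n+s)$ for a suitable absolute constant $d$, since on an input of size $n+s$ the machine $N$ uses only $c(n+s)$ cells for some fixed $c$. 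Choosing $d$ to absorb $c$ and the overhead of coding prefixes yields the claim.

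The main obstacle is bookkeeping the space bound cleanly: one must be careful that the constant $d$ is genuinely independent of $n$, $s$, and the machine $M$ (it depends only on the fixed simulation overhead), and that the side condition $t\geq|x|,|y|$ built into the definition of $M^{\text{space}}_t$ is respected when we invoke $X^{M,\text{space}}_{n,|x|-1}$ on prefixes of length up to $|x|$ — here one should read the bound as $\max(|x|-1,|y|,\dots)$ or simply note that for the prefixes $y$ actually tested we have $|y|\leq|x|$, so the effective space stage is $\Theta(|x|)$ and nothing degenerates. Apart from this, the proof is a routine transcription, and I would present it as such, referring back to Claim~\ref{claim-pf-quadra} for the combinatorial content and only spelling out the space accounting.
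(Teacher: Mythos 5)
Your proposal is correct and follows essentially the same route as the paper, which disposes of this claim in one line by observing that replacing $X^M_{n,t}$ with $X^{M,\text{space}}_{n,t}$ in the algorithm of the previous claim makes it run in linear space (whence the bound $d(n+s)$ in place of $d(n+s)^2$). The only detail worth making explicit in your space accounting is that deciding membership in $X^{M,\text{space}}_{n,t}$ requires a binary time-counter of $O(t)$ bits to detect divergence of $M$ within the space bound --- a device the paper itself only spells out later, and which does not affect linearity.
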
\newpage

\begin{rem} \label{rk strict increase}\hfill
\begin{enumerate}
\item As a consequence of these two claims, when considering our  complexity class \,\C, \,in the definition of ML-\,\C\,-S tests  $(G_n)_{n\in\n}$ (def~\ref{def-ML-test}), we shall assume the generating sets\linebreak $X_n\subseteq \words$, \,for $n\in\n$, to be prefix-free.
\item Let  us also note that if the controlling function  $f:\n\to\n$ is in \,\C, then the function \,$g:\n\to\n$ \ defined recursively by \ $\left\{\begin{array}{l}
\text{-}\ g(0)=f(0),\\
\text{-}\ g(n+1)\,=\,\max \{g(n)+1,f(n+1)\}
\end{array}\right.$
\\ is also in \,\C\ \ (for any $n\in\n,\ g(n)\leq\max\{f(m):m\leq n\}+n$) and since $f\leq g$, it also satisfies for any $n,i\in\n$,  $\mu([X_n])-\mu([X^M_{n,g(n+i)}])\leq 2^{-i}$. 
Hence we shall assume the controlling function to be strictly increasing.
\item This also applies to the controlling function in the definition of the  measure \,\C\ computable machine.
\end{enumerate}
\end{rem}

\noindent Following the notation in ~\cite[3.2.6]{nies}, we recall:

\begin{defi}\label{def-rmb}
Given a prefix-free Turing machine $M$ and $b\in\n$, one considers the subset $R^M_b$ of $\cantor$ defined as: 
\begin{align*} 
R^M_b =  [\{x\in\words: K_M(x)\leq |x|-b\}] \ 
      =  \{\alpha\in\cantor:\exists k\ K_M(\alpha\restriction k)\leq k-b\}.
\end{align*}
\end{defi}

\noindent We shall show:

\begin{prop}\label{prop-Ml-meas}\hfill
\begin{enumerate}[label=\({\alph*}]
\item Let \,\C\, be one of our time-complexity classes and let $(G_m)_{m\in\n}$ be an ML-\,\C\,-S test. Then there exists a measure \,\expo(\C)\,-computable machine $M$ such that

\centerline{\[\bigcap_{m\in\n}G_m\ \subseteq\ \bigcap_{b\in\n}R^M_b\]}
\item If $(G_m)_{m\in\n}$ is an ML-\,\pspace\,-S test, then there is a measure \,\pspace\ computable machine $M$ such that the above inclusion holds.
\end{enumerate}
\end{prop}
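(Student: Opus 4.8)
The plan is to adapt the classical construction that turns a Schnorr test into a computable measure machine (as in the Downey--Griffiths direction of Theorem~\ref{dogri}) and to keep careful track of the running time so that the controlling function of the resulting machine lands in \expc. First I would invoke Remark~\ref{rk strict increase}: we may assume the generating sets $X_m$ are prefix-free and the controlling function $f:\n\to\n$ of the test is strictly increasing and lies in \C. Fix $m$. Using the prefix-free set $X_m$ and the observation $\mu([X_m])-\mu([X^M_{m,f(m+i)}])\le 2^{-i}$, one knows that the finite prefix-free set $X^M_{m,f(m+i)}$ captures the open set $[X_m]$ up to measure $2^{-i}$, and this set is computable in time bounded by $f(m+i)$ (up to the quadratic blow-up of Claim~\ref{claim-pf-quadra}, giving a bound of the form $d(m+f(m+i))^2$, still in \C).

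The machine $M$ to build must be prefix-free with $\Omega_M$ a ``\expc-computable'' real, and it must compress initial segments of every $\xi\in\bigcap_m G_m$. The standard device is a Kraft--Chaitin (KC) machine: enumerate requests $\langle |x|-b_m,\ x\rangle$ for suitable strings $x$ drawn from $X_m$, where the ``savings'' parameter $b_m$ grows with $m$; since $\mu([X_m])\le 2^{-m}$, the total weight $\sum_m 2^{-b_m}\mu([X_m])$ of all requests is finite (choosing, say, $b_m=m$ or a small linear function of $m$), so the KC theorem yields a prefix-free machine $M$ honoring them. If $\xi\in G_m=[X_m]$ then some $x\curl\xi$ lies in $X_m$, hence $K_M(\xi\restriction|x|)\le|x|-b_m$, so $\xi\in R^M_{b_m}$; letting $m\to\infty$ gives $\xi\in\bigcap_b R^M_b$, which is the desired inclusion. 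The point of working with the \emph{time-bounded} approximations $X^M_{m,f(m+i)}$ rather than with $X_m$ itself is that it lets us \emph{compute} good lower approximations to $\Omega_M$: the requests generated from strings appearing by stage $f(m+i)$ account for all but $\le 2^{-i}$ (times a convergent factor) of the eventual measure, so $\Omega_M-\Omega_{M_{g(i)}}\le 2^{-i}$ for a controlling function $g$ obtained by composing $f$, the bijective pairing $\langle\ ,\ \rangle$, the KC-machine simulation overhead, and the quadratic cost from Claim~\ref{claim-pf-quadra}. Each of these is either polynomial or already in \C, and the one genuinely amplifying step --- the KC machine may need to wait exponentially long in the bit-length of the requested code lengths before a code becomes available --- is exactly what forces the ``$2^{f}$'' and explains why the controlling function is in \expc\ rather than in \C.

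The main obstacle, then, is the bookkeeping that shows the controlling function $g$ of $M$ is in \expc: one must bound, as a function of $i$, the number of computation steps of $M$ needed so that $\Omega_{M_{g(i)}}$ is within $2^{-i}$ of $\Omega_M$, and this requires (a) choosing the savings parameters $b_m$ and the correspondence between the test's index $m$, its internal approximation parameter $i$, and the KC request length so that the tail sums are controlled by a single geometric series, and (b) checking that simulating the KC machine for enough steps to realize all requests of length $\le N$ costs at most $2^{O(N)}$ steps, so that composing with the \C-bound $f$ on $N$ keeps us in \expc. For part (b) of the proposition the argument is identical except that Claim~B replaces Claim~\ref{claim-pf-quadra}, the quadratic overhead becomes linear, all the time bounds become space bounds, and the space needed to run the KC machine to realize requests of coded length $\le N$ is $O(N)$ (not exponential), so no blow-up occurs and the controlling function stays in \pspace; I would write this case as a short remark pointing out the two substitutions rather than repeating the construction.
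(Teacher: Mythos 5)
Your overall route is the same as the paper's: reduce to prefix-free generators with a strictly increasing controlling function, form the bounded request set $\{(|x|-m+1,x):x\in X_m\}$, apply Kraft--Chaitin, and then argue that the resulting machine is measure \expc\,-computable because its codewords of length $r$ all become available within roughly $2^{O(r)}$ times a \C\,-bounded number of steps. One minor slip first: your weight computation has the exponent on the savings parameter reversed. A request achieving compression by $b_m$ bits is $(|x|-b_m,x)$ and has weight $2^{b_m}2^{-|x|}$, not $2^{-b_m}2^{-|x|}$, so with $b_m=m$ and only $\mu([X_m])\le 2^{-m}$ the total weight $\sum_m 2^{b_m}\mu([X_m])$ diverges. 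One must first pass to the subtest $(G_{2m})_{m\in\n}$ (with controlling function $m\mapsto f(2m)$) so that $\mu([X_m])\le 2^{-2m}$; the paper does this at the outset.

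The substantive gap is in your step (b). You correctly isolate the main obstacle --- Definition~\ref{defi-c-measure} requires $\Omega_M-\Omega_{M_{g(i)}}\le 2^{-i}$, i.e.\ the approximation must be realized by running $M$ itself, so a black-box appeal to Kraft--Chaitin only yields an \expc\,-approximable real $\Omega_M$, which is not enough --- but your claim that ``simulating the KC machine for enough steps to realize all requests of length $\le N$ costs at most $2^{O(N)}$ steps'' is asserted, not proved, and it fails for the KC machine built from an arbitrary effective enumeration of the request set: a request with a short codeword may be enumerated only after many requests with long codewords, and then $M$ on that short input must replay the entire enumeration up to that late stage, with no bound in terms of the codeword length alone. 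This is exactly where the paper invests its effort: it introduces the well-ordering $\squ$ on $X$ (essentially by increasing $|x|$, hence by increasing codeword length $r_{m,x}=|x|-m+1$, since $2m\le|x|$), shows that the finite sets $X(r)=\{(m,x): x\in X^N_{m,f(3r+1)},\ |x|\le 2r,\ m\le r\}$ are initial segments containing every request whose codeword has length $\le r$ (using that any $x\in X_m$ must already lie in $X^N_{m,f(m+|x|+1)}$, else the measure deficit would exceed $2^{-(|x|+1)}$), re-runs the Kraft--Chaitin induction along this well-ordering with a compactly coded state $\rho(m,x)$, and only then reads off the time bound $l(r)2^{3r}$ on successful computations. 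Your sketch becomes a proof once this reordering-and-stratification step is supplied; as written, the decisive bound is assumed. The same caveat applies to the \pspace\ case, where one must additionally retain only the current KC state to stay within polynomial space.
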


\noindent We shall push classical arguments as far as possible (see~\cite[3.5.18]{nies}). But a blind application of the Kraft-Chaitin theorem will not suffice  to produce the adequate  measure machine $M$ (we would only get $\Omega_M$ rightly approximable and we need more control than that). Hence after this exploration, we shall define a new goal and a strategy to reach it.

\begin{proof}\hfill
\begin{enumerate}[label=\({\alph*}]
\item \ Let \,\C\ \,be one of our time-complexity classes and let $(G_n)_{n\in\n}$ be an ML-\,\C\,-S test.\\ (By replacing $(G_n)_{n\in\n}$ \,by\, $(G_{2n})_{n\in\n}$, the controlling function $f$ by $n\mapsto f(2n)$, and using the inclusion \ $\bigcap_{n\in\n}G_n\ \subseteq\ \bigcap_{n\in\n}G_{2n}$) \,we  can assume  there exist \ $X\subseteq\n\times\words$ \ associated with a machine $N$, and a strictly increasing function $f$ in \,\C\, such that for any $n,i\in\n$,
\begin{enumerate}[label=\({\roman*}]\label{X}
\item  $X_n=\{x\in\words :(n,x)\in X\}=\{x\in\words : N(n,x)\downarrow\}\text{ \,is prefix-free}$
\item  $G_n =[X_n] \ \text{and} \ \mu(G_n)\leq 2^{-2n}$
\item $\mu([X_n])-\mu([X^N_{n,f(n+i)}])\leq 2^{-i}.$
\end{enumerate}


\noindent Let us consider the bounded request set 
\begin{equation}\label{request}
L\ =\ \{(|x|-m+1,x)\,:\,x\in X_m\}.
\end{equation}
Its weight \ $\gamma_L\,=\,\sum_{(m,x)\in X} 2^{-(|x|-m+1)}$ \ is \ $\leq 1$ \ by (ii).
\\ Let us apply the Kraft-Chaitin theorem (see~\cite[2.2.17]{nies},  ~\cite[3.6.1]{dohi}) to \,$L$\, (starting with an effective enumeration of $X$ and hence of $L$) to see how far it can bring us:
there is  machine $M$ such that the following properties hold:
\begin{equation}\label{prop-KC}\left\{\begin{array}{l}
\text{-} \  M \text{ \,is prefix-free},\\
\text{-} \ \Omega_M = \gamma_L, \\
\text{-} \ \text{for every }(m,x)\in X, \text{ there is a unique  } w(m,x)\in\words \text{ such that }\\ \makebox[.4cm]{} M(w(m,x))=x \text{ \,and\, } 
 |w(m,x)|=|x|-m+1,\\
\text{-} \ \text{dom}(M)\,=\,\{w(m,x):x\in X_m\}.
\end{array}\right.
\end{equation} 
\\Let us check that $\Omega_M$ can be approximated by a sequence \,$s:\n\to\q_2$ \,in\, \expo(\C), that is, for any $r\in\n$, \,one has \ $\Vert \Omega_M - s(r)\Vert \leq 2^{-r}$. This will not suffice, but it may give us some insight on how to replace \,$s(r)$ \,by  \,$\Omega_{M_{h(r)}}$, for $h$ in \,\expo(\C).
\\One has \ \ $\gamma_L\,=\,\Omega_M\,=\sum_{(m,x)\in X}2^{-(|x|-m+1)}\,=\,\sum_{m}\sum_{x\in X_{m}}2^{-(|x|-m+1)}$.
\\For $r\in\n$, let us set \ $\gamma_r\,=\,\sum_{m\leq r}\sum_{x\in X^N_{m,f(3r+1)}}2^{-(|x|-m+1)}$. 
 Let $r$ be fixed. 
\begin{itemize}
\item If $m\leq r$, then $f(m+2r+1)\leq f(3r+1)$. Hence for $m\leq r$

\espa $\mu([X_m\setminus X^N_{m,f(3r+1)}])\ =\ \mu([X_m])-\mu([X^N_{m,f(3r+1)}])\ \leq\ 2^{-(2r+1)}$.

The first equality holds because $X_m$ is prefix-free.
\item We deduce:
\begin{align}\label{bound}
\begin{split}
\gamma_L - \gamma_r &\ =\ \sum_m\sum_{x\in X_m}2^{-(|x|-m+1)}\ -\ \sum_{m\leq r}\ \sum_{x\in X_{m,f(3r+1)}}2^{-(|x|-m+1)}\\
 &\ \leq\ \sum_{m\leq r}\ \ \sum_{x\in X_m\setminus X_{m,f(3r+1)}}2^{-(|x|-m+1)}\ +\ \sum_{m> r}\ \sum_{x\in X_m}2^{-(|x|-m+1)}\\
 &\ \leq\ \sum_{m\leq r}2^{m-1}2^{-(2r+1)}\ +\ \sum_{m> r}2^{m-1}\mu(G_m)\\
 &\ \leq\ 2^{-r}.
 \end{split}
 \end{align}
 \end{itemize}
 
\noindent We note that the function \ $\varphi:\begin{array}[t]{l}
 \n\to\q_2 \vspace{-1mm}\\
 \ r\mapsto\gamma_r
 \end{array}$
 \ belongs to \,\expo(\C) \,(for all $z$ such that \linebreak$|z|\leq f(3r+1)$, we have to check whether \ $z\in X^N_{m,f(3r+1)}$).
\\ Also by the properties of~\eqref{prop-KC}, for $r\in\n$, one has:
 \begin{align}\label{gamma-r}
 \gamma_r\ =\ \sum_{m\leq r}\ \sum_{x\in X_{m,f(3r+1)}}2^{-|w(m,x)|}.
\end{align}
\\This suggests the following claim (inside the proof of Proposition~\ref{prop-Ml-meas}, we use a local numbering with letters of definitions and claims):

\setcounter{cl}{0}

\begin{cl}\label{goal-fact}
Let $M$ be a machine satisfying the properties of ~\eqref{prop-KC} and let $h:\n\to\n$ be a function  such that for any $r\in\n$, \ 
$\big\{w(m,x):m\leq r,\ x\in X^N_{m,f(3r+1)}\big\}\ \subseteq\ \text{dom}(M_{h(r)})$.\\
Then for any $r\in\n$, \ $\Omega_M - \Omega_{M_{h(r)}}\ \leq \ 2^{-r}$.
\end{cl}

\begin{proof} For $r\in\n$, the inclusion \ $\big\{w(m,x):m\leq r,\ x\in X^N_{m,f(3r+1)}\big\}\,\subseteq\, \text{dom}(M_{h(r)})$ and (\!~\ref{gamma-r}) \ imply \\
 \centerline{$\gamma_r\,\leq\,\mu([\text{dom}(M_{h(r)})])\,\leq\,\Omega_M\,=\,\gamma_L$.}

Hence by (\!~\ref{bound}), \ $\Omega_M -\Omega_{M_{h(r)}}\,\leq\, \gamma_L -\gamma_r\,\leq\, 2^{-r}$.
\end{proof}

\textbf{The strategy to obtain Proposition~\ref{prop-Ml-meas}(a)} (induced by the previous claim): \\
 \textbf{(a.1)} We define a (c.e) well-ordering  \ $\squ _X$ \ on \,$X$.\\
 \textbf{(a.2)} We then develop a proof of Kraft-Chaitin theorem by  \ $\squ _X$-induction, in order to obtain a function \ $w:X\to\words$ \ and a machine $M$ as in \eqref{prop-KC} (sending \,$w(m,x)$ \,to\, $x$).\\
\esp \textbf{(a.3)} A sufficient condition to satisfy the hypotheses of Claim~\ref{goal-fact}, is  that $M$ works in  \,\expo(\C)\, time (on successful computations). To define such an algorithm for $M$, the idea is to stratify $X$ as \,$\bigcup_{r\in\n}X(r)$\, so that $X(r)$ is a finite initial segment of $X$ for $\squ _X$ and such that for any\linebreak  $(m,x)\in X$, if $|w(m,x)|\leq r,$ then  \,$(m,x)\in X(r)$.
Hence for any $w\in\words$ such that $|w|\leq r$, to check whether \,$w=w(m,x)$\, for some $(m,x)\in X$ (and thus to send \,$w$\, to $x$), we shall only have to carry out the $\squ _X$-induction process on $X(r)$.\\
 \textbf{(a.4)} We finally define  an algorithm for $M$ based on the previous remark which satisfies the time bounds.

\textbf{(a.1) The definition of the well-ordering: }

\begin{df}\hfill
\begin{itemize}
\item Let \ $\squ$ \ be the ordering on $\n\times\words$ defined as follows: for $m,m'\in\n,\ x,x'\in\words$

$(m,x)\squ (m',x')\ssi \begin{cases}
x\curlex x'\ \text{ or}\\
x=x' \text{ \ and \ } m\leq m'.
\end{cases}$
\item Let \ $\squ _X$ \ denote the restriction of \ $\squ$ \ on $X$. 
\end{itemize}
\end{df}

\begin{cl}\label{pred-fini}
For any $(m,x)\in X$, the set \,$\{(m',x')\in X:(m',x')\squ_X (m,x)\}$ \ is finite.
 Hence every element of $X$ - except the least one - admits an immediate predecessor for \ $\squ_X$.
\end{cl}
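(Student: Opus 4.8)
The plan is to show that each $(m,x)\in X$ has only finitely many $\squ_X$-predecessors, and then to read off the ``immediate predecessor'' statement from a standard fact about linear orders with finite initial segments. The one point that actually needs to be used — and I would isolate it first — is the size restriction linking the two coordinates of a member of $X$: if $(m,x)\in X$, i.e. $x\in X_m$, then $[x]\subseteq[X_m]=G_m$, so $2^{-|x|}=\mu([x])\le\mu(G_m)\le 2^{-m}$, whence $m\le|x|$ (in the situation of Proposition~\ref{prop-Ml-meas} one even gets $m\le|x|/2$, but $m\le|x|$ suffices). Everything else is bookkeeping.

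Given $(m,x)\in X$, I would then split the set under consideration according to the definition of $\squ$:
\[\{(m',x')\in X:(m',x')\squ_X(m,x)\}\ =\ \{(m',x')\in X:x'\curlex x\}\ \cup\ \{(m',x)\in X:m'\le m\}.\]
The second piece is contained in $\{0,\dots,m\}\times\{x\}$, hence has at most $m+1$ elements. For the first piece, there are only finitely many strings $x'$ with $x'\curlex x$ (each has length $\le|x|$, so fewer than $2^{|x|+1}$ of them), and for each such $x'$ the first coordinates $m'$ with $(m',x')\in X$ satisfy $m'\le|x'|\le|x|$ by the bound just noted, so there are at most $|x|+1$ of them; thus the first piece is finite as well. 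Hence the whole predecessor set is finite.

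For the ``Hence'' clause: $\squ$ is immediately checked to be a linear order on $\n\times\words$ (it is the lexicographic combination of $\curlex$ on the second coordinate with $\le$ on the first — reflexivity, antisymmetry and transitivity are routine case distinctions, totality follows from totality of $\curlex$), so $\squ_X$ linearly orders $X$, and by the previous paragraph every element of $X$ has only finitely many $\squ_X$-predecessors. Consequently the map sending an element of $X$ to the number of its strict $\squ_X$-predecessors is an order isomorphism of $(X,\squ_X)$ onto an initial segment of $(\n,\le)$; thus $X$, if nonempty, has a $\squ_X$-least element (no predecessor), and any other element $(m,x)$ has as immediate predecessor the $\squ_X$-maximum of its finite nonempty set of strict $\squ_X$-predecessors (nothing can lie strictly between them, else it would be a strict predecessor of $(m,x)$ larger than that maximum). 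I do not expect a real obstacle here; the only thing to be spotted is that the measure condition on the $G_m$ is essential — without it one could place $(m,x)$ in $X$ for a fixed $x$ and all $m\in\n$, and the claim would fail — which is exactly why the proof must, and does, use $\mu(G_m)\le 2^{-m}$.
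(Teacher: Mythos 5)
Your proof is correct and follows essentially the same route as the paper: the key step in both is that $(m',x')\squ_X(m,x)$ forces $|x'|\le|x|$ and that the measure bound on $[X_{m'}]$ bounds $m'$ by $|x'|$ (the paper, having rescaled to $\mu(G_n)\le 2^{-2n}$, gets $2m'\le|x'|$, but as you note the weaker bound suffices), whence the predecessor set is finite. Your extra care in spelling out why finite initial segments of a linear order yield immediate predecessors is fine but is exactly the standard fact the paper leaves implicit.
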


\begin{proof} \ 
Let $(m,x),(m',x')\in X$ \,be such that \,$(m',x')\squ (m,x)$.
\\Necessarily \ $|x'|\leq |x|$. Since \,$x'\in X_{m'}$, we have \ $[x']\subseteq [X_{m'}]$. \ Hence \ $2^{-|x'|}\leq\mu([X_{m'}])\leq 2^{-2m'}$.
\\Therefore \ $2m'\leq|x'|\leq|x|$. 
\end{proof}

\noindent We then set:

\begin{df}\hfill
\begin{itemize}
\item For $(m,x)\in\n\times\words$, let \,$r_{m,x}=|x|-m+1$.
\item For $(m,x)\in X$, \,let \,$\text{pred}(m,x)$ \,be  the immediate predecessor of $(m,x)$ for \,$\squ_X$; if $(m_0,x_0)$ is the least element of $X$ for \,$\squ_X$, we set \,$\text{pred}(m_0,x_0)\,=\,(\emptyset,-1)$.
\item For a set $Z$, let \,$\+P_{\text{finite}}(Z)$ \,be the collection of finite subsets of $Z$.
\end{itemize}
\end{df}

\textbf{(a.2) The inductive construction (~\cite[2.2.17]{nies} of Kraft-Chaitin Theorem based on \,$\squ_X$:}\\
We define by induction on \,$\squ_X$\, the following functions:

\espa$\left(\begin{array}{ccl}
R\ :&X\ \to&\+P_{\text{finite}}(\words)\\
w\ :&X\ \to&\words\\
z\ :&X\ \to&\words.
\end{array}\right.$

\begin{itemize}[label=$-$]
\item \,$R(\emptyset,-1)=\{\emptyset\}$.
\item At step \,$(m,x)\in X$, we suppose \,$R(\text{pred}(m,x))$ is known and define \,$z(m,x),\,w(m,x)$ and $R(m,x)$:
\begin{enumerate}[label=\({\roman*}]
\item Let $z(m,x)$ be the longest string in $R(\text{pred}(m,x))$ of length $\leq\,r_{m,x}$.
\item Let $w(m,x)$ be the leftmost string (least for lexicographic order) of length $r_{m,x}$ extending $z(m,x)$ (i.e. \,$w(m,x)=z(m,x)0^{r_{m,x}-|z(m,x)|}$)
\item One sets \,$R(m,x)\,=\,(R(\text{pred}(m,x))\setminus \{z(m,x)\})\,\bigcup\,\{z(m,x)0^i1:0\leq i<r_{m,x}-|z(m,x)|\}$.
\end{enumerate}
\end{itemize}

\noindent By classical arguments~\cite[2.2.17]{nies}, the construction can be carried out and is effective (by the proof of Claim~\ref{pred-fini} and by Claim~\ref{algo}, there exists an effective increasing - with respect to $\squ_X$ - enumeration of $X$). Hence with $w:X\to\words$ defined as above, there is a machine $M$ satisfying Properties~\eqref{prop-KC} which for $(m,x)\in X$ sends $w(m,x)$ to $x$.\\
Instead of justifying the assertion about the existence of an effective increasing enumeration of $X$, let us define now the stratification of $X$ which will allow us to exhibit an algorithm for $M$ with the appropriate time bounds.\\

\textbf{(a.3) The stratification of $X$:}

\begin{df}\label{X(r)}
For $r\in\n$, let \ $X(r)\ =\ \big\{(m,x):x\in X^N_{m,f(3r+1)},\ |x|\leq 2r,\ m\leq r\big\}$.
\end{df}

\noindent The last requirement ``$m\leq r$" is redundant, we left it to stress the fact that $X(r)$ is finite.
 Let us note the following:

\begin{cl}\label{algo}  If $(m,x)\in X$, then
\begin{enumerate}[label=\({\alph*}]
\item  $|x|\leq 2r_{m,x},\ 2m\leq |x|$ \,and\, $m\leq r_{m,x}$,
\item $x\in X^N_{m,f(3_{r_{m,x}}+1)}$.
\end{enumerate}
\end{cl}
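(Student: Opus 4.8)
The plan is to establish the three assertions of Claim~\ref{algo} by unwinding the definitions of $r_{m,x}$, of the measure bound on the test, and of $X(r)$, keeping the controlling function's monotonicity in mind. Recall that $r_{m,x}=|x|-m+1$. I would first prove (a). The inequality $2m\le |x|$ is exactly what was shown in the proof of Claim~\ref{pred-fini}: since $x\in X_m$ and $X_m$ is prefix-free, $[x]\subseteq[X_m]$, so $2^{-|x|}\le\mu([X_m])=\mu(G_m)\le 2^{-2m}$ by property~(ii) of the reduction at~\ref{X}, whence $2m\le |x|$. From $2m\le|x|$ we get $m\le |x|-m$, so $m\le |x|-m+1=r_{m,x}$, giving the third inequality. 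For the first, $|x|\le 2r_{m,x}$ means $|x|\le 2(|x|-m+1)=2|x|-2m+2$, i.e. $2m\le |x|+2$, which is weaker than $2m\le|x|$ and hence already established. So (a) is immediate once $2m\le|x|$ is in hand.

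For (b) I would note that $r_{m,x}\ge m$ (from (a)), so $3r_{m,x}+1\ge m+2r_{m,x}+1\ge m+i+$ (something) — more precisely, since $r_{m,x}\ge m$ and $r_{m,x}\ge 1$, one has $3r_{m,x}+1\ge m + 2r_{m,x}+1-\big(2r_{m,x}+1-3r_{m,x}-1+m\big)$; rather than juggle this, the clean route is: we want some $t$ with $x\in X^N_{m,t}$ and $t\le f(3r_{m,x}+1)$. Now $x\in X_m=\{x: N(m,x)\downarrow\}$, so $N(m,x)$ halts in finitely many steps; but to control that number of steps by $f(3r_{m,x}+1)$ I must invoke the controlling-function inequality~(iii) at~\ref{X} together with the fact that $X_m$ is prefix-free. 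Indeed, taking $i$ large enough in (iii), specifically any $i$ with $2^{-i}<2^{-|x|}$ (say $i=|x|+1$), forces $x\in X^N_{m,f(m+i)}$: otherwise $[x]\subseteq [X_m\setminus X^N_{m,f(m+i)}]$ would give $2^{-|x|}\le\mu([X_m])-\mu([X^N_{m,f(m+i)}])\le 2^{-i}$, a contradiction. So $x\in X^N_{m,f(m+|x|+1)}$, and it remains to check $m+|x|+1\le 3r_{m,x}+1$, i.e. $m+|x|\le 3(|x|-m+1)=3|x|-3m+3$, i.e. $4m\le 2|x|+3$, i.e. $2m\le|x|+1.5$, which follows from $2m\le|x|$. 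Since $f$ is nondecreasing, $f(m+|x|+1)\le f(3r_{m,x}+1)$, and (b) follows.

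The main obstacle — really the only place any thought is needed — is pinning down the right value of $i$ in the controlling inequality~(iii) to convert "$N(m,x)$ eventually halts" into the explicit bound "$x\in X^N_{m,f(3r_{m,x}+1)}$", and then verifying the elementary arithmetic inequality $m+|x|+1\le 3r_{m,x}+1$ so that monotonicity of $f$ closes the gap. Everything else is a one-line consequence of the prefix-freeness of $X_m$ and the measure bound $\mu(G_m)\le 2^{-2m}$ already exploited in Claim~\ref{pred-fini}. I would present the argument in the order (a) then (b), since (b) uses the inequality $2m\le|x|$ proved in the course of (a); and I would remark that the hypothesis "$(m,x)\in X$" is used precisely through $x\in X_m$ and the prefix-freeness of $X_m$.
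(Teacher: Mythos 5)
Your proof is correct and follows essentially the same route as the paper: part (a) via the bound $2m\le|x|$ inherited from Claim~\ref{pred-fini}, and part (b) by instantiating the controlling inequality at $i=|x|+1$ (using prefix-freeness of $X_m$ to force $x\in X^N_{m,f(m+|x|+1)}$) and then checking $m+|x|+1\le 3r_{m,x}+1$ together with monotonicity of $f$. No gaps.
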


\begin{proof}  \ Let $(m,x)\in X$.
\begin{enumerate}[label=\({\alph*}]
\item As we noted in the proof of Claim~\ref{pred-fini}, $x\in X_m$ implies $|x|\geq 2m$. Hence we deduce\linebreak \,$r_{m,x}=|x|-m+1\geq m+1$ \,and\, $r_{m,x}\geq |x|-(|x|/2)+1\geq |x|/2$.
\item Since \ $\mu([X_m])-\mu([X^N_{m,f(m+|x|+1)}])\leq 2^{-(|x|+1)}$, \,necessarily \,$x\in X^N_{m,f(m+|x|+1)}$. \ 
Now by (a), $m\leq r_{m,x},\ |x|\leq 2r_{m,x}$, hence since $f$ increasing, \,$x\in X^N_{m,f(3r_{m,x}+1)}$.\qedhere
\end{enumerate}
\end{proof}

\noindent The interest of the stratification of $X$ appears in the following:

\begin{cl}\label{lemma-stratX}\hfill
\begin{enumerate}[label=\({\alph*}]
\item \ Let $r\in\n$. Then $(X(r),\squ_X)$ \ is a finite initial segment of $(X,\squ )$.
\item \ If $r\in\n$,  $(m,x)\in X$ \ and \ $r_{m,x}\leq r$, then \,$(m,x)\in X(r)$.
\end{enumerate}
\end{cl}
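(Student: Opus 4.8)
\textbf{Proof plan for Claim~\ref{lemma-stratX}.}

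The plan is to establish the two parts in the natural order, using the preceding claims as building blocks. For part (a), I would first recall from Claim~\ref{algo}(a) that every $(m,x)\in X$ satisfies $2m\leq |x|$ and $m\leq r_{m,x}$, and from Claim~\ref{algo}(b) that $x\in X^N_{m,f(3r_{m,x}+1)}$; these are exactly the defining constraints of $X(r)$ evaluated at $r=r_{m,x}$. The key observation is then that $X(r)$ is ``downward closed'' under $\squ_X$: if $(m,x)\in X(r)$ and $(m',x')\squ_X (m,x)$, I must show $(m',x')\in X(r)$. From the definition of $\squ$ we have $x'\curllex x$ (or $x'=x$), so $|x'|\leq |x|\leq 2r$; and since $2m'\leq |x'|\leq 2r$ by the proof of Claim~\ref{pred-fini}, we get $m'\leq r$. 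Finally, since $f$ is increasing and $m'+|x'|+1\leq 3r+1$ (using $m'\leq r$ and $|x'|\leq 2r$), the approximation condition (iii) forces $x'\in X^N_{m',f(m'+|x'|+1)}\subseteq X^N_{m',f(3r+1)}$, exactly as in the argument for Claim~\ref{algo}(b). This shows $(m',x')\in X(r)$, so $X(r)$ is an initial segment; its finiteness is immediate from the bound $|x|\leq 2r$ together with $m\leq r$ (only finitely many pairs satisfy both).

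For part (b), suppose $(m,x)\in X$ with $r_{m,x}\leq r$. By Claim~\ref{algo}(a), $|x|\leq 2r_{m,x}\leq 2r$ and $m\leq r_{m,x}\leq r$, so the length and index constraints of $X(r)$ are met. For the remaining condition, Claim~\ref{algo}(b) gives $x\in X^N_{m,f(3r_{m,x}+1)}$, and since $f$ is increasing and $3r_{m,x}+1\leq 3r+1$, we conclude $x\in X^N_{m,f(3r+1)}$. Hence $(m,x)\in X(r)$.

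I expect the only delicate point to be the verification of downward closure in part (a): one must be careful that the inequality $m'+|x'|+1\leq 3r+1$ genuinely follows from the $\squ_X$-smallness of $(m',x')$ relative to a member of $X(r)$, which it does because $\squ$ is (essentially) ordered by $\curllex$ on the string component, forcing $|x'|\leq |x|\leq 2r$, while $2m'\leq|x'|$ bounds $m'$. Everything else is a routine unwinding of Definition~\ref{X(r)} and the monotonicity of $f$. Part (b) is then a direct consequence of Claim~\ref{algo}, with no real obstacle.
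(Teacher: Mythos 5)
Your proposal is correct and follows essentially the same route as the paper: part (a) is proved by verifying that $X(r)$ is downward closed under $\squ_X$ via the chain $|x'|\leq|x|\leq 2r$, $2m'\leq|x'|$ (hence $m'\leq r$), and $m'+|x'|+1\leq 3r+1$ together with the monotonicity of $f$ and the approximation condition, while part (b) is a direct application of Claim~\ref{algo}. The only difference is cosmetic: you make explicit the inequality $m'+|x'|+1\leq 3r+1$ and the finiteness count, which the paper leaves implicit.
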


\begin{proof}\hfill \begin{enumerate}[label=\({\alph*}]
\item
 Let $(m',x'),(m,x)$ \ be both in $X$, \ $(m',x')\squ (m,x)$ \ and \ $(m,x)\in X(r)$.\\
 Necessarily \ $|x'|\leq |x|.$ \ Hence $|x'|\leq 2r$. Also 
\[(m',x')\in X\ \Rightarrow\ \left(\begin{array}{l}
 2m'\leq |x'|\\
 x'\in X_{m',f(m'+|x'|+1)}
 \end{array}\right.
\]
\\ We deduce \ $m'\leq r$ \ and \ $x'\in X_{m',f(3r+1)}$. Therefore \ $(m',x')\in X(r)$.
 \item Let \ $(m,x)\in X$ \ and \ $r_{m,x}\leq r$. Then by Claim~\ref{algo}(b), \,$x\in X^N_{m,f(3r+1)}$.\qedhere
 \end{enumerate}
 \end{proof}
 
 \textbf{(a.4) The algorithm for $M$:}
\\ The proof of the Kraft-Chaitin theorem yields the following:
 
 \begin{cl}\label{claim-R-encoded}
 Let $r\geq 1$. \begin{enumerate}[label=\({\alph*}]
\item
  For any  $(m,x)\in X(r),\ \,R(m,x)$ \,contains at most 2r+2 strings which are of length\ $\leq\,2r+1$. \ Hence there is a constant $d\in\n$ such that \,$R(m,x)$ can be coded (according to increasing length) by a string \,$\rho(m,x)$ \,of length $\leq dr^2$.
 \item Also if $(m',x')=\text{pred}(m,x)$ with $(m',x'),\,(m,x)\in X(r)$, there is an algorithm requiring $\+{O}(r^2)$ steps which produces \,$\rho(m,x)$\, from \,$\rho(m',x')$.
 \end{enumerate}
 \end{cl}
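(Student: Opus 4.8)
The whole claim rests on a single structural property of the greedy Kraft--Chaitin construction of (a.2): at every stage $(m',x')$ the set $R(m',x')$ contains \emph{at most one string of each length}. This is the classical invariant behind~\cite[2.2.17]{nies}; it holds for $R(\emptyset,-1)=\{\emptyset\}$ and is preserved at step $(m',x')$ because one deletes the longest string $z(m',x')$ of length $\le r_{m',x'}$ and inserts the strings $z(m',x')0^i1$ of pairwise distinct lengths $|z(m',x')|+1,\dots,r_{m',x'}$, none of which can coincide with a surviving string (a surviving string of length $\le r_{m',x'}$ is strictly shorter than $z(m',x')$, and one of length $>r_{m',x'}$ is strictly longer than $r_{m',x'}$). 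The same weight count that makes the construction go through also guarantees that $z(m',x')$ exists, since $\sum_{v\in R(\text{pred})}2^{-|v|}\ge 2^{-r_{m',x'}}$ whereas a finite antichain all of whose members are longer than $r_{m',x'}$ has mass $<2^{-r_{m',x'}}$.

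\textbf{Part (a).} First I would bound the lengths occurring in $R(m,x)$ for $(m,x)\in X(r)$. By Definition~\ref{X(r)} every $(m',x')\in X(r)$ has $|x'|\le 2r$, hence $r_{m',x'}=|x'|-m'+1\le 2r+1$; and by Claim~\ref{lemma-stratX}(a) the set $X(r)$ is an initial segment of $(X,\squ)$, so every step performed by the construction up to and including step $(m,x)$ lies in $X(r)$. Since a step $(m',x')$ inserts into $R$ only strings of length $\le r_{m',x'}\le 2r+1$ and the construction starts from $R(\emptyset,-1)=\{\emptyset\}$, every string of $R(m,x)$ has length $\le 2r+1$; with the one-string-per-length invariant this gives $|R(m,x)|\le 2r+2$. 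Then I would code $R(m,x)$ by listing its members in order of increasing length, each written as a self-delimiting block of length $\+{O}(r)$; with at most $2r+2$ blocks this yields a string $\rho(m,x)$ of length $\le dr^2$ for a constant $d$ independent of $r$.

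\textbf{Part (b).} Given $\rho(m',x')$ with $(m',x')=\text{pred}(m,x)$ and both in $X(r)$, the algorithm decodes $\rho(m',x')$ into the list $R(m',x')=R(\text{pred}(m,x))$ of at most $2r+2$ strings of length $\le 2r+1$, in $\+{O}(r^2)$ steps. From $(m,x)$ it computes $r_{m,x}=|x|-m+1\ (\le 2r+1)$, scans the list for $z(m,x)$ (the longest member of length $\le r_{m,x}$), and forms $R(m,x)$ by deleting $z(m,x)$ and appending the $\le r_{m,x}\le 2r+1$ strings $z(m,x)0^i1$, $0\le i<r_{m,x}-|z(m,x)|$, each of length $\le 2r+1$; re-encoding the list in order of increasing length produces $\rho(m,x)$. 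Each phase touches $\+{O}(r)$ strings of length $\+{O}(r)$, so the total cost is $\+{O}(r^2)$.

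\textbf{Where the difficulty lies.} Almost all of this is bookkeeping; the single point that must be handled carefully is the coupling between the ``at most one string of each length'' invariant and the fact (Claim~\ref{lemma-stratX}(a)) that $X(r)$ is an \emph{initial segment} of $(X,\squ)$. It is precisely because no request of index exceeding $2r+1$ is processed before the construction reaches an element of $X(r)$ that all strings of $R(m,x)$ remain of length $\le 2r+1$ --- which is what turns the a priori unbounded object $R(m,x)$ into data of size $\+{O}(r^2)$ and makes the $\+{O}(r^2)$ update feasible. If one only used the bound $|x|\le 2r$ without the initial-segment property, the auxiliary strings in $R$ could in principle be much longer.
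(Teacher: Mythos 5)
Your proposal is correct and follows essentially the same route as the paper: the key invariant that $R(m,x)$ contains at most one string of each length, all lengths bounded by $r_{m,x}\leq 2r+1$, proved by induction along $\squ_X$ (using that $X(r)$ is an initial segment so the induction never leaves $X(r)$), followed by the evident $\+{O}(r^2)$ encoding and one-step update. Your explicit justification of the existence of $z(m,x)$ via the weight count is a detail the paper delegates to the classical Kraft--Chaitin argument, but it changes nothing in substance.
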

 
 \proof \ We use the classical fact that all strings  have different length, adding the bound information.
\begin{enumerate}[label=\({\alph*}]
\item
\ Let $r\in\n$ be fixed. One argues by \,$\squ_X$\,- induction:  let $(m,x)\in X(r)$. We assume that all strings in $R(\text{pred}(m,x))$ \ have different length $\leq\ 2r+1$.

By construction (requirements (i),(iii)), we obtain:\begin{itemize}
\item\ $R(m,x)\setminus R(\text{pred}(m,x))\ \subseteq\ \big\{z\in\words:|z(m,x)|< |z|\leq r_{m,x}\big\}$,
\item \ $R(\text{pred}(m,x))\,\cap\,\big\{z\in\words:|z(m,x)|<|z|\leq r_{m,x}\big\}\ =\ \emptyset$,
\item \ All strings in \ $R(m,x)\setminus R(\text{pred}(m,x))$ \ have different length $\leq \ r_{m,x}$.
\end{itemize}
Now \ $r_{m,x} = |x|-m+1\ \leq\ |x|+1\ \leq\ 2r+1$.

Hence by induction hypothesis, all elements in $R(m,x)$ have different length $\leq\ 2r+1$.
\item \ Also if the elements of \,$R(\text{pred}(m,x))$\, are enumerated according to increasing length, it takes $\+{O}(r^2)$ steps to build \,$R(m,x)$\, enumerating its elements according to increasing length.\qedhere\end{enumerate}

 
\noindent By Claim~\ref{lemma-stratX},  we can replace the induction on $\ \squ_X$ by an induction on \ $\squ_{X(r)}$:

 \begin{theo}\label{algo-below}\hfill
 \begin{itemize}
 \item \ For $r\in\n$, let \ $A(r)\ =\ \big\{(m,x)\in\n\times\words\,:\,m\leq r,\ |x|\leq 2r\big\}$. We shall enumerate \,$A(r)$\, according to \,$\squ$. Let $\rho_0$ code the set $\{\emptyset\}$; $\rho$ will be a variable whose value is $\rho(m,x)$ where $(m,x)$ is the last element of $X(r)$ which has been treated. 


\item \  On input $w\in\words$ such that $|w|=r$,
\begin{itemize}[label=$-$]
\item
 set $\rho:=\rho_0$,
\item $(m,x)\in A(r)$.

 \emph{Case 1}: if $x\in X^N_{m,f(3r+1)}$, then from $\rho$ (playing the role of $R(pred(m,x))$), compute the values $w(m,x)$ and $\rho(m,x)$, and set $\rho:=\rho(m,x)$.\\
\esp\emph{Case 1.1}: if  $w(m,x)=w$, then output $x$ and stop the machine.\\
\esp\emph{Case 1.2}: otherwise \\
\esp\esp\emph{Case 1.2.1}: if $(m,x)=(r,1^{2r-1})$ (the maximum of  $A(r)$), then loop for ever,\\
\esp\esp\emph{Case 1.2.2}: otherwise compute the successor $(m',x')$ of $(m,x)$ for $\squ $ in $A(r)$ and go to step $(m',x')$.\\
\emph{Case 2}: if $x\notin X^N_{m,f(3r+1)}$, \\
\esp\emph{Case 2.1}: if $(m,x)=(r,1^{2r-1})$, then loop for ever,\\
\esp\emph{Case 2.2}: otherwise, compute  the successor $(m',x')$ of $(m,x)$ for $\squ $ in $A(r)$ and go to step $(m',x')$.
\end{itemize}\end{itemize}
 \end{theo}

 \noindent Let \,\C\ =\ $\bigcup_{g\in F_C}\text{FDTIME}(g(n))$ be one of our time-complexity classes.
 Since the integers are under unary representation, $f(n)=\+O(f'(n))$
 for some $f'\in F_C$. Using Claim~\ref{claim-R-encoded}, one can then
 show the existence of a constant $c_0$ and of a function $g\in F_C$
 so that, for any $|w|=r$, any $(m,x)\in A(r)$, if step $(m,x)$ is
 finite, it takes at most$\ c_0 g(r)$ steps to be completed.

 Now, for $r\in\n$, $|A(r)|\leq 2^{2r+1}(r+1)$.

 \textbf{The time bound:}
 hence there exists $l\in$ \C\ such that any successful computation of $M$ on a finite sequence $w$ with $|w|=r$, 
 takes at most $l(r)2^{3r}$ steps. Therefore
 \begin{align}
 \label{near-goal}
 \text{dom}(M)\cap \{w\in\words:|w|\leq r\}\ &=\ \big\{w(m,x):(m,x)\in X \text{ and } r_{m,x}\leq r\big\}\nonumber\\
   &\subseteq\ \ \text{dom}(M_{l(r)2^{3r}}).
   \end{align}
 Our goal is to fulfill the conditions of Claim~\ref{goal-fact}, that is to obtain a function $h$ in \ \expo(\C)\ \,so that:
\begin{align*}
\big\{w(m,x):m\leq r,\ x\in X_{m,f(3r+1)}\big\}\ \subseteq\ \text{dom}(M_{h(r)}).
\end{align*}
\\If $x\in X_{m,f(3r+1)}$, then $|x|\leq f(3r+1)$. Hence  $r_{m,x}=|x|-m+1\leq f(3r+1)+1$. Therefore
\\$\begin{array}[t]{ccl}
\big\{w(m,x):m\leq r,\ x\in X^N_{m,f(3r+1)}\big\}\!\!\!\!&\!\subseteq\!\!\!&\!\!\!\big\{w(m,x)\!:\!(m,x)\!\in\! X \text{ and } r_{m,x}\!\leq \!f(3r+1)\!+\!1\big\}\\
\!\!\!\!\!\!\!\! &\subseteq\!\!\!& \! \text{dom}(M_{l(f(3r+1)+1)2^{3(f(3r+1)+1)}}\big\}\text{\esp (by (\!~\ref{near-goal}))}.
\end{array}$
 \\ Let us define $h$ by $h(r)=l(f(3r+1)+1)2^{3(f(3r+1)+1)}$,  for $r\in\n$. Then $h$ belongs to \ \expo(\C), and by Claim~\ref{goal-fact}, \ $\Omega_M -\Omega_{M_{h(r)}}\,\leq\,2^{-r}$. Hence $M$ is a measure\ \expo(\C) \ computable machine.

 One finally concludes the proof of Proposition~\ref{prop-Ml-meas}(a) by classical arguments: \\Let $m\in\n$
. \ If $x\in X_m$, then \ $M(w(m,x))=x$. \,Hence \ $K_M(x)\leq |w(m,x)| = |x|-m+1$.\\
 Therefore, for $m\geq 1,\ \,G_m\,=\,[X_m]\,\subseteq\,R_{m-1}^M$.\\


\item (b) \ Let now \ \C\ = \ \pspace.
\\We assume  $([X_n])_{n\in\n}$ is an ML-\ \pspace\ -S test \,(defined from a set $X$ associated with a machine $N$). By Remark~\ref{controlling}, let $f$ be a polynomial function such  that for any $n,i\in\n$,
\esp $X_n$ is prefix-free,
\ $
\mu([X_n])\,\leq \,2^{-2n}$ \ and \ 
$\mu([X_n]) - \mu([X_{n,f(n+i)}^{N,\text{space}}])\,\leq 2^{-i}$.

Let us start from the bounded request set defined as above, our goal is to define a machine $M$ satisfying properties (\ref{prop-KC}) and a polynomial function $h$ such that, for any $r\in\n$,
\begin{align}\label{inc-pspace}
\big\{w(m,x):m\leq r,\ x\in X^{N,\text{space}}_{m,f(3r+1)}\big\}\ \subseteq\ \text{dom}(M^{space}_{h(r)}).
\end{align}
If we replace \,$X^N_{m,t}$ \,by \,$X_{m,t}^{N,\text{space}}$ \,and\, $M_t$ \,by\, $M^{\text{space}}_t$ \,in the previous definitions and claims, the transposed definitions and claims \,remain valid. Now we must define an algorithm for $M$  with ``$x\in X_{m,f(3r+1)}^{N,\text{space}}$" in place of \ ``$x\in X^N_{m,f(3r+1)}$".\\
- We use the fact that there is a constant $d\in\n$ such that for any \,$m,t\in\n$, \ $X_{m,t}^{N,\text{space}}\ \subseteq\ X^N_{m,2^{d t}}$. Hence using $2^{df(3r+1)}$ (under binary representation) as a time-counter, we can check in space $k(f(3r+1))$, for some constant $k$, \,whether \,``$x\in X^{N,\text{space}}_{m,f(3r+1)}$".\\
- Also we note that in the previous algorithm, in checking successively for all $(m,x)\in A(r)$ whether $w(m,x)=w$, we needed only to keep track of the value $\rho(m,x)$, for the last browsed $(m,x)$.

Hence we can define an algorithm for $M$ such that for some  polynomial function $g$,  any successful computation of $M$ on $w$ with $|w|=r$ requires at most $g(r)$ cells.  Let us set, for $r\in\n$, \ $h(r)=g(f(3r+1)+1)$, \,$h$ satisfies (\!~\ref{inc-pspace}).

One concludes by the  equivalent of Claim~\ref{goal-fact} for space, that $M$ is a measure \pspace\ computable machine, and as above that for $m\geq 1,\ \,G_m\,=\,[X_m]\,\subseteq\,R_{m-1}^M$.
\end{enumerate}
\noindent This concludes the proof of Proposition~\ref{prop-Ml-meas}.
\qed
\end{proof}

 One obtains the opposite direction as an easy generalization of the classical case.

\begin{prop}\label{prop-comp}\hfill
\begin{enumerate}[label=\({\alph*}]
\item
 Let \ \C\ \ be one of our time-complexity classes. If $M$ is a measure  \C\ computable machine, then \ $(R_b^M)_{b\in\n}$ \ (Definition~\ref{def-rmb}) \ is an ML-\ \expo(\C)\,-S test.
\item If $M$ is a measure \,\pspace\ \,computable machine, then \ $(R_b^M)_{b\in\n}$ \ is an ML-\ \pspace\ -S test. 
\end{enumerate}
\end{prop}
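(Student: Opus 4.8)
\textbf{Proof proposal for Proposition~\ref{prop-comp}.}

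The plan is to imitate the classical argument (see~\cite[3.5.18]{nies}) showing that $(R_b^M)_{b\in\n}$ is a Schnorr test, while carefully tracking the complexity of the resulting controlling function. First recall that $R_b^M=[\{x\in\words:K_M(x)\leq|x|-b\}]$, so the natural generating set is $V_b=\{x:K_M(x)\leq|x|-b\}$, which is recursively enumerable (enumerate $\text{dom}(M)$, and whenever $M(w)=x$ with $|w|\leq|x|-b$, put $x$ into $V_b$). The measure bound $\mu(R_b^M)\leq 2^{-b}$ follows from the classical Kraft-type estimate $\sum_{x\in V_b}2^{-|x|}\leq 2^b\sum_{x\in V_b}2^{-(|x|-b)+(-b)}\cdots$; more precisely, since $M$ is prefix-free, $\sum_{w\in\text{dom}(M)}2^{-|w|}=\Omega_M\leq 1$, and each $x$ with $K_M(x)\leq|x|-b$ is the image of some $w$ with $2^{-|w|}\geq 2^{b}2^{-|x|}$, giving $\mu(R_b^M)\leq\sum_{x\in V_b}2^{-|x|}\leq 2^{-b}\Omega_M\leq 2^{-b}$. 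To match Definition~\ref{def-ML-test} we reindex, working with the sets $X_b=V_b$ and noting $G_b=R_b^M=[X_b]$.

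The core of the argument is the construction of the controlling function, and this is where the hypothesis on $M$ enters. Since $M$ is measure \C-computable, there is $g\in$ \C\ with $\Omega_M-\Omega_{M_{g(i)}}\leq 2^{-i}$ for all $i$. I would define the controlling function $F$ for the test $(R_b^M)_b$ by $F(b+i)=$ (roughly) $2^{g(b+i+c)}$ for a suitable constant $c$, so that $F$ lands in \expo(\C). The verification runs as follows: if $x\in X_b$, then $x=M(w)$ for some $w$ with $|w|\leq|x|-b$; once $w\in\text{dom}(M_{g(i')})$ for the appropriate $i'$, running $M$ for $g(i')$ steps on $w$ exhibits $x$, and since the simulation of $t$ steps of $M$ together with the bookkeeping needed to realize $x$ as an element of the generating set of $R_b^M$ takes time polynomial in $t$ (or $2^{O(t)}$ at worst — one has to recover $|w|\leq|x|-b$, which is immediate), we get $x\in X^{M'}_{b,F(b+i)}$ where $M'$ is the machine enumerating the $X_b$'s. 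The measure accounting is: the part of $R_b^M$ not yet captured at time $F(b+i)$ is covered by images of domain elements $w$ not in $\text{dom}(M_{g(i')})$, whose total $\mu$-contribution is at most $2^b(\Omega_M-\Omega_{M_{g(i')}})\leq 2^b2^{-i'}$, so choosing $i'=b+i$ yields the required $2^{-i}$. For part (b) one replaces time by space throughout: using that $X^{N,\text{space}}_{m,t}\subseteq X^N_{m,2^{dt}}$ as in the proof of Proposition~\ref{prop-Ml-meas}, a space-$g(i)$ bound on $M$ translates into a space-$O(g(i))$ (hence polynomial) controlling function, because simulating space-bounded computations with a binary time-counter stays within polynomial space.

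The main obstacle I expect is bookkeeping rather than conceptual: one must make sure the machine enumerating the $R_b^M$-generating sets can, within the claimed time (resp. space) budget, both simulate $M$ and certify the inequality $K_M(x)\leq|x|-b$ — in particular that the factor $2^b$ lost in the measure estimate is absorbed by shifting the argument of $g$ by an additive constant, which works precisely because integers are in unary and $g$ is increasing (Remark~\ref{rk strict increase}), so $g(b+i)$ is still in \C. A secondary point is to confirm that the exponential blow-up from "time $g(i)$" to "time-counter $2^{g(i)}$" is only needed in the space case and that in the time case $F\in$ \expo(\C)\ already because the factor comes from the size of the search space ($2^{|x|}$-many candidate $x$'s of a given length), exactly parallel to the $2^{3r}$ factor appearing in~\eqref{near-goal}. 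Once these size estimates are in place, the inclusion $\bigcap_b R_b^M\supseteq$ (nothing further is needed — the statement only asserts $(R_b^M)_b$ is a test) and the proof closes by invoking Remark~\ref{controlling} to replace $F$ by a function in $F_{\expo(\C)}$.
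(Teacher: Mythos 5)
Your proposal is correct and follows essentially the same route as the paper: the generating sets $\{x:K_M(x)\leq|x|-b\}$ with the classical bound $\mu(R^M_b)\leq 2^{-b}$, a controlling function of the form $2^{g(\cdot)}$ times a \C-factor (hence in \expo(\C), the exponential coming from the search over candidate witnesses), the measure accounting $\mu([X_b])-\mu([X^N_{b,\cdot}])\leq 2^{\pm b}(\Omega_M-\Omega_{M_{g(i')}})$ closed by shifting $i'=b+i$ (the tight constant is $2^{-b}$, but your $2^{b}$ still suffices), and the space case handled by a binary time-counter. The paper packages the witness-search step as the equivalence $K_M(x)\leq|x|-b\Leftrightarrow K_{M_{g(|x|)}}(x)\leq|x|-b$ and splits off the strings with $|x|>g(m)$ into the uncaptured mass, which is exactly the bookkeeping you anticipate.
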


\begin{proof} \ We refer here to ~\cite[3.5.14,\,3.5.18]{nies}.
\hfill
\begin{enumerate}[label=\({\alph*}]
\item Let $M$ be a prefix-free machine and let $g\in\ $\C\ \,be strictly increasing and such that, for any $i\in\n$, \ $\Omega_M -\Omega_{M_{g(i)}}\leq 2^{-i}$.
As in~\cite[3.5.15]{nies}, one shows:

\begin{clm}\label{KM}
For any $x\in\words,\ b\in\n$, \ $
K_M(x)\leq |x|-b\ \Leftrightarrow\ K_{M_{g(|x|)}}(x)\leq |x|-b.$
\end{clm}

\noindent Let \ $X\ =\ \big\{(b,x)\in\n\times\words: K_M(x)\leq |x|-b\big\}$. We consider the machine $N$ which on input $(b,x)$ tests for each $y\in\words$ such  that $|y|\leq |x|-b$ whether \ $M_{g(|x|)}(y)=x$. If there is such a $y$, $N$ halts, otherwise it diverges.\\
Then by Claim~\ref{KM}, \ $N(b,x)\downarrow\ \Leftrightarrow\ (b,x)\in X$. For $b\in\n,$ one has \ $R^M_b\,=\,[X_b]$ \ and \ classically \ $\mu([X_b])\leq 2^{-b}$.

\begin{clm}\label{Xb}
$([X_b])_{b\in\n}$ \,is an ML-\,\expo(\C)-S test
\end{clm}

\proof (Sketch refering to~\cite[3.5.18]{nies}).\\
By definition of the machine $N$, there exists  an increasing function $f$ in\ \,\C\ \,such that if $N$ halts on $(b,x)$, it does so in at most \ $2^{|x|}f(|x|+b)$ \ steps. Hence\\ 
\centerline{$X_b\,\cap\,\big\{x\in\words:|x|\leq m\big\}\ \subseteq\ X^N_{b,2^mf(m+b)}$.}
\\Let \ $h\in$\ \expo(\C) \ be such that \ $h(r)=2^{g(r)}f(g(r)+r)$. Then

\centerline{$X_b\,\cap\,\big\{x\in\words:|x|\leq g(m)\big\}\ \subseteq\ X^N_{b,h(m+b)}$.}
Since \ $M(\sigma)=x$ \,and\, $|x|>g(m)$ \,imply \,$\sigma\notin\text{dom}(M_{g(m)})$, one  then argues classically to deduce
$$\mu([X_b])-\mu([X^N_{b,h(m+b)}])\ \leq\ 2^{-b}(\Omega_M
  -\Omega_{M_{g(m)}})\ \leq\ 2^{-m}\,.\eqno{\qEd}
$$\smallskip


\item Let now \,\C \ be \,\pspace . We suppose $M$ is a prefix-free machine, $g$ is a polynomial function  such that for any $i\in\n ,\ \Omega_M -\Omega_{M_{g(i)}^{\text{space}}}\ \leq\ 2^{-i}$.

The set $X$ is defined as in the time-complexity case, but in the rest of the argument, we replace \,$M_t$\, by \,$M^{\text{space}}_t$. Claim~\ref{KM} can be transposed. On input $(b,x)$, using a time-counter as previously, the machine $N$ tests  in space $p(|x|+b)$, for a polynomial function $p$, whether there exists $|y|\leq|x|-b$ such that \ $M_{g(|x|)}^{\text{space}}(y)=x$. Setting $h(r)=p(g(r)+r)$, 
one concludes as above that \ $\mu([X_b])-\mu([X_{b,h(b+m)}^{N,\text{space}}])\ \leq\ 2^{-m}$. This concludes the proof of Proposition~\ref{prop-comp}.\qedhere
\end{enumerate}
\end{proof}

\noindent From Propositions~\ref{prop-Ml-meas} and~\ref{prop-comp}, we deduce:

\begin{thm}\label{thm1}\hfill
\begin{enumerate}[label=\({\alph*}]
\item
 Let \,\C\ \,be one of our time complexity classes. Then for any $\xi\in\cantor$,
 \begin{itemize}
\item $\xi$ is ML-\,\expo(\C)\,-S random \ $\Rightarrow$\ \ $\xi$ is Kolmogorov-\,\C\,-S random.
\item $\xi$ is Kolmogorov-\,\expo(\C)\,-S random \ $\Rightarrow$\ \ $\xi$ is ML-\,\C\,-S random.
\end{itemize}
\item Let \,\C\ be the class \,\pspace,\ \toexp\ \,or \,\primrec. Then for any $\xi\in\cantor$,
 \ $\xi$ is ML-\,\C\,-S random $\Leftrightarrow$\ \ $\xi$ is Kolmogorov-\,\C\,-S random.
\end{enumerate}
\end{thm}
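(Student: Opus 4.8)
The plan is to derive Theorem~\ref{thm1} directly from Propositions~\ref{prop-Ml-meas} and~\ref{prop-comp}, which between them give the two implications needed — one converting an ML-\,\C\,-S test into a measure machine, the other converting a measure machine back into an ML test — but at the cost of an \,\expo(\cdot)\, shift in the complexity class. For part (a), the argument is pure bookkeeping: suppose $\xi$ is ML-\,\expo(\C)\,-S random and let $M$ be a measure \,\C\ computable machine; by Proposition~\ref{prop-comp}(a), $(R_b^M)_{b\in\n}$ is an ML-\,\expo(\C)\,-S test, so $\xi\notin\bigcap_{b\in\n}R_b^M$, which by Definition~\ref{def-rmb} says exactly that there is $b$ with $K_M(\xi\restriction n)>n-b$ for all $n$; hence $\xi$ is Kolmogorov-\,\C\,-S random. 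Symmetrically, if $\xi$ is Kolmogorov-\,\expo(\C)\,-S random and $(G_m)_{m\in\n}$ is an ML-\,\C\,-S test, Proposition~\ref{prop-Ml-meas}(a) produces a measure \,\expo(\C)\,-computable machine $M$ with $\bigcap_m G_m\subseteq\bigcap_b R_b^M$; Kolmogorov-\,\expo(\C)\,-S randomness of $\xi$ gives $\xi\notin\bigcap_b R_b^M$, so $\xi\notin\bigcap_m G_m$ and $\xi$ passes the test. Thus $\xi$ is ML-\,\C\,-S random.

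For part (b), I would first observe that for the class \,\pspace\ there is no loss, because Propositions~\ref{prop-Ml-meas}(b) and~\ref{prop-comp}(b) keep us inside \,\pspace\ (a polynomial applied to a polynomial, and a polynomial-space simulation of a polynomial-time-counter, are still polynomial space); feeding these two statements into the argument of part (a) with \,\C\ = \,\expo(\C)\ = \,\pspace\ gives the equivalence immediately. For \,\toexp\ and \,\primrec, the key point to record is the identity \,\expo(\C)\ = \,\C, which was already noted in the text right after Definition~\ref{defi-c-measure} ("\,\C\ = \,\expo(\C)\ when \,\C\ is \,\toexp\ or \,\primrec"): since $T(k+1,n)=2^{T(k,n)}$, applying $2^{(\cdot)}$ to a tower-exponential time bound only increments $k$, and applying it to a time-constructible primitive recursive bound stays primitive recursive. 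With \,\expo(\C)\ = \,\C, the two implications of part (a) collapse to a genuine equivalence between ML-\,\C\,-S randomness and Kolmogorov-\,\C\,-S randomness.

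I do not expect a real obstacle here; the theorem is a formal corollary once the two propositions are in hand, and the only thing requiring a moment's care is making the \,\expo(\C)\ = \,\C\ claim explicit for \,\toexp\ and \,\primrec\ (and the analogous "\,\expo-closure" remark for \,\pspace) so that the class shift in Propositions~\ref{prop-Ml-meas} and~\ref{prop-comp} becomes harmless. All the genuine work — the Kraft--Chaitin construction along $\squ_X$, the stratification, and the time/space bounds on the algorithm for $M$ — has already been discharged in the proofs of those propositions, so the proof of Theorem~\ref{thm1} is just the logical assembly sketched above together with that one remark about the closure properties of \,\toexp\ and \,\primrec.
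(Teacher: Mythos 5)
Your proposal is correct and is essentially the paper's own argument: the paper derives Theorem~\ref{thm1} immediately from Propositions~\ref{prop-Ml-meas} and~\ref{prop-comp}, exactly the bookkeeping you describe, with part~(b) resting on the closure facts \,\expo(\C)\,=\,\C\ for \,\toexp\ and \,\primrec\ (stated after the definition of \,\expo(\C)) and on the \,\pspace\ versions of the two propositions. Nothing further is needed.
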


\subsection{The relation between the Martin-L\"of and the martingale notions.} 

Let us deal now with the notion of randomness associated with martingales and orders. We refer to the notion of ``inverse" given in Definition~\ref{def-true-order}. As a way to obtain true \,\C\ \,orders, let us note:

\begin{clm}\label{inv-inv}\hfill
\begin{enumerate}[label=\({\alph*}]
\item
 If $f$ is an order, then $\text{Inv}_f$ is also an order and for $i\geq 1,\ \text{Inv}_{\text{Inv}_f}(i)=f(i-1)+1$.
\item If $f$ is a strictly increasing function in the class \,\C\,, where \,\C\ \,is one of our complexity classes, then \,$\text{Inv}_f$ \,is a true \,\C\ \,order.
\end{enumerate}
\end{clm}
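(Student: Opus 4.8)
The plan is to establish (a) directly from the definitions and then obtain (b) by combining (a) with a bounded-search computation of $\text{Inv}_f$.

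For (a), I would first record that $\text{Inv}_f$ is well defined and an order. Unboundedness of $f$ guarantees that for each $n$ the set $\{k : f(k)\geq n\}$ is nonempty, so $\text{Inv}_f(n)$ exists. It is nondecreasing because $\{k : f(k)\geq n'\}\subseteq\{k : f(k)\geq n\}$ whenever $n\leq n'$, hence the least element of the former set is at least the least element of the latter. It is unbounded because, given $m$, the choice $n=\max\{f(0),\dots,f(m-1)\}+1$ makes every $k<m$ satisfy $f(k)<n$, forcing $\text{Inv}_f(n)\geq m$. For the identity, I would unwind the definition: $\text{Inv}_{\text{Inv}_f}(i)$ is the least $k$ with $\text{Inv}_f(k)\geq i$; since $f$ is nondecreasing, $\text{Inv}_f(k)<i$ holds iff there is $j<i$ with $f(j)\geq k$, i.e. iff $f(i-1)\geq k$. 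Thus $\text{Inv}_f(k)\geq i$ iff $k\geq f(i-1)+1$, and the least such $k$ is $f(i-1)+1$.

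For (b), a strictly increasing $f:\n\to\n$ is in particular nondecreasing and unbounded, so part (a) already gives that $\text{Inv}_f$ is an order and that $\text{Inv}_{\text{Inv}_f}(i)=f(i-1)+1$ for $i\geq 1$ (with $\text{Inv}_{\text{Inv}_f}(0)=0$). Hence $\text{Inv}_{\text{Inv}_f}$ is obtained from $f$ by a shift and an increment, so it lies in \C\ as soon as $f$ does. It remains to place $\text{Inv}_f$ itself in \C. Here I would use that strict monotonicity forces $f(k)\geq k$ for all $k$ (immediate induction: $f(0)\geq 0$ and $f(k+1)>f(k)\geq k$), so $f(n)\geq n$ and therefore $\text{Inv}_f(n)\leq n$. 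Thus $\text{Inv}_f(n)$ is computed by evaluating $f(0),f(1),\dots$ until the first value $\geq n$ appears, which needs at most $n+1$ calls to $f$, each on an input of size $\leq n$ since integers are in unary; by the closure properties of $F_C$ this bounded iteration stays in \C. For \C\,=\,\pspace\ the same search is carried out in polynomial space, reusing the work tape across the at most $n+1$ iterations (and, if convenient, using a binary time-counter as in the proof of Proposition~\ref{prop-comp} to simulate $f$). Since both $\text{Inv}_f$ and $\text{Inv}_{\text{Inv}_f}$ are then in \C, $\text{Inv}_f$ is a true \C-order.

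The order verifications in (a) are routine; the one place that needs care is the complexity bookkeeping in (b), where one must exploit that strict increase of $f$ bounds the search range for $\text{Inv}_f$ by the input itself and then invoke the closure properties of each class \C\ — including the space case — to conclude the bounded search stays within \C. This is the only genuine obstacle.
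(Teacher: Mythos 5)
Your argument is correct and follows essentially the same route as the paper: part (a) by direct unwinding of the definition of $\text{Inv}$ (yielding the same computation $\text{Inv}_f(k)\geq i\Leftrightarrow f(i-1)<k$), and part (b) by observing that strict increase gives $f(n)\geq n$, so that $\text{Inv}_f(n)$ is found by a search bounded by $n$ and hence stays in \,\C, while $\text{Inv}_{\text{Inv}_f}$ is in \,\C\ via the formula from (a). Your write-up merely makes explicit the bounded-search bookkeeping that the paper leaves implicit.
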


\begin{proof}\hfill
\begin{enumerate}[label=\({\alph*}] 
\item Let $f$ be an order. Then for $n\in\n$,
$\text{Inv}_f(n+1)\,=\,\big(\text{least }k\ f(k)\geq n+1\big)\, \geq\,\big(\text{least }k\ f(k)\geq n)\big)\, = \,\text{Inv}_f(n)$.
\\For $i\in\n,\ f(i)<f(i)+1$, hence $\text{Inv}_f(f(i)+1)\,>\,i$. Therefore \,$\text{Inv}_f$ is unbounded.\\
We deduce that $\text{Inv}_f$ is an order.
\\Now if \ $h=\text{Inv}_f$, \,let us compute \ $\text{Inv}_h(i)$, \,for $i\geq 1$. We have the equivalences:
\begin{align*}
h(k)\geq i\ &\Leftrightarrow\ (\text{least }n\  f(n)\geq k) \geq i\\
 &\Leftrightarrow\ f(i-1)<k.
\end{align*}
Therefore, if $i\geq 1$, then \ $\text{Inv}_h(i)=f(i-1)+1$.\\

\item Let $f$ be strictly increasing in \,\C. We deduce that  for $n\in\n,\ f(n)\geq n$. Hence \ $\text{Inv}_f(n)=\text{least }k\leq n\ f(k)\geq n$. This implies that given our choice of classes \,\C, \ $\text{Inv}_f$ is also in \,\C. \ By (a), $\text{Inv}_f$ is an order, and \,$ \text{Inv}_{\text{Inv}_f}$ is in \,\C. Hence $\text{Inv}_f$ is a true \,\C\ order.\qedhere
\end{enumerate}\end{proof}

\noindent We propose now a result which will be useful in the next section. As an immediate consequence, it shows that when \,\C\ \,is \,\pspace, requiring the functions from $\n$ to $\n$  - especially orders - to be in \pol\ or in \pspace\ yields the same notion of randomness.

\begin{clm}\label{bound-inv}\hfill
\begin{enumerate}[label=\({\alph*}]
\item
 Let \,\C\, be one of our classes and let $f$ be a true\, \C\ order. Then there exists a strictly increasing function $g$ in \,\C\, such that \ $\text{Inv}_g(n)\leq f(n)$ \ a.e.
\item Let now $f$ be a true \,\pspace\ \,order. Then there is a true \,\pol\ \,order $h$ such that for any $n\in\n, \ \  h(n)\leq f(n)$.
\end{enumerate}
\end{clm}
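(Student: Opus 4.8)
The plan is to prove (a) first and then obtain (b) as a corollary, using Claim~\ref{inv-inv} and the fact that $\mathbf{PSPACE}$ (as well as each time-complexity class) is closed under the operations we need.

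For part (a), suppose $f$ is a true $\mathbf{C}$-order, so both $f$ and $\mathrm{Inv}_f$ lie in $\mathbf{C}$. The idea is to manufacture a strictly increasing $g\in\mathbf{C}$ whose inverse dominates $f$ from below. Set $g(n)=\mathrm{Inv}_f(n)+n$; the added $+n$ guarantees strict monotonicity (since $\mathrm{Inv}_f$ is nondecreasing by Claim~\ref{inv-inv}(a)), and $g\in\mathbf{C}$ because $\mathrm{Inv}_f\in\mathbf{C}$ and $\mathbf{C}$ is closed under adding the (unary-coded) input. It remains to check $\mathrm{Inv}_g(n)\le f(n)$ almost everywhere. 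Since $g(n)\ge\mathrm{Inv}_f(n)$, we have $\mathrm{Inv}_g(n)\le\mathrm{Inv}_{\mathrm{Inv}_f}(n)$, and by Claim~\ref{inv-inv}(a), for $n\ge 1$ this equals $f(n-1)+1$. So it suffices to note $f(n-1)+1\le f(n)$ for almost all $n$: because $f$ is an order (nondecreasing and unbounded), $f$ must make infinitely many strict jumps, but that is not enough pointwise, so instead I would take $g(n)=\mathrm{Inv}_f(n+1)+n$ or more simply iterate: replace $f$ by $n\mapsto f(2n)$ first (legitimate since we only need a.e. domination and $\mathrm{Inv}_{f(2\,\cdot\,)}$ is still in $\mathbf{C}$), giving enough slack that $\mathrm{Inv}_{\mathrm{Inv}_f}(n)=f(n-1)+1\le f(2n)=:$ the new $f$ evaluated at $n$. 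Cleanest is: let $\tilde f(n)=f(n-1)+1$ for $n\ge1$ and $\tilde f(0)=f(0)$; then $\tilde f\le f$ pointwise is false, but $\tilde f(n)\le f(n)$ holds precisely when $f(n)\ge f(n-1)+1$, i.e. at the jump points, which by unboundedness form an infinite set — hence "a.e." in the sense of cofinitely often fails. The correct fix, and the one I expect the paper uses, is to first pass from $f$ to the true $\mathbf{C}$-order $n\mapsto f(n)+$ (something) ensuring strict increase of $f$ itself, after which $f(n-1)+1\le f(n)$ holds for all $n\ge 1$ and we are done with $g(n)=\mathrm{Inv}_f(n)+n$.

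For part (b), let $f$ be a true $\mathbf{PSPACE}$-order. Apply part (a) with $\mathbf{C}=\mathbf{PSPACE}$ to get a strictly increasing $g\in\mathbf{PSPACE}$ with $\mathrm{Inv}_g(n)\le f(n)$ a.e. Now the key observation is that a strictly increasing function $g:\mathbb{N}\to\mathbb{N}$ computable in polynomial space satisfies $\mathrm{Inv}_g(n)\le n$ (since $g(k)\ge k$), so computing $\mathrm{Inv}_g(n)$ only requires evaluating $g$ on inputs $\le n$; but $g$ itself may be superpolynomial, so I cannot directly conclude $g\in\mathbf{PSPACE}$ gives a true $\mathbf{PSPACE}$-order from $g$. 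Instead, I work with $h:=\mathrm{Inv}_g$ itself. By Claim~\ref{inv-inv}(b), if $h$ were the inverse of a strictly increasing $\mathbf{PSPACE}$ function it would be a true $\mathbf{PSPACE}$-order — indeed $h=\mathrm{Inv}_g$ with $g$ strictly increasing in $\mathbf{PSPACE}$ is exactly that situation, so $h$ is a true $\mathbf{PSPACE}$-order. To upgrade "$\mathbf{PSPACE}$" to "$\mathbf{POL}$": since $h(n)=\mathrm{Inv}_g(n)\le n$, $h$ is polynomially bounded, and $\mathrm{Inv}_h(n)=g(n-1)+1$ by Claim~\ref{inv-inv}(a) — but $g$ may not be polynomially bounded, so I instead truncate. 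Define $h'(n)=\min\{h(n), \lceil\log\cdots\rceil\}$ — no: the honest route is to note $h=\mathrm{Inv}_g$ is nondecreasing, unbounded, bounded above by the identity, and computable in space $\mathrm{poly}(n)$ using that only evaluations $g(k)$ for $k\le n$ are needed and each such evaluation (for $k\le n$) uses space $\mathrm{poly}(k)\le\mathrm{poly}(n)$; hence $h\in\mathbf{POL\text{-}SPACE}$ restricted to small outputs is actually computable in polynomial \emph{time} is false in general.

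The main obstacle, and where I would focus the real work, is precisely this last point: extracting a genuinely polynomial-\emph{time} true order $h\le f$ from a polynomial-\emph{space} one. I expect the paper resolves it by the standard "slow-down / padding" trick in the spirit of the delaying-computation argument mentioned in the introduction: given the $\mathbf{PSPACE}$ true order $f$, one defines $h(n)=$ the largest $k\le n$ such that the $\mathbf{PSPACE}$-machine for $f$, run for only $\mathrm{poly}(n)$ \emph{time} steps, has certified $f(k)\ge n$; monotonicity and unboundedness are inherited, $h\le f$ holds because certification is sound, and $h$ together with $\mathrm{Inv}_h$ are now polynomial-time by construction. Verifying that this $h$ is still unbounded (i.e. that the time budget eventually suffices for every target) is the delicate step and would use that $f\in\mathbf{PSPACE}$ means each value $f(k)$ is computed within space $\mathrm{poly}(k)$ and hence within time $2^{\mathrm{poly}(k)}$, combined with choosing the polynomial time bound in $h$ generously (e.g. as a fixed polynomial in $n$ that, for each fixed $k$, eventually exceeds $2^{\mathrm{poly}(k)}$). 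I would present (b) along these lines, invoking Claim~\ref{inv-inv} to package the inverse bookkeeping and citing Remark~\ref{controlling} to move freely between polynomial functions and $\mathbf{PSPACE}$ functions as controlling data.
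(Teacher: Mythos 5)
Your proposal does not close either part. For (a) you correctly locate the obstacle: with any strictly increasing $g\ge\mathrm{Inv}_f$ one only gets $\mathrm{Inv}_g(n)\le\mathrm{Inv}_{\mathrm{Inv}_f}(n)=f(n-1)+1$, which exceeds $f(n)$ wherever $f$ is flat. But none of your repairs works. Replacing $f$ by $n\mapsto f(2n)$ is illegitimate: the claim asks for $\mathrm{Inv}_g\le f$ for the \emph{given} $f$, and domination by the larger order $f(2\cdot)\ge f$ is strictly weaker (and even that bound fails when $f$ is flat on $[n-1,2n]$). Your final fix --- replace $f$ by a strictly increasing majorant $F$ --- yields only $\mathrm{Inv}_g(n)\le F(n-1)+1$, and on a flat stretch of $f$ of length $L$ ending at $n$ any strictly increasing $F\ge f$ satisfies $F(n-1)\ge f(n)+L-1$, so the target $f(n)$ is missed by an arbitrarily large amount. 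The idea you are missing is to shift \emph{and subtract}: put $f'(i)=f(i+1)\dotminus 1$. Then $\mathrm{Inv}_{f'}(n)=\mathrm{Inv}_f(n+1)\dotminus 1$ is in \,\C, so $f'$ is again a true \,\C\ order, and choosing $g$ strictly increasing in \,\C\ with $g\ge\mathrm{Inv}_{f'}$ (e.g.\ $g(n+1)=\max\{\mathrm{Inv}_{f'}(n+1),g(n)+1\}$) gives, via Claim~\ref{inv-inv}(a), $\mathrm{Inv}_g(i)\le\mathrm{Inv}_{\mathrm{Inv}_{f'}}(i)=f'(i-1)+1=(f(i)\dotminus 1)+1=f(i)$ for every $i$ with $f(i)\ge 1$, hence almost everywhere.

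For (b) you stall precisely where a one-line observation finishes the proof: because integers are under unary representation and the output of a space-bounded machine is counted in its space, every $g:\n\to\n$ in \,\pspace\ is \emph{polynomially bounded}, say $g(n)\le p(n)=dn^k+b$ with $p$ strictly increasing. Your assertion that ``$g$ may not be polynomially bounded'' is the false step. Granting the bound, $p\ge g\ge\mathrm{Inv}_{f'}$ gives $\mathrm{Inv}_p\le\mathrm{Inv}_g\le f$ a.e., and $\mathrm{Inv}_p$ is a true \,\pol\ order by Claim~\ref{inv-inv}(b); no slow-down or padding construction is needed. The certification scheme you sketch in its place is never carried out (in particular the unboundedness of the time-truncated order is asserted, not proved), so as written (b) is not established.
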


\begin{proof}\hfill
\begin{enumerate}[label=\({\alph*}]
\item \,Let $f$ be a true \,\C\, order. We consider the function $f'$ defined by \ $f'(i)=f(i+1)\dotminus 1$. \,Then $f'$ is an order in \,\C. \,Also for $n>0$, we have:

\centerline{$\text{Inv}_{f'}(n)\,=\,\text{least }k\ (f(k+1)-1\geq n)\,=\,\text{Inv}_f(n+1)\dotminus 1$.}
Hence $\text{Inv}_{f'}$ is a true \,\C\, order.
\\Let us define inductively the function $g$:
\\$\left( \begin{array}{l}
g(0)=\text{inv}_{f'}(0)=0\\
g(n+1)=\max \{\text{Inv}_{f'}(n+1),g(n)+1\}.
\end{array}\right.$
\\ For all our classes \,\C, \,$g$ is in \,\C, \,it is strictly increasing and satisfies \,$g\geq \text{Inv}_{f'}$. Hence by definition of $\text{Inv}, \ \text{Inv}_g\leq \text{Inv}_{\text{Inv}_{f'}}$.
\\Now by Claim~\ref{inv-inv}, for $i\geq 1, \ \text{Inv}_g(i)\,\leq\, f'(i-1)+1\,=\,(f(i)\dotminus 1)+1$.
\\Let $i_0$ be least such that $i_0\geq 1$ and \,$f(i_0)>0$. Then for any $i\geq i_0,\ \,\text{Inv}_g(i)\leq f(i)$.
\\

\item Let $f$ be a true \,\pspace\, order.\\ We define $f'$ and $g$ from $f$ as in (a). The function \,$g$\, is thus in \,\pspace\, and \,$g\geq\text{Inv}_{f'}$. Now since integers are under unary representation, there must exist $d,b,k\in\n^*$ such that for any $n\in\n,\ g(n)\leq dn^k+b$.
Let $p(n)=dn^k+b$. Then \,$p$\, is strictly increasing and $\,p\geq g\geq \text{Inv}_{f'}$.
We deduce as above \ $\text{Inv}_p(n)\leq f(n)$ \ a.e.
\\By Claim~\ref{inv-inv}(b), \,$\text{Inv}_p$ \,is a true \,\pol\, order.\qedhere
\end{enumerate}\end{proof}

\noindent Following the terminology of~\cite{lu1} in the polynomial time context, we set:

\begin{defi}\label{approx}
Let $D$ be some class of functions. A martingale $d:\words\to\reel^+$ \,is $D$-approximable if there exists $F:\words\times\n\to\q_2$ \,in $D$ such that for any $i\in\n,\ \Vert d(x)-F(x,i)\Vert\,\leq\, 2^{-i}$.
\end{defi}

\noindent The following type of results has already been obtained (~\cite{amboter,julu,mayo}...).
\begin{lem}\label{q2appro}
Let $g$ be some time-constructible function. If \,$V:\words\to\reel^+$ is a martingale which is $\text{FDTIME(g(n))}$-approximable, then there exists a $\q_2$-valued martingale $d$ in\linebreak $\text{FDTIME(ng(2n+4))}$ \,such that for any $x\in\words,\ V(x)\leq d(x)\leq V(x)+2$.
\end{lem}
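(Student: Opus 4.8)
The goal is to turn a real-valued martingale $V$ that is approximable within $\text{FDTIME}(g(n))$ into a genuine $\q_2$-valued martingale $d$ that dominates $V$ (up to an additive constant $2$) and is computable in time $\text{FDTIME}(ng(2n+4))$. The plan is to use the approximating function $F$ at a precision that depends on the length of the input string, rounding the resulting dyadic rationals downward, and then repairing the martingale equality by defining $d$ inductively along the tree.

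First I would set, for $x\in\words$ with $|x|=n$, a target precision parameter like $i=2n+4$ and let $a(x)$ be $F(x,2|x|+4)$ rounded down to a suitably coarse dyadic rational (say a multiple of $2^{-(|x|+2)}$), so that $a(x)\le V(x)$ and $V(x)-a(x)$ is small, bounded by something like $2^{-(n+1)}$. The point of rounding down is that $a$ underestimates $V$, which will let the eventual $d$ stay above $V$. This $a$ is computable in time $\text{FDTIME}(ng(2n+4))$ because evaluating $F$ at precision $2|x|+4$ on an input coded in size $\Theta(|x|)$ costs $\text{FDTIME}(g(2n+4))$ up to the polynomial overhead of reading and rounding the output whose bit-length is $O(n)$; the extra factor of $n$ absorbs this overhead and the accumulation along a path, which is where the $ng(2n+4)$ bound comes from.

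Next I would define $d$ by recursion on $|x|$ so as to force the martingale identity $d(x0)+d(x1)=2d(x)$ while keeping $d$ close to $a$ (hence above $V$). Concretely, set $d(\lambda)$ to be $a(\lambda)$ bumped up by a fixed amount, and at the inductive step, given $d(x)$, choose $d(x0)$ to be (roughly) $a(x0)$ rounded into the same dyadic grid but adjusted so that $d(x0)\le 2d(x)$ and $d(x0)\ge a(x0)$, then define $d(x1)=2d(x)-d(x0)$; one checks $d(x1)\ge a(x1)$ using the martingale inequality $V(x0)+V(x1)\le 2V(x)$ (actually equality for $V$) together with the fact that the errors $|d(x)-a(x)|$, $|a(x)-V(x)|$ are all controlled by geometric-in-$|x|$ bounds that sum to the constant $2$. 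This is the standard telescoping-error argument (see~\cite{amboter,julu,mayo}); the additive slack of $2$ comes from summing $\sum_{k}2^{-k}$-type error terms over the levels of the tree.

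The main obstacle is bookkeeping the error propagation so that two things hold simultaneously: $d$ stays a dyadic rational of bounded denominator (so it is genuinely $\q_2$-valued and cheaply computable, not a rational with exploding denominator), and $d$ never drops below $V$ despite the fact that $d(x1)$ is defined by subtraction and could in principle be pushed down by an unlucky sequence of roundings. Making the rounding grid at level $n$ exactly fine enough — fine enough that $a(x)$ is a faithful underestimate, coarse enough that $d$ has small denominators and the inductive definition of $d(x0)$ has a legal value in the grid — is the delicate point, and I would handle it by fixing the grid spacing at level $n$ to be $2^{-(n+c)}$ for a small constant $c$ and verifying the induction invariant $V(x)\le d(x)\le V(x)+2-2^{-|x|}$ (or a similar sharp form) carries through. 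Once that invariant is in place, the time bound and the conclusion $V(x)\le d(x)\le V(x)+2$ follow immediately.
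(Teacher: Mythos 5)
The paper does not actually prove this lemma --- it states ``we omit the argument (one can adapt~\cite[7.3.8]{nies} or look up the above references)'' --- so the comparison is with the standard construction from those references. Your overall architecture is that construction: evaluate the approximating function $F$ at a precision growing with $|x|$, round onto a coarse dyadic grid so that $d$ is genuinely $\q_2$-valued with controlled denominators, and restore the martingale identity by induction on the tree, with the additive constant $2$ absorbing a summable sequence of per-level errors. Two calibration remarks. First, since integers are in unary the input $(x,i)$ to $F$ has size $|x|+i$, so the precision should be $i=|x|+4$ (accuracy $2^{-(|x|+4)}$, which sums over levels to $1/8$); that is what produces the $g(2n+4)$ in the statement, whereas your choice $i=2|x|+4$ yields $ng(3n+4)$. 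Second, you do not need $a(x)\le V(x)$, and rounding $F(x,i)$ down does not guarantee it anyway (since $F$ may overestimate $V$ by $2^{-i}$); all that matters is $\Vert a(x)-V(x)\Vert\le\delta_{|x|}$ with $\sum_n\delta_n$ small.

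The genuine gap is in your inductive step. Write $e(y)=d(y)-V(y)$. Since both $d$ and $V$ satisfy the martingale identity, necessarily $e(x0)+e(x1)=2e(x)$. Your rule sets $d(x0)$ to be (roughly) $a(x0)\approx V(x0)$, i.e.\ $e(x0)\approx 0$, and therefore forces $e(x1)\approx 2e(x)$: the excess doubles on the $1$-child. Along the path $1^n$ this gives $d(1^n)-V(1^n)\approx 2^n e(\lambda)$, so the upper bound $d\le V+2$ fails after finitely many levels no matter how the grid is tuned; and if you instead start with zero slack at the root to avoid this, the rounding errors push $d$ below $V$. So the danger is the opposite of the one you flagged: it is not that the subtraction drives $d(x1)$ too low, but that it drives it too high. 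The repair is the standard one: carry the parent's excess to \emph{both} children by setting $d(xi)=d(x)+\frac{1}{2}\bigl(a(xi)-a(x(1-i))\bigr)$ for $i\in\{0,1\}$. This makes $d(x0)+d(x1)=2d(x)$ automatic, keeps $d(x)$ a dyadic rational with denominator $2^{|x|+O(1)}$ computable by summing $|x|$ increments each costing one evaluation of $F$ (whence the factor $n$ in the time bound), and gives $e(xi)=e(x)+\frac{1}{2}\bigl((a(xi)-V(xi))-(a(x(1-i))-V(x(1-i)))\bigr)$, hence $\Vert e(xi)-e(x)\Vert\le\delta_{|x|+1}$. Starting from $d(\lambda)=a(\lambda)+1$, the excess then stays in $(1/2,3/2)\subset(0,2)$, which is the claimed $V\le d\le V+2$.
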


\noindent We omit the argument (one can adapt~\cite[7.3.8]{nies} or look up the above references).
\\To study  the relations between ML tests and the martingale\,-\,order conditions in the Schnorr subrecursive framework, we shall resort to the following notion from measure theory:

\begin{defi}\label{meas-condi}
Given a measurable subset $A$ of $\cantor$ and $x\in\words$, the conditional measure \,$\mu(A|x)$ is the quotient \,$\frac{\mu(A\cap[x])}{\mu([x])}\,=\,2^{|x|}\mu(A\cap[x])$.
\end{defi}

\noindent Classically, the function \,$d:x\mapsto\mu(A|x)$ \,is a martingale. 

\begin{prop}\label{ml-mart}\hfill
\begin{enumerate}[label=\({\alph*}]
\item Let \,\C\ \,be one of our time-complexity classes. If \,$(G_n)_{n\in\n}$\, is an ML-\,\C\,-S test, then there exist an \,\expo(\C)\,-approximable martingale $B$ and a true \,\C\, order $h$ such that

\centerline{for any $\,\xi\in\cantor $, \  $\xi\in\bigcap_{n\in\n}G_n\ \Rightarrow\ B(\xi\restriction i)\geq 2^{h(i)}$ \ \ i.o.}
\item Given an ML-\,\pspace\,-S test \,$(G_n)_{n\in\n}$,\, there exist a \,\pspace\,-approximable martingale $B$ and a true \,\pspace\, order $h$ satisfying the above implication for any $\xi\in\cantor$.
\end{enumerate}
\end{prop}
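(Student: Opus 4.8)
The plan is to turn the test into a single martingale built from (normalised) conditional measures of the generating cylinders — essentially the martingale attached to the measure machine produced in Proposition~\ref{prop-Ml-meas} — and to read the order $h$ off the controlling function by inversion. First I would normalise exactly as in the proof of Proposition~\ref{prop-Ml-meas}(a): passing from $(G_n)$ to $(G_{2n})$ and from $f$ to $n\mapsto f(2n)$, and invoking Claim~\ref{claim-pf-quadra} and Remark~\ref{rk strict increase}, I may assume $G_n=[X_n]$ with $X_n$ prefix-free, $\mu(G_n)\le 2^{-2n}$, and $f$ strictly increasing in \C\ with $\mu([X_n])-\mu([X^N_{n,f(n+i)}])\le 2^{-i}$ for all $n,i$. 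For each $(m,x)$ with $x\in X_m$ put $c_{m,x}=2^{-m}\mu([x])/\mu([X_m])$, so that $\sum_{x\in X_m}c_{m,x}=2^{-m}$ by prefix-freeness, and attach the partial martingale $b_{m,x}(z)=c_{m,x}\,2^{|z|}\mu([x]\cap[z])/\mu([x])$ (it starts with capital $c_{m,x}$ and bets everything towards $[x]$). Set $B=\sum_{(m,x)\in X}b_{m,x}$. Then $B$ is a martingale with $B(\emptyset)=\sum_m 2^{-m}\le 1$, and for $\xi\in G_m$, writing $x$ for the $X_m$-prefix of $\xi$, the function $b_{m,x}$ is nondecreasing along $\xi$, equal to $c_{m,x}2^{i}$ at every depth $i\le|x|$ — that is, at least $2^{\,i-m}\,2^{-|x|}/\mu([X_m])$ — and equal to $c_{m,x}/\mu([x])$ from depth $|x|$ on.

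Next I would check that $B$ is \expo(\C)-approximable, by the same budget analysis as in Proposition~\ref{prop-Ml-meas}(a). To approximate $B(z)$ to within $2^{-j}$ one truncates the sum to pairs $(m,x)$ with $m\le|z|+j+O(1)$ and $|x|\le|z|$, and for each surviving $m$ replaces $[X_m]$ by the clopen set $[X^N_{m,f(m+O(|z|+j))}]$, the error being absorbed by the controlling-function inequality; evaluating these clopen sets amounts to simulating $N$ for $f(\cdot)$ steps on inputs of size $O(|z|+j)$, which costs $2^{O(f(|z|+j))}$, hence lands in \expo(\C)\ because $f=O(g)$ for some $g\in F_C$.

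It remains to produce the true \C-order $h$. For $\xi\in\bigcap_n G_n$ write $x_n^\xi$ for the $X_n$-prefix of $\xi$ and $\ell_n=|x_n^\xi|$; then $\ell_n\ge 2n$, and the controlling-function inequality (with $i=\ell_n+1$) shows that $x_n^\xi\in X^N_{n,f(n+\ell_n+1)}$, i.e. each $x_n^\xi$ is seen fast relative to its own length. Collecting the lower bounds on the $b_{n,x_n^\xi}$ from the first paragraph gives, for every $i$,
\[
B(\xi\restriction i)\ \ge\ \sum_{n:\ell_n\le i}\frac{2^{-n}}{\mu([X_n])}\ +\ 2^{i}\sum_{n:\ell_n>i}\frac{2^{-n}\,2^{-\ell_n}}{\mu([X_n])}\,.
\]
The plan is to show that for infinitely many $i$ one of the two sums is $\ge 2^{h(i)}$, where $h$ is a fixed true \C-order obtained from $f$ by inversion; a natural first guess is $h(i)=\lfloor\tfrac13\,\text{Inv}_f(i)\rfloor$, which is a true \C-order by Claim~\ref{inv-inv}(b). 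Since $\{\xi\in G_n:\ell_n>f(2n)\}$ has measure $\le 2^{-n}$ (controlling function with $i=n$), on the event that $\ell_n\le f(2n)$ for infinitely many $n$ the first sum reaches $2^{\Omega(\text{Inv}_f(i))}$ infinitely often, which settles those $\xi$.

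The main obstacle is the complementary class of sequences, namely those whose $X_n$-prefix lies very deep for all large $n$ — e.g. $X_n=\{0^{L_n}\}$ with $L_n$ outgrowing every function in \C, which is a genuine ML-\C-S test: here one must verify that the second, ``not-yet-reached'' sum $2^{i}\sum_{n:\ell_n>i}2^{-n}2^{-\ell_n}/\mu([X_n])$ still beats $2^{h(i)}$ infinitely often. This works when $\mu([X_n])$ is comparable to $2^{-\ell_n}$ (the $2^{-\ell_n}$ cancels and the sum is $\gtrsim 2^{\,i-n_0(i)}$ with $n_0(i)$ controlled by $f$), and in general requires the further observation that the set of such $\xi$ is itself an ML-\C-S test — generated, after a prefix-free replacement, by $\bigcup_{n\ge N}\{x\in X_n:|x|>f(2n)\}$, with controlling function $N+i\mapsto f(O(N+i))\in\C$ and measure bound $2^{-N+1}$ — so that the whole construction can be iterated on it and the resulting martingales combined with geometrically decreasing weights. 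Arranging this combination so that a \emph{single} true \C-order $h$ still works at every level, and so that the combined martingale stays \expo(\C)-approximable, is the delicate heart of the argument. Finally, part (b) for \pspace\ follows by the identical argument with the space-bounded approximations $M^{\mathrm{space}}_t$ and $X^{N,\mathrm{space}}_{n,t}$ in place of the time-bounded ones, as in Propositions~\ref{prop-Ml-meas} and~\ref{prop-comp}.
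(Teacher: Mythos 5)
Your base construction has a genuine gap, and it is exactly the one you flag at the end: the martingale $B=\sum_{(m,x)}b_{m,x}$, which gives the generator $x\in X_m$ the weight $2^{-m}\mu([x])/\mu([X_m])$, provably cannot satisfy the conclusion for \emph{any} true \,\C\,-order, so the entire proof rests on the iteration you only sketch. Concretely, let $L:\n\to\n$ be time-constructible and dominate every function in \,\C, and let $X_n$ consist of about $2^{n}$ strings of length $3n$ all beginning with $1$, together with the single string $0^{L_n}$. This is a legitimate ML-\,\C\,-S test (the long string contributes only $2^{-L_n}$ to the measure, so it need only be enumerated by stage roughly $f(n+L_n)$, which the controlling inequality permits), and $\xi=0^\infty\in\bigcap_nG_n$. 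Along $\xi$ only the generators $0^{L_m}$ survive in your sum, and both of your terms evaluate to about $2^{\mathrm{Inv}_L(i)}$, so $B(\xi\restriction i)\leq 2^{\mathrm{Inv}_L(i)+O(1)}$ for \emph{all} large $i$. Any true \,\C\,-order $h$ with $h\leq \mathrm{Inv}_L+O(1)$ a.e.\ would have $\mathrm{Inv}_h$ growing at least as fast as $L$, contradicting $\mathrm{Inv}_h\in$\,\C. So no admissible $h$ works even i.o. The repair you propose (peel off the "deep" part as a new ML-\,\C\,-S test and recurse) does not close the argument: the sub-test can again have deep generators, the recursion has no visible termination, and you give no mechanism by which a single order $h$ and a single \,\expo(\C)\, approximation bound survive the infinite combination --- which is precisely the part you yourself call the unresolved heart of the proof.

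The paper sidesteps all of this with a differently weighted martingale: $B(x)=\sum_{n,k}2^{k}\mu(C^k_n\mid x)$ where $C^k_n=[X_n\setminus X^M_{n,g(k)}]$ and $g(k)=f(5k)$. The weight $2^{k}$ is attached to the enumeration \emph{stage}, not to the test level, and the payoff uses only the trivial fact that a string of length $i$ cannot belong to $X^M_{n,t}$ for $t<i$: if $\xi\restriction i_n\in X_n$ and $k_n=\mathrm{Inv}_g(i_n)-1$, then $g(k_n)<i_n$ forces $\xi\restriction i_n\in X_n\setminus X^M_{n,g(k_n)}$, so the single component $C^{k_n}_n$ has conditional measure $1$ at $\xi\restriction i_n$ and already contributes $2^{k_n}=2^{h(i_n)}$ with $h=\mathrm{Inv}_g\dotminus 1$. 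A deeper prefix certifies a later stage $k_n$, so the payoff scales with the depth automatically and no case distinction or iteration is needed; convergence of $B(\emptyset)$ (using $\mu(C^k_n)\leq 2^{-3k}$ for $n\leq 2k$ and $\mu(C^k_n)\leq 2^{-2n}$ otherwise) and the \,\expo(\C)\,-approximation then go through by truncation and clopen replacement much as you outline. If you want to salvage your own decomposition, you would have to make the weight of $b_{m,x}$ grow with $|x|$ --- which is essentially what the stage-indexed sets $C^k_n$ accomplish.
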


\begin{proof}\hfill
\begin{enumerate}[label=\({\alph*}]
\item
 \ Let \,\C\, be one of our time-complexity class and let $(G_n)_{n\in\n}$ be an ML-\,\C\,-S test. We can assume there are \,$X\subseteq\n\times\words$, \,a machine $M$ and a function $f:\n\to\n$ strictly increasing in \,\C\, such that
\begin{itemize}[label=$-$]
\item\ $X\,=\,\big\{(n,x)\in\n\times\words:\,M(n,x)\downarrow\big\},$
\item $\text{ for }n\in\n,\ \,X_n\,=\,\big\{x\in\words:(n,x)\in X\big\} \text{ \,is prefix-free,} G_n=[X_n] \text{ and } \mu(G_n)\leq 2^{-2n}$,
\item$\text{  for }n,i\in\n,\ \mu([X_n])-\mu([X^M_{n,f(n+i)}])\,\leq\, 2^{-i}.$
\end{itemize}

\begin{defi}\label{def-mart}\hfill
\begin{itemize}
\item let $g:\n\to\n$ \ be such that, for $i\in\n ,\ g(i)=f(5i)$.
\item For $n,k\in\n$, \,let \,$C^k_n\,=\,[X_n\setminus X^M_{n,g(k)}]$.
\item Define \,$B:\words\to\reel ^+$ as follows:\, for $x\in\words, \ \,B(x)\,=\,\sum_{n,k} 2^k\mu(C^k_n|x)$.
\end{itemize}
\end{defi}

\noindent For $n,k\in\n$, since $X_n$ is prefix-free, we have \,$[X_n\setminus X^M_{n,g(k)}]\,=\,[X_n]\setminus[X^M_{n,g(k)}]$. Let us note that for $n\leq 2k,$
\begin{align}
 \mu(C^k_n)\,=\,\mu([X_n])-\mu([X^M_{n,g(k)}])\, \leq\, \mu([X_n])-\mu([X^M_{n,f(n+3k)}])\, \leq \, 2^{-3k}.\label{bound-below-2k}
 \end{align}
Hence for any $n\in\n$,\begin{align}
 \mu(C^k_n)\ \leq\ \mu(G_n)\ \leq \ 2^{-2n}\label{bound-2n}.
\end{align}

\begin{clm}\label{Bmart}
$B$ is a martingale.
\end{clm}

\proof \esp We only need to check that $B(\emptyset)$ is finite. One has
\begin{align*}
B(\emptyset) &= \sum_k\sum_{n\leq 2k} 2^k\mu(C^k_n) +\sum_k\sum_{n>2k}2^k\mu(C^k_n)\\
&\leq \sum_k\sum_{n\leq 2k} 2^k2^{-3k} +\sum_k\sum_{n>2k}2^k2^{-n} \text{\esp\ (by(~\ref{bound-below-2k}) and\,(\!~\ref{bound-2n}))}\\
&\leq \sum_k 2k2^{-2k} + \sum_k 2^k2^{-2k}\\
&\leq 6 \text{\ \ \ \ (by \,$k\leq 2^k$).}\rlap{\hbox to 280 pt{\hfill\qEd}}
\end{align*}

\noindent Now let us show the following:


\begin{clm}\label{intersec}
The function $h=\text{Inv}_g\dotminus 1$ is a true \,\C\, order, and 
for any $\xi\in\cantor$,

 \centerline{$\xi\in\bigcap_{n\in\n}G_n \ \Rightarrow\ B(\xi\restriction i)\geq 2^{h(i)}$ \ i.o.}
\end{clm}

\begin{proof} \ If $h=\text{Inv}_g\dotminus 1$, then for $n\geq 1, \ \text{Inv}_h(n)=\text{Inv}_{\text{Inv}_g}(n+1)$. Hence by Claim~\ref{inv-inv}(b), $h$ is a true \,\C\ order. Now for the second statement, we provide an argument for our precise definition of the martingale $B$, but the line of proof should be as in Schnorr's original demonstration.

Let $\xi\in\bigcap_{n\in\n}G_n$. For any $n\in\n$, there must exist \,$i_n\in\n$ \,such that \,$\xi\restriction i_n\in X_n$. The inclusion \,$[\xi\restriction i_n]\subseteq G_n$ \,implies \ $2^{-i_{n}}\leq \mu(G_n)\leq 2^{-2n}$. Hence for any $n\in\n, \ i_n\geq n$. We check 
\begin{align}
\text{for any }n>g(0), \ B(\xi\restriction i_n)\geq 2^{h(i_n)}.
\end{align}
We now assume $n>g(0)$, then also $i_n>g(0)$ and hence $\text{Inv}_g(i_n)\geq 1$. By definition of $\text{Inv}$, $g(\text{Inv}_g(i_n)-1)<i_n.$ Let $k_n=h(i_n)=\text{Inv}_g(i_n)-1$. Since $g(k_n)<i_n$, necessarily \,$\xi\restriction i_n\notin X^M_{n,g(k_n)}$. We know \,$\xi\restriction i_n\in X_n$, hence necessarily \,$\xi\restriction i_n\in X_n\setminus X^M_{n,g(k_n)}$. \,This gives \,$\mu(C^{k_n}_n|\xi\restriction i_n)=1$.
\\Therefore  \ \,$B(\xi\restriction i_n)=\sum_{k,n}2^k\mu(C^k_n|\xi\restriction i_n)\geq 2^{k_n}=2^{h(i_n)}$.
\end{proof}

\noindent Our  goal now is to find a function \,$F:\words\times\n\to\q_2$ \,which \,\expo(\C)\,-approximates \,$B$. Before developing the whole proof, we summarize the argument:
\begin{itemize}
\item for \,$x\in\words,\ \,B(x)$ \,is an infinite sum of real terms \,$2^k\mu(C^k_n|x)$. We first truncate \,$B(x)$ to obtain a finite sum \,$B_2(x,i)\,=\,\Sigma_{(n,k)\in A(|x|,i)}2^k\mu(C^k_n|x)$, 
for \,$A(|x|,i)\ \text{finite}\,\subseteq\,\n\times\n$ \,appropriately bounded so that
 \,$B_2(x,i)$ approximates \,$B(x)$\, within \,$2^{-(i-1)}$.
\item The second step consists in replacing, in the finite sum \,$B_2(x,i)\,$, each term\\ \,$2^k\mu(C^k_n|x)\,=\,2^k2^{|x|}\mu([X_n\setminus X^M_{n,g(k)}]\cap[x])$ \ by the term \ $2^k2^{|x|}\mu([X^M_{n,\bar g(|x|,i)}\setminus X^M_{n,g(k)}]\cap[x])$\, for an adequate function \,$\bar g$ \,in \,\C.\\
 By switching from measures of open sets \ $[X_n\setminus X^M_{n,g(k)}]\cap[x]$ \ to measures of clopen sets \ $[X^M_{n,\bar g(|x|,i)}\setminus X^M_{n,g(k)}]\cap[x]$, \,we shall obtain a sum \ $F(x,i)\,=\,\Sigma_{(n,k)\in A(|x|,i)}2^k\mu([X^M_{n,\bar g(|x|,i)}\setminus X^M_{n,g(k)}]\cap[x])$ \ in \,$\q_2$ \ with the expected approximation properties.
\item Moreover the bounds on \ $A(|x|,i)$ (polynomial in $(|x|,i)$) \,and the fact that \,$\bar g$\, belongs to \,\C\, will imply that \ $F:\words\times\n\to\q_2$ \ belongs to \,\expo(\C).
\end{itemize}

\noindent Hence let \ $B(x)\,=\,\sum_{k,n}2^k\mu(C^k_n|x)$.
\begin{enumerate}[label=(\textbf{\roman*})]
\item
  We first bind the integer $n$ in the sum.
  Having set \,$N(r,i,k)=r+i+2k+3$, we consider

\centerline{$B_1(x,i)\,=\,\sum_k\sum_{n\leq N(|x|,i,k)}2^k\mu(C^k_n|x).$}
Then \begin{align}\label{prem}
\begin{split}
0\,\leq\,B(x)-B_1(x,i)&=\,\sum_k\sum_{n}2^k\mu(C^k_n|x)-\sum_k\sum_{n\leq N(|x|,i,k)}2^k\mu(C^k_n|x),\\
&=\,\sum_k\sum_{n>N(|x|,i,k)}2^k\mu(C^k_n|x),\\
&\leq\,\sum_k 2^k2^{|x|}\sum_{n>N(|x|,i,k)}\mu(G_n),\\
&\leq\,2^{|x|}\sum_k 2^k2^{-N(|x|,i,k)} \text{ \espa (by \,$\mu(G_n)\leq 2^{-n}$)}\\
&\leq\,2^{-i-2}.
\end{split}
\end{align}
 \item \,Let us deal now with $k$.
 We set \ $K(r,i)=r+i+4$ \,and consider

 \centerline{$B_2(x,i)\,=\,\sum_{k\leq K(|x|,i)}\sum_{n\leq N(|x|,i,k)}2^k\mu(C^k_n|x).$}
\noindent Our goal is to show \ $0\,\leq\,B_1(x,i)-B_2(x,i)\,\leq\,2^{-i-2}$.
\\Let us set \ $\varepsilon_k(x,i)\,=\,\sum_{n\leq N(|x|,i,k)}2^k\mu(C^k_n|x)$.

\begin{clm}\label{epsi}
If \,$k>|x|+i+4$, \,then \,$\varepsilon_k(x,i)\,\leq 2^{|x|-k+2}$.
\end{clm}

\begin{proof} \hfill \begin{itemize}[label=$-$]
\item 
 If $n\leq 2k$, then by (\!~\ref{bound-below-2k}), \,$\mu(C^k_n\cap [x])\leq 2^{-3k}$.
\item If $n>2k$, then by (\!~\ref{bound-2n}), \,$\mu(C^k_n\cap [x])\leq 2^{-2n}<2^{-4k}$.
\end{itemize}
Hence for all $k,n\in\n,\ \mu(C^k_n\cap [x])\leq 2^{-3k}$.

The hypothesis \,$k\,>\,|x|+i+4$\, implies \,$N(|x|,i,k)\,=\,|x|+i+2k+3\,\leq\, 3k$.
 Hence we deduce that for any $k\in\n$, \ 
$\varepsilon_k(x,i)\,\leq\,N(|x|,i,k)\,2^k2^{-3k}2^{|x|}\,\leq \, 3k2^{-2k+|x|}\,\leq\,2^{2+|x|-k}$.
\end{proof}

\noindent Therefore \begin{align}\label{deux}
0\,\leq\, B_1(x,i)-B_2(x,i)\,=\,\sum_{k>|x|+i+4}
\varepsilon_k(x,i)\ \leq\,
\sum_{k>|x|+i+4}
2^{2+|x|-k}
\,\leq\, 2^{-i-2}.
\end{align}
\item We now define the function $\bar{g}:\n\times\n\to\n$ used to switch from the open sets \,$C^k_n$ \,to clopen sets.

\begin{defi}\label{defiF}
Let \,$\bar g(r,i)=f(9r+9i+32)\, $ and  \,$D_n^{k,i,r}=[X_{n,\bar{g}(r,i)}\setminus X_{n,g(k)}] $. We set

\espa $F(x,i)\,=\,\sum_{k\leq K(|x|,i)}\sum_{n\leq N(|x|,i,k)}2^k\mu(D^{k,i,|x|}_n|x)$.
\end{defi}

Our goal is to obtain \
$0\,\leq\,B_2(x,i)-F(x,i)\,\leq \,2^{-i-1}$.
\begin{itemize}
\item We first check that for $k\leq K(r,i)$, \ $\bar{g}(r,i)\geq g(k)$:

$\bar{g}(r,i)\,=\,f(9r+9i+32)\,\geq \,f(5(r+i+4))\,=\,f(5K(r,i))\,\geq\,f(5k)\,=\,g(k)$.
\item Let us check that for \,$k\leq K(r,i)$\, and \,$n\leq N(r,i,k)$, \ $\bar{g}(r,i)\geq f(n+(2r+2i+4k+6))$.
 One computes \,$N(r,i,K(r,i))=3r+3i+11$. Hence 

  $\left(\begin{array}{l}k\leq K(r,i)\\
 n\leq N(r,i,k)
 \end{array}\right. \Rightarrow \left(\begin{array}{l}
 k\leq r+i+4\\
 n\leq 3r+3i+11
 \end{array}\right.
 \Rightarrow \ n+(2r+2i+4k+5)\,\leq\, 9r+9i+32$.

 One deduces \begin{align}\label{delta-f}
 f(n+(2r+2i+4k+5))\leq f(9r+9i+32)\leq \bar{g}(r,i).
 \end{align}
\item For $k\leq K(|x|,i)$ and \,$n\leq N(|x|,i,k)$, let us set \ $\Delta(n,k,x,i)=\mu(C^k_n\cap [x]) -\mu(D^{k,i,|x|}_n\cap[x])$. One obtains:
\begin{align}\label{delta}
\begin{split}
\Delta(n,k,x,i)&=\,\mu((C^k_n\setminus D^{k,i,|x|}_n)\cap[x])\\
&\leq\,\mu(C^k_n\setminus D_n^{k,i,|x|})\\
&\leq\,\mu([(X_n\setminus X^M_{n,g(k)})\setminus(X^M_{n,\bar{g}(|x|,i)}\setminus X^M_{n,g(k)})])\\
&\leq\,\mu([X_n\setminus X^M_{n,\bar{g}(|x|,i)}])\\
&\leq\,\mu([X_n\setminus X^M_{n,f(n+(2|x|+2i+4k+5))}]) \text{\esp(by (\!~\ref{delta-f}))}\\
&\leq\,2^{-(2|x|+2i+4k+5)}
\end{split}
\end{align}
\end{itemize}
We derive:

\begin{align}\label{trois}
\begin{split}
0\leq B_2(x,i)-F(x,i)
&=\,\sum_{k\leq K(|x|,i)}\sum_{n\leq N(|x|,i,k)}2^k2^{|x|}\mu((C^k_n\setminus D^{k,i,|x|}_n)\cap[x])\\
&=\,\sum_{k\leq K(|x|,i)}\sum_{n\leq N(|x|,i,k)}2^{k+|x|}\Delta(n,k,x,i)\\
&\leq \  \sum_{k\leq K(|x|,i)}\sum_{n\leq N(|x|,i,k)}2^{k+|x|}2^{-(2|x|+2i+4k+5)} \text{\espa (by\ (\!~\ref{delta}))}\\
&\leq \  \sum_{k\leq K(|x|,i)}N(|x|,i,k)2^{-(|x|+2i+3k+5)} \\
&\leq \ \ \sum_{k}2^{|x|+i+2k+3}2^{-(|x|+2i+3k+5)} \\
&\leq \ \ 2^{-i-1}.
\end{split}
\end{align}
\end{enumerate}
Combining (\!~\ref{prem}),\,(\!~\ref{deux}) and (\!~\ref{trois}), we deduce:

\begin{clm}\label{B-approx}
For any $x\in\words$ and $i\in\n$, \ $0\,\leq\,B(x)-F(x,i)\,\leq\, 2^{-i}$.
\end{clm}

\noindent It now remains to evaluate the complexity of $F$. Clearly if $f$ is in \,\C, \,then $\bar{g}$ is also in \,\C\,.
Given $k,n,i,x$, to compute \,$\mu(D_n^{k,i,|x|}\cap [x])$, one has to check for each finite sequence $z$ of length $\leq\bar{g}(|x|,i)$ compatible with $x$ whether it belongs to \,$X^M_{n,\bar{g}(|x|,i)}\setminus X^M_{n,g(k)}$ and to compare it with $x$.
\begin{itemize}[label=$-$]
\item
\ If there is  $z\in X^M_{n,\bar{g}(|x|,i)}\setminus X^M_{n,g(k)}$ \,such that \,$z\curl x$, then $\mu(D_n^{k,i,|x|}\cap [x])$ \,is \,$2^{-|x|}$.
\item  \ Otherwise one adds all $2^{-|z|},$ \,for \,$x\cur z$ \,with \,$z$ \,in\, $X^M_{n,\bar{g}(|x|,i)}\setminus X^M_{n,g(k)}$ \,to obtain the measure. All intermediate (and the final) sums can be coded, for some constant $d$, by strings of length \,$\leq d(\bar{g}(|x|,i)$ (the total measure is $\leq 1$).
\end{itemize}

Hence the function \ $\varphi:\begin{array}[t]{lcl}\n^{3}\times\words&\!\!\to\!\! &\q_2\vspace{-1mm}\\
(k,n,i,x)&\!\!\mapsto \!\!&\mu(D_n^{k,i,|x|}\cap [x])
\end{array}$ \,is in \,\expo(\C).

Since for some constant $c$, our bounds $K(|x|,i)$ and $N(|x|,i,k)$ are \,$\leq c(|x|+i)$, we deduce 

\begin{clm}\label{F-expc}
$F$ belongs to the class  \,\expo(\C).
\end{clm}

\noindent We can now conclude the proof of (a) by Claims~\ref{Bmart},
~\ref{intersec}, ~\ref{B-approx} and ~\ref{F-expc}.\\

\item Let now  \,\C\, be the class \,\pspace .\\
We assume the sequence \,$([X_n])_{n\in\n}$\, associated with a machine $M$, satisfies for some (strictly increasing) polynomial function $f$:
 \begin{itemize}[label=$-$]
\item $\mu([X_n])\leq 2^{-2n},$
\item $\mu([X_n])-\mu([X_{n,f(n+i)}^{M,\text{space}}])\leq 2^{-i}.$
\end{itemize}
%
One defines $g$, $h$ and $\bar g$ from $f$  as in (a). They are all polynomial functions. $C^k_n$ is now the open set \,$[X_n\setminus X_{n,g(k)}^{M,\text{space}}]$ \,and one also considers the martingale \ $B(x)=\sum_{n,k} 2^k\mu(C^k_n\,|x)$.
\\(Since $|x|>s$ implies \,$x\notin X_{n,s}^{M,\text{space}}$) one obtains the equivalent of Claim~\ref{intersec}: $h$ is a true \,\pspace\, order and for for any $\xi\in\cantor$,

 \centerline{$\xi\in\bigcap_{n\in\n}G_n \ \Rightarrow\ B(\xi\restriction i)\geq 2^{h(i)}$ \ i.o.}
 \noindent We set \,$D_n^{k,i,r}\!=[X_{n,\bar{g}(r,i)}^{M,\text{space}}\setminus X_{n,g(k)}^{M,\text{space}}]$ and define the approximating function $F$ as in (a) with the $D_n^{k,i,r}$'s.

To compute \,$\mu(D_n^{k,i,|x|}\cap [x])$, \,we also enumerate all sequences $z$ of length $\leq \bar{g}(|x|,i)$ according to $\curllex$, and (using  counters) we test whether $z\in D_n^{k,i,|x|}$.  But this time we only keep track of the last browsed sequence $z$ and of the partial measure \ $\mu([(X_{n,\bar{g}(|x|,i)}\setminus X_{n,g(k)})\cap\{t: t\curllex s\}]\cap[x])$.
As in (a), we know this partial measure is coded by a string of length $\+O (\bar{g}(|x|,i)))$.
\\Hence  \ $\varphi:\begin{array}[t]{lcl}\n^{3}\times\words&\!\!\to\!\! &\q_2\vspace{-1mm}\\
(k,n,i,x)&\!\!\mapsto \!\!&\mu(D_n^{k,i,|x|}\cap [x])
\end{array}$ \,is in \,\pspace.
\\ We deduce that $F$ is in \,\pspace\, and conclude the proof of (b) as above.\qedhere\end{enumerate}\end{proof}


\noindent By Proposition~\ref{ml-mart} and lemma~\ref{q2appro} about approximation, we derive:

\begin{prop}\label{ML-mart-q2}\hfill
\begin{enumerate}[label=\({\alph*}]
\item Let \,\C\ \,be one of our time-complexity classes. If \,$(G_n)_{n\in\n}$\, is an ML-\,\C\,-S test, then there
 exist a martingale $d:\words\to\q _2$ \,in \,\expo(\C)\, and
  a true \,\C\, order $h$ such that

\centerline{for any $\,\xi\in\cantor $, \  $\xi\in\bigcap_{n\in\n}G_n\ \Rightarrow\ B(\xi\restriction i)\geq 2^{h(i)}$ \ \ i.o.}
\item  Given an ML-\,\pspace\,-S test \,$(G_n)_{n\in\n}$,\, there exist a  martingale $d:\words\to\q _2$ in \,\pspace\, and a true \,\pspace\, order $h$ satisfying the above implication for any $\xi\in\cantor$.
\end{enumerate}
\end{prop}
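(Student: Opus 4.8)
The plan is to derive this directly from the two results just established: Proposition~\ref{ml-mart}, which produces a real-valued martingale, and Lemma~\ref{q2appro}, which converts an approximable real-valued martingale into a genuine $\q_2$-valued one of comparable complexity. So there should be essentially no new work beyond bookkeeping.

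First I would apply Proposition~\ref{ml-mart} to the given ML-\,\C\,-S test $(G_n)_{n\in\n}$: in case (a) this returns an \,\expo(\C)\,-approximable martingale $B:\words\to\reel^+$ together with a true \,\C\, order $h$ such that $\xi\in\bigcap_{n\in\n}G_n$ implies $B(\xi\restriction i)\geq 2^{h(i)}$ i.o.; case (b) is the same statement with \,\pspace\, throughout. Fix an approximating function $F:\words\times\n\to\q_2$ witnessing the \,\expo(\C)\,-approximability of $B$; unwinding the definition of \,\expo(\C)\,, we may take $F\in\text{FDTIME}(2^{g_0(n)})$ for some $g_0\in F_C$, with $2^{g_0}$ time-constructible.

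Next I would apply Lemma~\ref{q2appro} with the time-constructible bound $2^{g_0(n)}$. This yields a $\q_2$-valued martingale $d$ computable in time $\mathcal O\!\big(n\,2^{g_0(2n+4)}\big)$ and satisfying $B(x)\leq d(x)\leq B(x)+2$ for every $x\in\words$. In particular $d\geq B$, so the infinitely-often lower bound transfers without change: $\xi\in\bigcap_{n\in\n}G_n$ implies $d(\xi\restriction i)\geq 2^{h(i)}$ i.o., with the same true \,\C\, order $h$. It then remains to locate $d$ in the right class. Since $n\,2^{g_0(2n+4)}\leq 2^{\,g_0(2n+4)+\lceil\log n\rceil+1}$, the closure properties assumed of $F_C$ give $g_1\in F_C$ with $d\in\text{FDTIME}(2^{g_1(n)})$, i.e.\ $d$ lies in \,\expo(\C); this settles (a). For (b) I would run the same argument with the space-bounded version of Lemma~\ref{q2appro} (obtained by the adaptation of~\cite[7.3.8]{nies} mentioned above), so that a \,\pspace\,-approximable martingale produces a $\q_2$-valued martingale computable in polynomial space, using that \,\pspace\, is closed under the polynomial substitution involved.

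I do not expect any genuine obstacle here; the only point demanding care is the complexity bookkeeping in the last step — checking that the size blow-up $g\mapsto n\,g(2n+4)$ built into Lemma~\ref{q2appro} keeps us inside \,\expo(\C)\, (respectively inside \,\pspace), which is precisely what the closure hypotheses on $F_C$ are there to guarantee. The one ingredient not fully written out in the excerpt is the space analogue of Lemma~\ref{q2appro}, but this is routine and parallels the cited construction.
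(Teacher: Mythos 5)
Your proposal is correct and is exactly the paper's argument: the paper derives Proposition~\ref{ML-mart-q2} in one line by combining Proposition~\ref{ml-mart} with Lemma~\ref{q2appro}, and your write-up simply makes explicit the transfer of the lower bound via $d\geq B$ and the complexity bookkeeping $n\,2^{g_0(2n+4)}\in$ \expo(\C). No gaps.
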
 

\noindent The opposite direction - from martingales to Martin-L\"of tests - is easier and can be obtained through a simple adaptation of existing arguments.

\begin{prop}\label{mart-to-ML}\hfill
\begin{enumerate}[label=\({\alph*}]
\item  Let \,\C\ \,be one of our time-complexity classes. From a martingale \,$d:\words\to\q_2$ \,in\, \C\, and a true \,\C\, order $g$, one can define an ML-\,\C\,-S test \,$(G_n)_{n\in\n}$ \,such that 

\centerline{for any $\,\xi\in\cantor $, \  $ d(\xi\restriction i)\geq 2^{g(i)}\text{ \ i.o.}\ \ \Rightarrow\ \  \xi\in\bigcap_{n\in\n}G_n$.}
 \item Given a \,\pspace\, martingale $d$ and a true \,\pspace\, order $g$, one can construct an ML-\,\pspace\,-S test \,$(G_n)_{n\in\n}$,\, satisfying for any $\xi\in\cantor$, \,the above implication.
 \end{enumerate}
\end{prop}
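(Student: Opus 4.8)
The plan is to carry over the classical construction that converts a successful martingale into a Martin\nobreakdash-L\"of test, keeping careful track of where the order $g$ enters, because the hypothesis $\text{Inv}_g\in\C$ is exactly what will force the controlling function to stay inside $\C$. After dividing $d$ by a fixed power of $2$ (which keeps it $\q_2$\nobreakdash-valued and in $\C$) and replacing $g$ by $g\dotminus c$ for the corresponding constant $c$ (which keeps it a true $\C$\nobreakdash-order), one may assume $d(\emptyset)\leq 1$. Put $m_k=\text{Inv}_g(k)$, so that $g(n)\geq k$ for $n\geq m_k$, and let $X_k$ be the prefix\nobreakdash-free set of ``first threshold crossings above level $m_k$'': the strings $x$ with $|x|\geq m_k$, $d(x)\geq 2^{g(|x|)}$, and $d(y)<2^{g(|y|)}$ for every $y\prec x$ with $m_k\leq|y|<|x|$. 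Set $G_k=[X_k]$.

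Two points are then immediate. For the measure bound, every $\xi\in[X_k]$ satisfies $d(\xi\restriction n)\geq 2^{g(n)}\geq 2^{g(m_k)}\geq 2^{k}$ for some $n\geq m_k$, so $[X_k]$ lies inside $\{\xi:\sup_n d(\xi\restriction n)\geq 2^{k}\}$, which by the classical martingale maximal inequality has measure $\leq d(\emptyset)2^{-k}\leq 2^{-k}$; hence $\mu(G_k)\leq 2^{-k}$. For the covering property, if $d(\xi\restriction i)\geq 2^{g(i)}$ for infinitely many $i$ then, given $k$, pick such an $i\geq m_k$; the branch of $\xi$ must cross the threshold $i\mapsto 2^{g(i)}$ at some length in $[m_k,i]$, so $\xi\in[X_k]=G_k$. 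Thus $\bigcap_n G_n$ contains every $\xi$ with $d(\xi\restriction i)\geq 2^{g(i)}$ i.o.

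The substantive step is to exhibit a controlling function in $\C$. Let $M$ be the machine which on input $(k,x)$ first computes $m_k=\text{Inv}_g(k)$, then the finitely many values $d(x\restriction j)$ and $g(j)$ for $m_k\leq j\leq|x|$, and halts precisely when $x\in X_k$. Since $d$, $g$ and $\text{Inv}_g$ all lie in $\C$, and since integers being in unary the outputs of $g$ have size bounded by a $\C$\nobreakdash-function of the input, there is a strictly increasing $c\in\C$ such that $M$ halts on each $(k,x)\in\text{dom}(M)$ within $c(k+|x|)$ steps. Consequently, with $f(n)=c\big(\text{Inv}_g(n)+n\big)$, by step $f(k+i)$ the machine $M$ has already enumerated every element of $X_k$ of length $\leq\text{Inv}_g(i)$; the remaining elements of $X_k$, those of length $>\text{Inv}_g(i)$, generate, again by the maximal inequality now applied at level $2^{g(\text{Inv}_g(i))}\geq 2^{i}$, an open set of measure $\leq 2^{-i}$. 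Hence $\mu([X_k])-\mu([X^M_{k,f(k+i)}])\leq 2^{-i}$ for all $k,i$, and $f$ is strictly increasing and lies in $\C$ because our classes are closed under composition and $\text{Inv}_g\in\C$ by the hypothesis that $g$ is a \emph{true} $\C$\nobreakdash-order. This is the place where the argument would collapse for an arbitrary $\C$\nobreakdash-order, and it is precisely what Definition~\ref{def-true-order}(ii) is designed to make work; I expect it to be the only genuinely delicate point.

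For part (b) one repeats the construction verbatim with the space\nobreakdash-bounded approximations $X^{N,\text{space}}_{k,t}$ and $M^{\text{space}}_t$ replacing the time\nobreakdash-bounded ones: $M$ now decides $x\in X_k$ in space polynomial in $k+|x|$ (as $d$, $g$ and $\text{Inv}_g$ are in \pspace), and the controlling function $f(n)=p\big(\text{Inv}_g(n)+n\big)$ for a suitable polynomial $p$ works by the space analogue of the length estimate above. The only remaining chores are the bookkeeping with the unary coding of integers (so that the sizes of $m_k$, $g(j)$ and $2^{g(j)}$ are all controlled) and the routine verification that $f$ is strictly increasing and time\nobreakdash- (resp. space\nobreakdash-) constructible; neither is an obstacle.
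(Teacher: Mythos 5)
Your proposal is correct and follows essentially the same route as the paper: you take the prefix-free set of minimal threshold crossings (the paper's set $Y_n$ of $\curl$-minimal elements of $\{x:d(x)\geq 2^{g(|x|)}\geq 2^n\}$, which coincides with your first-crossing set above length $\text{Inv}_g(n)$), bound its measure by the martingale maximal inequality, and obtain the controlling function $f(n)=c(\text{Inv}_g(n)+n)$ exactly as the paper's $\bar g(r)=f(\text{Inv}_g(r)+r)$, splitting $X_k$ into the short strings captured by time $f(k+i)$ and the long strings of total generated measure $\leq 2^{-i}$. You correctly identify the role of $\text{Inv}_g\in\C$ as the crux, and your explicit normalization of $d(\emptyset)\leq 1$ fills in a step the paper leaves implicit.
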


\begin{proof} Let $d$ and $g$ be respectively the martingale and the order. We can assume $d(\emptyset)\leq 1$.

Let us consider the set \ $X=\big\{(n,x)\in\n\times\words:\,d(x)\geq 2^{g(|x|)}\geq 2^n\big\}$, \,and for $n\in\n$, the associated set \ $X_n=\big\{x\in\words : (n,x)\in X\big\}$.

Setting for $n\in\n, \ G_n=[X_n]$, one obtains by classical arguments (~\cite[7.1.7]{dohi},~\cite[7.3.3]{nies}) that \,$(G_n)_{n\in\n}$ \,is an ML test and that for any $\xi\in\cantor$,

\centerline{$ d(\xi\restriction i)\geq 2^{g(i)}\text{ \ i.o.}\ \ \Rightarrow\ \  \xi\in\bigcap_{n\in\n}G_n$.}

We must now check that \,$(G_n)_{n\in\n}$\, is an ML-\,\C\,-S test. We shall make explicit the use of $\text{Inv}$ in Schnorr's original proof and add a few lines to define the controlling function.

\begin{enumerate}[label=\({\alph*}]
\item Let \,\C\ \,be one of our time-complexity classes. To deal with prefix-free sets, we consider minimal strings for $\curl$:
\begin{align*}
\ \llap{\text{Let\ }}Y&=\ \big\{(n,x)\in X :\forall y \,\cur x\  (n,y)\notin X\big\}\\
&=\ \big\{ (n,x)\in\n\times\words : d(x)\geq 2^{g(|x|)}\geq 2^n\ \land\ \forall y\,\cur x\ \neg (d(y)\geq 2^{g(|y|)}\geq 2^n)\big\}.\qquad
\end{align*}
For some constant $d$ and some function $f'\in F_C$, membership of $(n,x)$ in $Y$ can be checked in time $\leq f'(|x|+n)$. Hence one can define a machine  $M$ and a strictly increasing  function $f\in$\ \C\, such that for any $(n,x)\in\n\times\words$,
\begin{align*}
(n,x)\in Y &\ \Leftrightarrow \ M(n,x)\downarrow\\
&\ \Leftrightarrow\ (n,x)\in \text{dom}(M_{f(|x|+n)}).
\end{align*}
Setting, for $n\in\n$, \ $Y_n=\{x\in\words :(n,x)\in Y\}$. Then $[X_n]=[Y_n]$ and  we get, for $n\in\n,$\linebreak$ \,x\in\words ,$\\\centerline{$  x\in Y_n\ \Leftrightarrow\ x\in Y^M_{n,f(|x|+n)}$.}

For all $n,k\in\n$, one has:
\begin{align}\label{primo}
Y_n\cap\{x\in\words :|x|\leq \text{Inv}_g(k)\}\ \subseteq\ Y_{n,f(\text{Inv}_g(k)+n)}.
\end{align}

For any $m\in\n,\ g(\text{Inv}_g(m))\geq m$, hence
\begin{align}\label{deusio}
\begin{split}
Y_n\cap\big\{x\in\words :|x|>\text{Inv}_g(k)\big\}&\subseteq\ \{x\in\words : d(x)\geq 2^{g(|x|)}\geq 2^{g(\text{Inv}_g(k))}\}\\
&\subseteq\ \big\{x\in\words : d(x)\geq 2^k\big\}.
\end{split}
\end{align}

Now by~\cite[6.3.3]{dohi},~\cite[7.1.9]{nies}, since \  $d(\emptyset)\leq 1$,
\begin{align}\label{tertio}
\mu 
( [\{x\in\words
 : d(x)\geq 2^k \}])
\leq 2^{-k}.
\end{align}

Let us set $\bar{g}(r)=f(\text{Inv}_g(r)+r)$, for $r\in\n$. Since $g$ is a true \,\C\, order, $\bar{g}$ is in \,\C.
 For any $k,n\in\n,\ \bar{g}(k+n)\geq f(\text{Inv}_g(k)+n)$, we  thus deduce:
\begin{align*}
\mu([Y_n])-\mu([Y_{n,\bar{g}(k+n)}])&=\ \mu([Y_n\setminus Y_{n,\bar{g}(k+n)}]\\
&\leq\ \mu([Y_n\cap\big\{x\in\words :|x|>\text{Inv}_g(k)\big\}])\text{\esp (by (\!~\ref{primo}))}\\
&\leq\ 2^{-k}\esp\text{ (by (\!~\ref{deusio}) and
(\!~\ref{tertio}))}.
\end{align*}
Therefore \,$(G_n)_{n\in\n}\,=\,([X_n])_{n\in\n}\,=\,([Y_n])_{n\in\n}$ \,is an ML-\,\C\,-S test.\\

\item Let now \,\C\, be \,\pspace\, and let $d$  in \,\pspace\ \,and $g$ a true  \,\pspace\, order satisfy the implication. The set $\,Y\,$ being defined as above, membership  in $Y$ can be tested in polynomial space (either by using a time-counter in binary or by Claim~\ref{bound-inv}). Hence there exist a machine $M$ and a polynomial function \,$f$\, so that for any $(n,x)\in\n\times\words$,
\begin{align*}
(n,x)\in Y &\ \Leftrightarrow \ M(n,x)\downarrow\\
&\ \Leftrightarrow\ (n,x)\in \text{dom}(M_{f(|x|+n)}^{\text{space}}).
\end{align*}
One then concludes as above (with $M_s^{\text{space}}$ and $Y^{M,\text{space}}_{n,s}$\, instead of $M_s$ and $Y^M_{n,s}$).\qedhere
\end{enumerate}\end{proof}

\begin{rem}\label{2^g} 
Replacing the classical requirement \,``$d(\xi\restriction_i)\geq g(i)\ \,i.o$" \,by \,``$d(\xi\restriction_i)\geq 2^{g(i)}\ \,i.o$" \,allowed us to consider the function \,$\bar g(r)=f(Inv_g(r)+r)$ \,instead of the function \,$f(Inv_g(2^r)+r)$. This was essential to get \,Proposition~\ref{mart-to-ML} \ for \ \C\,=\,\pol, \expo\ \,or \,\pspace.
\end{rem}

\noindent Combining Propositions~\ref{ML-mart-q2} and~\ref{mart-to-ML}, we deduce:

\begin{thm}\label{thm-ml-mart1}\hfill
\begin{enumerate}[label=\({\alph*}]
\item Let \,\C\, be one of our time-complexity classes. \ Then for any \,$\xi\in\cantor$,
\\$\xi$\,is\,martingale-\,\expo(\C)-S\,random\,$\Rightarrow\xi$\,is\,ML-\,\C-S random\,$\Rightarrow\xi$\,is\,martingale-\,\C\,-S random.
\item Let \,\C\, be the class \,\pspace, \toexp \ \,or \primrec. \, Then for any sequence $\xi\in\cantor$,

\centerline{$\xi$ is  $\text{martingale-\,\C\,-S random}\ \Leftrightarrow\ \xi$ is $ \text{ML-\,\C\,-S random}.$}
\end{enumerate}
\end{thm}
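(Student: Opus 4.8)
The plan is to assemble the theorem from the two preceding propositions together with the chain of results already established, so that part~(a) is essentially a direct composition and part~(b) is a consequence of the special closure property $\hbox{\C} = \hbox{\expo(\C)}$ enjoyed by \pspace, \toexp\ and \primrec. For part~(a), first I would take $\xi$ martingale-\,\expo(\C)-S random and show it is ML-\,\C-S random: given an ML-\,\C-S test $(G_n)_{n\in\n}$, Proposition~\ref{ML-mart-q2}(a) produces a martingale $d:\words\to\q_2$ in \,\expo(\C)\, and a true \,\C\, order $h$ with $\xi\in\bigcap_n G_n \Rightarrow d(\xi\restriction i)\geq 2^{h(i)}$ i.o. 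But a true \,\C\, order is in particular a true \,\expo(\C)\, order (the class inclusion $\hbox{\C}\subseteq\hbox{\expo(\C)}$ preserves both the order and its inverse), so martingale-\,\expo(\C)-S randomness of $\xi$ forces $d(\xi\restriction i)< 2^{h(i)}$ a.e., hence $\xi\notin\bigcap_n G_n$; as $(G_n)$ was arbitrary, $\xi$ passes all ML-\,\C-S tests. For the second implication of part~(a), I would take $\xi$ ML-\,\C-S random and a martingale $d$ in \,\C\, together with a true \,\C\, order $g$, apply Proposition~\ref{mart-to-ML}(a) to get an ML-\,\C-S test $(G_n)_{n\in\n}$ with $d(\xi\restriction i)\geq 2^{g(i)}$ i.o. $\Rightarrow \xi\in\bigcap_n G_n$; contraposing against ML-\,\C-S randomness gives $d(\xi\restriction i)< 2^{g(i)}$ a.e., which is exactly martingale-\,\C-S randomness.

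For part~(b), the point is that when \,\C\, is \pspace, \toexp\ or \primrec, one has $\hbox{\C}=\hbox{\expo(\C)}$ (stated in the paper just after Definition of \expc; for \pspace\ it follows from Proposition~\ref{ml-mart}(b) and~\ref{mart-to-ML}(b) which are phrased directly for \pspace). So in part~(a) the two implications collapse to
\[
\xi\ \text{martingale-\,\C-S random}\ \Rightarrow\ \xi\ \text{ML-\,\C-S random}\ \Rightarrow\ \xi\ \text{martingale-\,\C-S random},
\]
giving the equivalence. For \pspace\ specifically I would instead quote Propositions~\ref{ML-mart-q2}(b) and~\ref{mart-to-ML}(b) verbatim, which already have \pspace\ on both sides, and run the same contraposition argument; the remark that for \pspace\ one may use \pol\ orders or \pspace\ orders interchangeably (Claim~\ref{bound-inv}(b)) ensures there is no discrepancy in the order classes.

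I expect no genuine obstacle here: the theorem is a bookkeeping corollary of the two propositions. The only point requiring a word of care is the monotonicity step ``a true \,\C\, order is a true \,\expo(\C)\, order,'' which is immediate from $\hbox{\C}\subseteq\hbox{\expo(\C)}$ applied to both $h$ and $\text{Inv}_h$, and the dual observation that a martingale in \,\C\, lies in \,\expo(\C)\,; and, for part~(b), correctly invoking $\hbox{\C}=\hbox{\expo(\C)}$ for the three named classes rather than for a general time-complexity class. Everything else is the contrapositive of an implication already proved.
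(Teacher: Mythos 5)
Your proposal is correct and follows exactly the paper's route: the paper proves Theorem~\ref{thm-ml-mart1} precisely by combining Propositions~\ref{ML-mart-q2} and~\ref{mart-to-ML} via the contrapositions you describe, and part~(b) by the identity \C\ = \expo(\C)\ for the three named classes (with the \pspace\ case read off from the (b)-clauses of those propositions). The one point you rightly flag — that a true \,\C\, order is a true \,\expo(\C)\, order, so the martingale-\,\expo(\C)-S hypothesis applies to the pair produced by Proposition~\ref{ML-mart-q2}(a) — is handled correctly.
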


\noindent Finally merging theorems~\ref{thm1} and~\ref{thm-ml-mart1}, we obtain:

\begin{thm}
Let \,\C\, be the class \,\pspace,\ \toexp\ \,or \,\primrec. \ Then for any $\xi\in\cantor$,\\
$ \xi$ is$\text{ ML-\,\C\,-S random} \Leftrightarrow \xi$ is$ \text{ Kolmogorov -\,\C\,-S random}\Leftrightarrow \xi$ is$\text{ martingale-\,\C\,-S random}$.
\end{thm}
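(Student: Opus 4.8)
The plan is to deduce this statement immediately from Theorems~\ref{thm1} and~\ref{thm-ml-mart1}, both of which contain a part (b) dedicated precisely to the three classes \,\pspace,\ \toexp\ and \primrec. First I would invoke Theorem~\ref{thm1}(b) to obtain, for such a \,\C\, and any $\xi\in\cantor$, the biconditional ``$\xi$ is ML-\,\C\,-S random $\Leftrightarrow$ $\xi$ is Kolmogorov-\,\C\,-S random''. Then I would invoke Theorem~\ref{thm-ml-mart1}(b) to obtain ``$\xi$ is martingale-\,\C\,-S random $\Leftrightarrow$ $\xi$ is ML-\,\C\,-S random''. Chaining these two biconditionals through the common middle notion ``ML-\,\C\,-S random'' yields the asserted three-way equivalence, with no further argument required.

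The one point worth making explicit is \emph{why} exactly these three classes admit such a clean collapse. All the reductions underlying the general statements lose precisely one exponential: an ML-\,\C\,-S test yields a measure \,\expo(\C)\,-computable machine (Proposition~\ref{prop-Ml-meas}) and an \,\expo(\C)\,-approximable $\q_2$-valued martingale (Propositions~\ref{ml-mart} and~\ref{ML-mart-q2}), whereas the converse passages are exponential-free (Propositions~\ref{prop-comp} and~\ref{mart-to-ML}). For \,\C\ =\ \toexp\ or \,\C\ =\ \primrec\ one has the fixed-point identity \,\C\ =\ \expo(\C)\ recorded after the definition of \,\expo(\C), so the exponential loss is invisible and even part (a) of each theorem already closes the loop; for \,\C\ =\ \pspace\ the space-bounded variants of all the constructions remain inside \,\pspace\ (a polynomial space bound absorbs an exponential time bound), which is exactly what the ``(b)'' clauses of the two theorems isolate. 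In every case the work has already been done upstream.

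I do not anticipate a genuine obstacle here: the combinatorial and complexity-theoretic effort was expended in Propositions~\ref{prop-Ml-meas}--\ref{mart-to-ML} and then packaged into Theorems~\ref{thm1}(b) and~\ref{thm-ml-mart1}(b). The only mild point of vigilance is to quote the ``(b)'' versions, valid for exactly these classes, rather than the ``(a)'' versions, which for a general time-complexity class merely relate \,\C\,-level notions to \,\expo(\C)\,-level notions and so give a chain of implications but no equivalence. Once that is observed, the proof is a two-line concatenation of biconditionals.
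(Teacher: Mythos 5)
Your proposal is correct and matches the paper's own proof, which obtains the theorem precisely by merging Theorem~\ref{thm1}(b) and Theorem~\ref{thm-ml-mart1}(b) and chaining the two biconditionals through ML-\,\C\,-S randomness. Your added explanation of why the collapse works for exactly these classes (the identity \,\C\,=\,\expo(\C)\ for \toexp\ and \primrec, and the dedicated space-bounded constructions for \pspace) is also consistent with the paper's setup.
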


\section{Separation.}

To justify our previous work, we now differentiate Schnorr randomness from (martingale)-\,\primrec -S randomness by appealing to the following notion:

\begin{defi}\label{lutz-random}
Let \,\C\, be a class of functions. An infinite binary sequence $\xi$ is \,\C\ random if any martingale $d:\words\to\q_2$ in  \,\C\, fails on $\xi$ (i.e. the set $\big\{d(\xi\restriction i):i\in\n\big\}$ \,is bounded in $\n$).
\end{defi}

If \,\C\, is the class of computable functions, then the above notion is ``computable randomness" (Schnorr).
 When \,\C\, is \,\pol, this is ``p-randomness" (Lutz).
 
The following argument was suggested by one of the (anonymous) referees:
let \,$A:\n\to\n$\, be a computable function dominating all primitive recursive functions. Then by classical results (see ~\cite[3.9.7]{ambomayo} for a precise statement), there exists a computable sequence \,$\xi\in\cantor$\, which is FDTIME($A(n)$) random. \ 
$\xi$\, is thus \,\primrec\,-random and hence (martingale)-\,\primrec -S random. Therefore Schnorr's randomness is strictly stronger than (martingale)\,\primrec -S randomness.

Our original argument was based on the notion of \,ML-\primrec -S randomness. The method - though laborious - could be extended to prove the assertions:
\begin{itemize}[label=$-$]
\item
 ML-\primrec -S randomness\,$>$\,ML-\toexp\,-S randomness\, \ and 
 \item  ML-\toexp\,-S randomness\,$>$\,ML-\expo -S randomness.
 \end{itemize}
But we could not deduce that \,ML-\,\expo\,-S randomness is strictly stronger than \,ML-\,\pol\,-S randomness whereas this can be done for the martingale corresponding notion of S randomness. The martingale approach seems better suited to low time-complexity classes, we shall thus build on the important amount of work developed around the notion of martingale in the field of Resource Bounded Randomness.

In this section, we shall restrict ourselves to time-complexity classes and we shall focus on the martingale definition of S-randomness. For such a  class \,\C,  the expression ``martingale-\,\C\,-S random" will be abbreviated to ``\,\C\,-S random".  As \,\C\, rises among our time-complexity classes, the notion of \,\C\,-S randomness gets strictly stronger. We shall  compare our notion of \,\pol\,-S randomness with Lutz notion of p-randomness, Wang's notion of (\pol,\pol)-S randomness and and we shall also contrast the notion of \,\primrec\,-S randomness with the notion of BP-randomness developed by Buss, Cenzer and Remmel.

\subsection{\,\C\,-S randomness\ \ and \,\C\ randomness.}

%

Let us  mention first the work of Wang\\ \cite{wang1999,wang2000} who studied a  version of Schnorr randomness for the class \,\pol\ (termed (\pol,\pol)-S randomness) and proved it to be weaker than the notion of p-randomness~\cite[Thm 8]{wang2000}.

\begin{defi}[~\cite{wang2000}]
Let \,\C\ \,be a class of functions. An infinite sequence $\xi$ is (\C,\C)-S random iff for any martingale $F$ and any order $h$ both in \,\C, \ $F(\xi\restriction i)<h(i)$ \,a.e.
\end{defi}

 His notion is stronger than ours because he allows all orders in \,\C, not restricting to true \,\C\, orders. (If one is not concerned with the status of  the inverse of the order, our condition ``$d(\xi\restriction i)< 2^{h(i)}$ a.e." and the classical one ``$d(\xi\restriction i)< h(i)$ a.e." yield the same notion of randomness  for our classes \,\C).
 A consequence of his definition is that for computable infinite sequences, p-randomness and (\pol,\pol)-S randomness coincide~\cite[Cor. 17]{wang2000}.

 Building on his results and techniques, we shall show that our definition allows more variety inside the set of computable infinite sequences.
 
Let \,\C\,  be one of our time-complexity classes. To separate \C\ randomness from \,\C\,-S randomness inside the set of computable sequences (and to obtain the tableau of the introduction) we shall rely on part (i) of the following proposition; the remaining cases (ii)-(iv) add  precision, showing that the sequence which is \,\C -S random but not \,\C\ random can be taken ``right above \C ". 

%

%
For $\xi\in\cantor$ and $g\in\n^\n$  time-constructible, we say that $\,\xi$ \,belongs to \,\emph{FDTIME}$(g(n))$ \,if the function \,$n\mapsto\xi(n)$\, belongs to \,\emph{FDTIME}$(g(n))$ \,($n$ under unary representation).\\

\begin{prop}~\label{csrandom-not-crandom}\hfill
\begin{enumerate}[label=\({\roman*}]
\item There is a computable \,$\xi\in\cantor$\, which is \,\primrec\,-S random but not \,\pol\ random.
\item There exists \,$\xi\in\cantor$ \,in \,FDTIME$(n^{\lceil\log\, n\rceil})$ \,which is \,\pol\,-S random but not \,\pol\ random.
\item There is \,$\xi\in\cantor$ \,in \,FDTIME$(2^{n^{\lceil\log\, n\rceil}})$ \,which is \,\expo\,-S random but not \,\pol\ random.
\item There is \,$\xi\in\cantor$ \,in \,FDTIME$(T(\lceil\log\, n\rceil,n))$ \,which is \,\toexp\,-S random but not \,\pol\ random.
\end{enumerate}
\end{prop}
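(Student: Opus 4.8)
The plan is to establish all four items by a single construction scheme: build $\xi$ so that it defeats one specific fast martingale (witnessing failure of \pol\ randomness) while remaining \,\C\,-S random for the appropriate \,\C. Since \,\C\,-S randomness only requires beating \,\C\,-martingales against \emph{true \,\C\,-orders}, and a martingale $d$ in \pol\ can be defeated by a sequence computable slightly above polynomial time, the key tension is between (a) making $\xi$ computable in the stated (small) time bound, and (b) ensuring $\xi$ survives every \,\C\,-martingale/true-order pair. First I would fix, for (ii)--(iv), the universal enumeration of \,\C\,-martingales $d:\words\to\q_2$ in \,\C\, (this is possible since each of \pol, \expo, \toexp\ admits a uniform effective listing of its functions with an associated clock), together with the corresponding true \,\C\,-orders; by Claim~\ref{bound-inv} and Claim~\ref{inv-inv} it suffices to handle orders of the form $\text{Inv}_p$ for $p$ strictly increasing in \,\C, which tames the inverse-of-the-order issue.

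The core is a Schnorr-style ``slow diagonalization'' (this is where Wang's technique from \cite{wang2000} enters): process the countably many pairs $(d_e,h_e)$ in stages; at stage $e$, having committed a finite prefix $\sigma$ of $\xi$, extend $\sigma$ by a long enough block along which one plays the \emph{opposing} strategy to $d_e$ --- i.e.\ at each position follow the bit that does not increase $d_e$ --- until $d_e$ on $\xi$ has been pushed below $2^{h_e(i)}$ and stays there for the block (this is possible because $2^{h_e}$ is unbounded while the savings strategy keeps $d_e$ bounded by $d_e(\sigma)$ along the block), then move to the next pair. Because each $d_e$ is in \,\C, evaluating it along a block of length $\ell$ costs time $\mathcal{O}(\ell\cdot f_e(\ell))$ for some $f_e\in F_C$; the length of the $e$-th block must be chosen large enough to outrun $h_e$, but if we let the blocks grow just fast enough (roughly: the $e$-th block has length comparable to $n$ at the point it is laid down, and we only consult the first $\mathcal{O}(\log n)$ martingales by position $n$), the bit $\xi(n)$ becomes computable in time $T(\lceil\log n\rceil, n)$, $2^{n^{\lceil\log n\rceil}}$, or $n^{\lceil\log n\rceil}$ respectively --- matching items (iv), (iii), (ii). For (i), \,\primrec\,-S randomness only demands beating primitive recursive martingales against true \primrec-orders, and since every primitive recursive function is dominated by a single computable function (indeed by $n\mapsto A(n,n)$), the whole diagonalization can be run \emph{computably} (not primitive recursively) over a computable enumeration of the primitive recursive martingale/order pairs, yielding a computable $\xi$; meanwhile a single fixed \pol-martingale (e.g.\ one betting a small fixed fraction on the blocks' predictable structure) succeeds on $\xi$, so $\xi$ is not \pol\ random.

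To make $\xi$ \emph{not} \pol\ random in (ii)--(iv) as well, I would reserve, interleaved with the diagonalization blocks, short ``predictable'' blocks (say, all zeros) of slowly growing length on which a fixed polynomial-time martingale doubles its capital infinitely often; one checks this fixed martingale is genuinely in \pol\ and that its capital is unbounded along $\xi$, while the predictable blocks are sparse enough not to interfere with the \,\C\,-S randomness argument (the savings strategy against each $d_e$ is only interrupted finitely often before stage $e$ is finished, and a finite interruption costs only a bounded multiplicative factor, absorbed into the ``a.e.'' in the definition). The main obstacle I anticipate is the simultaneous bookkeeping in (iv): one must verify that laying down the $e$-th diagonalization block --- whose length is dictated by how fast the true \toexp-order $h_e$ grows, hence possibly tower-exponential --- still leaves $\xi(n)$ computable within $T(\lceil\log n\rceil,n)$, which forces a careful schedule relating the stage index $e$, the position $n$ where block $e$ begins, and the clock of $d_e$; getting the quantifiers in ``for each $e$, eventually $d_e(\xi\restriction i)<2^{h_e(i)}$'' to coexist with the global time bound is the delicate point, and is exactly the place where restricting to true \,\C\,-orders (rather than all \,\C\,-orders, as Wang does) buys the extra room needed to push the witness down to \,FDTIME$(n^{\lceil\log n\rceil})$.
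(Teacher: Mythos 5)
There is a genuine gap at the core of your construction: you diagonalize against each pair $(d_e,h_e)$ only \emph{during its own block}, choosing at each position of block $e$ the bit that does not increase $d_e$, and then you ``move to the next pair.'' But \,\C\,-S randomness requires $d_e(\xi\restriction i)<2^{h_e(i)}$ for \emph{all} sufficiently large $i$, including every position in blocks $e+1,e+2,\dots$. Once block $e$ ends, nothing controls $d_e$: it can double at every subsequent position --- in particular on each of your infinitely many ``predictable'' all-zero blocks --- so $d_e(\xi\restriction i)$ may grow like $2^{i-c}$, while $2^{h_e(i)}$ may grow extremely slowly (a true \,\C\,-order can be as slow as an iterated logarithm). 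Your claim that an interruption ``costs only a bounded multiplicative factor, absorbed into the a.e.'' is exactly where this breaks: there are infinitely many interruptions after stage $e$, and their cumulative multiplicative cost on $d_e$ is unbounded; the same objection applies to the all-zero blocks versus every other $d_{e'}$.

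The paper avoids this by never betting against an individual $d_e$: it forms the single weighted-sum martingale $\Phi_C(x)=\sum_e 2^{-e}d_C((e)_0,x)$, replaces it by a $\q_2$-valued martingale $\delta_C$ of controlled complexity (Lemma~\ref{q2appro}), and at every non-forced position picks the bit along which $\delta_C$ does not increase; this controls all $d_e$ simultaneously and forever, since $d_{C,e}\leq 2^e\Phi_C\leq 2^e\delta_C$. The positions where capital may double (the analogue of your predictable blocks, where $\xi(s)$ is forced to $0$ so that a fixed polynomial-time martingale $F$ succeeds) are single bits, and their number $T(\xi\restriction s)$ up to $s$ is calibrated against \emph{all} the orders at once: Case 1 is entered only when $e+T(x)+3<h_e(m_e)$ for every $e\leq T(x)$, which yields $e+T(\xi\restriction s)+2<h_e(s)$ a.e.\ and hence $d_{C,e}(\xi\restriction s)\leq 2^{e}2^{T(\xi\restriction s)+2}<2^{h_e(s)}$ a.e. This simultaneous calibration is precisely the content your sketch defers to ``a careful schedule,'' and without the universal-martingale device it is not available: to repair your argument you would have to play, at every position, against a weighted sum of all martingales indexed so far, at which point you have reconstructed the paper's construction. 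Your preliminary observations --- reducing to orders of the form $\text{Inv}_p$ for $p$ strictly increasing via Claims~\ref{inv-inv} and~\ref{bound-inv}, and running the whole diagonalization merely computably for the \,\primrec\ case --- are correct and match the paper.
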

 
\noindent  Our proof will be based on enumerations of martingales and of true \,\primrec\, orders. We thus propose without proof a few definitions and classical (or easy) facts:\\ 
 \begin{defi}~\label{easy}
 Let \,\C\, be one of our time-complexity classes. We assume \,$G_C:\n\times\words\to\q_2$\,  enumerates all functions $f:\words\to\q_2$ in \,\C\,, and $g_c:\n\to\n$ \,  strictly increasing  is such that \,$G_C\in\text{FDTIME}(g_C(n))$.
\begin{enumerate}[label=\({\alph*}]
\item
  Then there is \,$d_C:\n\times\words\to\q_2$ \,enumerating all martingales $d$ in \,\C\ with $d(\emptyset)\leq 1$  which is such that \,$d_C\in\text{FDTIME}(ng_C(n))$ (see for example~\cite{amboter}).
\item Let us define the martingale \ $\Phi_C(x)=\sum_{e\in\n}2^{-e}d_C((e)_0,x)$. It can be approximated by \,$f_C:\words\times\n\to\q_2$ \,defined by \,$f_C(x,i)=\sum_{e\leq i+|x|}2^{-e}d_C((e)_0,x)$ \,which belongs to \,$\text{FDTIME}(n^2g_C(2n))$.
\item By Lemma~\ref{q2appro}, there is a $\q_2$-valued martingale \,$\delta_C$ \,in \,$\text{FDTIME}(n^3g_C(5n))$ \,such that for any $x\in\words$, \ $\Phi_C(x)\leq \delta_C(x)\leq \Phi_C(x)+2$.
\end{enumerate}
\end{defi} 

\noindent The notation $(e)_0$ or $(e)_1$ refers to the inverses of the polynomial time bijection from $\n\times\n$ onto $\n$.
If \,\C\, is \,\primrec, \,then both $G_C$ and $g_C$ can be taken recursive. Hence in (c), we only assert that $\delta_{\text{\primrec}}$ \,is recursive. In all cases, $g_C$ will be time-constructible. Here are some possible choices for ($G_C$ and) $g_C$:\\

\begin{fact}~\label{fact1} 
One can take\hfill
\begin{enumerate}[label=(\roman*)]
\item $g_{\text{\primrec}}$ \,recursive
\item $g_{\text{\pol}}(n)=n^{\lceil(2/3)log \,n\rceil}$
\item $g_{\text{\expo}}(n)=2^{\lceil(1/2)log\, n\rceil}$
\item $g_{\text{\toexp}}(n)= T(\lceil(1/2)log\, n\rceil,n)$.
\end{enumerate}
\end{fact}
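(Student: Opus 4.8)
The plan is to realise each $G_C$ as a \emph{clocked universal machine} for the class \C\ and to read its running time off the clock together with the simulation overhead. I write the input as $(e,x)$ with $e$ under unary representation, so the input size is $n=e+|x|$ and both $e\le n$ and $|x|\le n$. Recall that each of our classes has the form \C\ $=\ \bigcup_{k}\text{FDTIME}(h_k(n))$ with $h_k$ increasing in $k$ and time-constructible: $h_k(n)=n^k$ for \pol, $h_k(n)=2^{n^k}$ for \expo, and $h_k(n)=T(k,n)$ for \toexp. The target functions of Fact~\ref{fact1} are exactly $g_C(n)=h_{\lceil c\log n\rceil}(n)$, with $c=2/3$ for \pol\ and $c=1/2$ for \expo\ and \toexp.

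First I would fix an efficient universal machine $U$ that simulates $s$ steps of the $j$-th machine $M_j$ on input $x$ within $c_0\,s\,q(j)$ steps, for a fixed constant $c_0$ and polynomial $q$. The only subtle design choice is the decoding of the index: I would decode $e$ into a pair $(j,k)$ by a \emph{redundant} pairing for which the clock parameter stays logarithmic in the index, i.e.\ $k\le\beta\log e$ (for instance, read $k$ off the length of the leading block of $e$ in binary and let the remaining bits code $j$), the redundancy $\beta$ being fixed per class. On input $(e,x)$ the machine $G_C$ then computes the clock value $h_k(|x|)$ (within budget, since $h_k$ is time-constructible), simulates $M_j(x)$ for at most $h_k(|x|)$ steps, and outputs the resulting $\q_2$-value when $M_j$ halts in time with a well-formed output, and $0$ otherwise.

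This $G_C$ enumerates exactly \C: for fixed $e$ the map $x\mapsto G_C(e,x)$ runs in time $\mathcal{O}(h_k(|x|))$, hence lies in $\text{FDTIME}(h_k)$ and a fortiori in \C, while conversely any $f$ in \C\ is computed by some $M_j$ within some $h_k$, so $f=G_C(e,\cdot)$ for the index $e$ coding $(j,k)$. For the running time, simulating and clocking costs at most $c_0\,h_k(|x|)\,q(j)$ plus decoding; since $j\le e\le n$, $|x|\le n$ and $k\le\beta\log n$, this is bounded by $h_{\lceil\beta\log n\rceil}(n)\cdot\text{poly}(n)$, and absorbing the polynomial factor into a slight increase of the parameter (using the smoothness of the $h_k$) gives $G_C\in\text{FDTIME}(g_C(n))$ for the stated $g_C$. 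One may finally replace $g_C$ by $g_C+n$ to make it strictly increasing, and each displayed $g_C$ is visibly time-constructible. Case (i) is degenerate: primitive recursive functions are total, so no clock is needed; I would take the standard recursive enumeration of primitive-recursive definitions and evaluate the $e$-th one on $x$, which by the Lemma characterising \primrec\ as a time-complexity class runs within a recursive bound, so $g_{\primrec}$ is recursive.

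The bookkeeping of the simulation is routine; the one point that needs care is the exact constant $c$ in $g_C(n)=h_{\lceil c\log n\rceil}(n)$. For \expo\ and \toexp\ the clocks grow so fast that the polynomial simulation overhead is swallowed with room to spare, and any $c\le 1/2$ works; the delicate case is \pol, where $h_k(n)=n^k$ is tight, so one must balance the redundancy $\beta$ of the index encoding against the polynomial cost $q(j)$ of the universal simulation to keep the exponent at $(2/3)\log n$ rather than $\log n$. This is precisely the slack later consumed by the factors $n$, $n^2$, $n^3$ and the dilated arguments $2n$, $5n$ appearing in Definition~\ref{easy}(a)--(c).
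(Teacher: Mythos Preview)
Your proposal is correct and follows essentially the same approach as the paper: a clocked universal simulator in which the clock parameter $k$ is logarithmic in the index $e$, so that on input of size $n=e+|x|$ the running time is bounded by $h_{\lceil c\log n\rceil}(n)$ up to polynomial overhead. The only cosmetic difference is that the paper uses the standard pairing $e=\langle (e)_0,(e)_1\rangle$ and extracts $k=\lfloor(1/2)\log(e)_1\rfloor$, simulating $M_{(e)_0}$ on $x$ for $|x|^k$ steps, whereas you propose a bespoke encoding reading $k$ off a leading block; both achieve $k\le\beta\log e$ and the subsequent arithmetic is identical.
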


\begin{proof} We give a few details for (ii), the other cases are very similar.

\noindent We want to define \,$G_{\text{\pol}}:\n\times\words\to\q_2$\, enumerating all polynomial time functions from $\words$ into \,$\q_2$.
Let $M$ be a universal machine which  for some constant $c\in\n$,  simulates the computation of $t(n)$ steps of the machine $M_e$ (with program $e\in\n$) in \ $ce\,t(n)\lceil\log\, t(n)\rceil$ steps.
 We define an algorithm for $G_{\text{\pol}}$:
On input $(e,x)$, one computes $(e)_0,\ (e)_1$ and $\lfloor(1/2)\log (e)_1\rfloor$. Then $M$ simulates $M_{(e)_0}$ on $x$ during \,$|x|^{\lfloor(1/2)\log (e)_1\rfloor}$ steps. If $M_{(e)_0}$ halts, then $G_{\text{\pol}}$ outputs the result of the computation, otherwise it outputs 0.\\
One can check that \,$G_{\text{\pol}}$ \,enumerates all polynomial time functions and that the function \,$G_{\text{\pol}}$\, belongs to \,$\text{FDTIME}(n^{\lceil(2/3)\log n\rceil})$ (the input $(e,x)$ has size $n=e+|x|$).
\end{proof}

Refering to the martingale $\delta_{\text{\C}}$ in Definition~\ref{easy}, one derives from the previous values:\\ 

\begin{fact}~\label{fact2}\hfill
\begin{enumerate}[label=(\roman*)]
\item
 \ $\delta_{\text{\primrec}}$ \ is recursive
\item $\delta_{\text{\pol}}\in\text{FDTIME}(n^{\lceil(3/4)\log\, n\rceil})$
\item $\delta_{\text{\expo}}\in\text{FDTIME}(2^{n^{\lceil(3/4)\log\, n\rceil}})$
\item $\delta_{\text{\toexp}}\in\text{FDTIME}(T(\lceil(3/4)\log\, n\rceil,n)$
\end{enumerate}
\end{fact}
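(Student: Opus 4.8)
The plan is to read the result off from Definition~\ref{easy}(c) --- which gives $\delta_C\in\text{FDTIME}(n^3g_C(5n))$ --- by substituting, for each of the four classes, the explicit function $g_C$ recorded in Fact~\ref{fact1}, simplifying $n^3g_C(5n)$, and checking that it is dominated, for all sufficiently large $n$, by the stated resource bound. For (i) I would simply note that $g_{\text{\primrec}}$ is only recursive (one cannot primitive-recursively enumerate all primitive recursive functions, so the $G_C$ and $g_C$ of Definition~\ref{easy} are merely recursive when \C\ is \primrec); hence $n\mapsto n^3g_{\text{\primrec}}(5n)$ is recursive and so is the machine producing $\delta_{\text{\primrec}}$, which is all that is claimed.

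For (ii)--(iv) the content is that neither the polynomial factor $n^3$ nor the rescaling of the argument by $5$ costs more than the slack between the exponent-constants $2/3$ (resp.\ $1/2$) of Fact~\ref{fact1} and the $3/4$ in the statement. Concretely, $\lceil c\log(5n)\rceil\le c\log n+O(1)$ and, since $5^{\log n}=n^{\log 5}$, one has $(5n)^{c\log n}=n^{c\log n+O(1)}$ up to a polynomial factor; so in case (ii), $n^3g_{\text{\pol}}(5n)=n^3(5n)^{\lceil(2/3)\log(5n)\rceil}\le n^{(2/3)\log n+O(1)}$, and as $(3/4-2/3)\log n=\tfrac1{12}\log n\to\infty$ this is $\le n^{\lceil(3/4)\log n\rceil}$ for large $n$, giving $\delta_{\text{\pol}}\in\text{FDTIME}(n^{\lceil(3/4)\log n\rceil})$. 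Case (iii), with $g_{\text{\expo}}(n)=2^{n^{\lceil(1/2)\log n\rceil}}$, is the same estimate read inside an exponent of $2$: $n^3g_{\text{\expo}}(5n)\le 2^{n^{(1/2)\log n+O(1)}}$, and $\tfrac12(\log n)^2<\tfrac34(\log n)^2$ eventually pushes the exponent of $2$ below $n^{\lceil(3/4)\log n\rceil}$.

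For (iv) I would replace this estimate by the analogous monotonicity of the tower function $T$: for a fixed layer $k$ and constant $a$, $T(k,an)\le T(k+1,n)$ for large $n$ (peel off one $2^{(\cdot)}$ and use $an\le 2^n$ eventually), and $n^3T(k,n)\le T(k+1,n)$ for large $n$; combined with $\lceil(1/2)\log(5n)\rceil+2\le\lceil(3/4)\log n\rceil$ for large $n$, these give $n^3g_{\text{\toexp}}(5n)=n^3T(\lceil(1/2)\log(5n)\rceil,5n)\le T(\lceil(3/4)\log n\rceil,n)$. I do not expect a real obstacle here; the whole matter is this bookkeeping, the only delicate points being the composition-by-$5n$ step above and, in (i), the observation that recursiveness --- not primitive recursiveness --- is the right conclusion for $\delta_{\text{\primrec}}$.
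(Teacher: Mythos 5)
Your proposal is correct and is exactly the derivation the paper intends: the paper states Fact~\ref{fact2} without proof, saying only that "one derives it from the previous values," i.e.\ by plugging the functions $g_C$ of Fact~\ref{fact1} into the bound $n^3g_C(5n)$ of Definition~\ref{easy}(c) and absorbing the factor $n^3$ and the rescaling $n\mapsto 5n$ into the slack between the constants $2/3$ (resp.\ $1/2$) and $3/4$, with case (i) reduced to the observation that $g_{\text{\primrec}}$, and hence $\delta_{\text{\primrec}}$, is merely recursive. Your bookkeeping (including reading the paper's $g_{\text{\expo}}(n)$ as $2^{n^{\lceil(1/2)\log n\rceil}}$, evidently the intended value) checks out.
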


Let \ $(\varphi_e)_{e\in\n}$ \ be the usual effective enumeration of partial recursive functions from (a subset of) $\n$ into $\n$. We obtain the following:\\
\begin{lem}~\label{prim-inv}
There exists a (total) recursive function $l\in\n^\n$ such that \ $(\varphi_{l(e)})_{e\in\n}$ \ is an enumeration of all inverses of strictly increasing primitive recursive functions in \,$\n^\n$.
\end{lem}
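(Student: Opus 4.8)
\emph{Proof idea.} The obstruction is that the set of indices $e$ for which the $e$-th primitive recursive function is strictly increasing is only $\Pi^0_1$, so one cannot simply restrict a standard enumeration to those indices. Instead I would \emph{repair} every primitive recursive function into a strictly increasing one, uniformly, without changing it when it already is strictly increasing. Concretely, fix a standard effective enumeration $(p_e)_{e\in\n}$ of the primitive recursive functions $\n\to\n$, for which the evaluation map $(e,n)\mapsto p_e(n)$ is total recursive and, for each fixed $e$, $p_e$ is primitive recursive (see~\cite{odi1,odi2}). Define $\widetilde p_e$ uniformly in $e$ by the recursion $\widetilde p_e(0)=p_e(0)$, $\widetilde p_e(n+1)=\max\{p_e(n+1),\,\widetilde p_e(n)+1\}$. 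Then: every $\widetilde p_e$ is strictly increasing; for each fixed $e$ it is primitive recursive (a single primitive recursion on top of the primitive recursive function $p_e$); and if $p_e$ happens to be strictly increasing then $\widetilde p_e=p_e$. Thus $(\widetilde p_e)_{e\in\n}$ lists, with repetitions, exactly the strictly increasing primitive recursive functions in $\n^\n$.

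Next I would note that a strictly increasing $f\in\n^\n$ satisfies $f(k)\geq k$ for all $k$, so its inverse in the sense of Definition~\ref{def-true-order} is obtained by a bounded search:
\[
\text{Inv}_{\widetilde p_e}(m)\ =\ \text{least } k\leq m \text{ with } \widetilde p_e(k)\geq m .
\]
Since $(e,m)\mapsto\widetilde p_e(m)$ is total recursive, so is $(e,m)\mapsto\text{Inv}_{\widetilde p_e}(m)$; by the $s$-$m$-$n$ theorem there is a total recursive $l\in\n^\n$ with $\varphi_{l(e)}=\text{Inv}_{\widetilde p_e}$ for all $e$.

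It then remains to check that $(\varphi_{l(e)})_{e\in\n}$ is precisely the desired enumeration. Each $\varphi_{l(e)}=\text{Inv}_{\widetilde p_e}$ is the inverse of the strictly increasing primitive recursive function $\widetilde p_e$. Conversely, given a strictly increasing primitive recursive $q\in\n^\n$, choose $e$ with $p_e=q$; then $\widetilde p_e=p_e=q$, so $\text{Inv}_q=\varphi_{l(e)}$ appears in the list.

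The only point needing care — and the heart of the argument — is the repair step: one must verify that it leaves $p_e$ untouched whenever $p_e$ is already strictly increasing (otherwise surjectivity fails) and that $\widetilde p_e$ is genuinely primitive recursive for each fixed $e$ (otherwise $\text{Inv}_{\widetilde p_e}$ would not qualify as the inverse of a \emph{primitive recursive} function). Both follow at once from the definition, and the rest is a routine application of the $s$-$m$-$n$ theorem.
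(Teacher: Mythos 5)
Your proof is correct and follows essentially the same route as the paper: the paper also repairs an arbitrary enumeration of primitive recursive functions via the recursion $\lambda(e,n+1)=\max(\varphi_e(n+1),\lambda(e,n)+1)$ and then passes to inverses by an s-m-n argument. The only (cosmetic) difference is that you observe $\widetilde p_e(k)\geq k$ and use a bounded search for the inverse, which makes totality immediate, whereas the paper uses an unbounded search composed through two applications of s-m-n.
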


\begin{proof} Let \ $(\varphi_{f(e)})_{e\in\n}$ \ be an enumeration of all primitive recursive functions, with $f$ recursive.\\
We define the partial recursive function $\lambda$ by \ 
$\begin{cases}
\lambda(e,0)=\varphi_e(0)\\
\lambda(e,n+1)=\max (\varphi_e(n+1),\lambda(e,n)+1).
\end{cases}$\\
(if one of the two arguments of $\max$ is undefined, $\max$ is undefined)
\\ There is $g$ recursive such that for any $e,n\in\n ,\ \lambda(e,n)=\varphi_{g(e)}(n)$.
\\Hence \ $(\varphi_{g(f(e))})_{e\in\n}$ \ is an enumeration of all strictly increasing primitive recursive functions.
\\ We now define the partial recursive function \,$\lambda '$\, by \ $\lambda '(e,n)=\text{least }k\ \varphi_e(k)\geq n$.
\\Then again there is $g'$ recursive such that for any $e,n\in\n ,\ \lambda'(e,n)=\varphi_{g'(e)}(n)$.
Setting $\,l(e)=g'(g(f(e))$, we obtain that \,$(\varphi_{l(e)})_{e\in\n}$ \,is an enumeration of inverses of strictly increasing primitive recursive functions.
\end{proof}

\begin{fact}~\label{enum-mart-order}\ 
For \,\C\, one of our time-complexity classes and $e\in\n$, let $d_{C,e}$  be the martingale defined by \,$d_{C,e}(x)=d_C((e)_0, x)$, for $x\in\words$ ($d_C$ was defined in~\ref{easy}), and let $h_e$ be the order $\varphi_{l((e)_1)}$.\\
Then \ $(d_{C,e},h_e)_{e\in\n}$ \ is an enumeration of all couples \,$(d,h)$\, where $d$ is a martingale in \,\C\, such that $\,d(\emptyset)\leq 1$ \,and $h$ is the inverse of a strictly increasing primitive recursive function.
\end{fact}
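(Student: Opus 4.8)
The plan is to read this off directly from the pairing bijection $\langle\,,\,\rangle:\n\times\n\to\n$ together with the two enumeration results already at hand, Definition~\ref{easy}(a) and Lemma~\ref{prim-inv}; no new construction is needed.

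First I would recall that $e\mapsto\big((e)_0,(e)_1\big)$ is a bijection from $\n$ onto $\n\times\n$, so that as $e$ ranges over $\n$ the pair $\big((e)_0,(e)_1\big)$ attains every value of $\n\times\n$. On the first coordinate, Definition~\ref{easy}(a) says that $a\mapsto d_C(a,\cdot)$ is onto the set of martingales $d:\words\to\q_2$ in \,\C\, with $d(\emptyset)\leq 1$; hence $e\mapsto d_{C,e}=d_C\big((e)_0,\cdot\big)$ runs through exactly that set. On the second coordinate, Lemma~\ref{prim-inv} says that $b\mapsto\varphi_{l(b)}$ is onto the set of inverses of strictly increasing primitive recursive functions in $\n^\n$; hence $e\mapsto h_e=\varphi_{l((e)_1)}$ runs through exactly that set. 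I would also note that each $h_e$ is genuinely an order: a strictly increasing function from $\n$ to $\n$ is nondecreasing and unbounded, hence an order, and by Claim~\ref{inv-inv}(a) the inverse of an order is an order.

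Combining the two coordinates, let $(d,h)$ be any couple with $d$ a martingale in \,\C\, satisfying $d(\emptyset)\leq 1$ and $h=\text{Inv}_p$ for some strictly increasing primitive recursive $p$. Choose $a$ with $d_C(a,\cdot)=d$ and $b$ with $\varphi_{l(b)}=h$, and set $e=\langle a,b\rangle$; then $(e)_0=a$ and $(e)_1=b$, so $(d_{C,e},h_e)=(d,h)$. Conversely every pair $(d_{C,e},h_e)$ is of this form. Hence $(d_{C,e},h_e)_{e\in\n}$ is the desired enumeration (possibly with repetitions, which an enumeration allows). There is no genuine obstacle here: the statement is bookkeeping, and if forced to name a delicate point it is only that the class enumerated by $d_C$ is precisely the one named in the conclusion — which it is, by Definition~\ref{easy}(a) — and that each $h_e$ qualifies as an order, which is Claim~\ref{inv-inv}(a).
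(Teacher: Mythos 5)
Your argument is correct and is exactly the intended one: the paper states this Fact without proof (it is announced as one of the ``classical (or easy) facts''), and it does follow immediately by composing the bijection $e\mapsto((e)_0,(e)_1)$ with the enumeration of martingales from Definition~\ref{easy}(a) on the first coordinate and the enumeration of inverses from Lemma~\ref{prim-inv} on the second. Your added check that each $h_e$ is genuinely an order, via Claim~\ref{inv-inv}(a), is the right thing to note and costs nothing.
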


\begin{proof} (Of Proposition~\ref{csrandom-not-crandom})\ \ \ We adapt and simplify Wang's arguments (see~\cite[Thm 5]{wang1999}): there is no need to encode non-recursive information into the sequence $\xi$ which separates the notions of \,\C\,-S randomness and of \,\C\ randomness (being an order is a non effective notion whereas - as we saw - being the inverse of a strictly increasing primitive recursive function is an effective one).

We shall thus define by induction two functions $F$ and $T$ both in\, \pol, with \,$F:\words\to\q_2$\,  a martingale and \,$T:\words\to\n$\,  monotone (i.e. if $x\curl y$, \,then \,$T(x)\leq T(y)$).
\\Let \,$L$\, be a machine which computes the (total) function $l$ of Lemma~\ref{prim-inv}, and let \,$M_u$\, be a universal machine.

\esp - \textbf{Level 0}. \ Let $\begin{array}[t]{l}
F(\emptyset)=1\\
T(\emptyset)=0.
\end{array}$

\esp - \textbf{Level $\pmb{s+1}$}. \,We assume \,$F(x),\ T(x)$ \,are defined for $|x|\leq s$ \,and \,$T(x)\leq |x|$. As in~\cite{wang1999}, one distinguishes two cases:

\underline{Case 1:} \ For each $\,e\leq T(x), \ L$ \,on input \,$(e)_1$\, stops in $\,\leq |x|+1$ \,steps and there is \,$m_e\leq |x|$ \,such that $M_u$ on input \,$(L((e)_1),m_e)$\, stops in\, $\leq|x|+1$\, steps (outputting \,$\varphi_{l((e)_1)}(m_e)=h_e(m_e)$),  and such that\, $e+T(x)+3<h_e(m_e)$.
\\Then one sets \ $\left (\!\begin{array}{l}
F(x0)=2F(x)\\
F(x1)=0
\end{array}\right.$ \ \,and \  $\left(\!\begin{array}{l}
T(x0)=T(x)+1\\
T(x1)=T(x).
\end{array}\right.$

\underline{Case 2:} \ Otherwise one sets \ $F(x0)=F(x1)=F(x)$ \,and \,$T(x0)=T(x1)=T(x)$.
\\Let us note that \,$T(x)$\, is simply the number of times case 1 has occured along $x$. We also notice that $F$ and $T$ are computable in polynomial time.
\\The inductive definition of the infinite sequence $\xi_C$  is as follows:

\begin{defi}\label{def-xic}
Let $s\in\n$. We assume $\xi_C\restriction s$\, is defined.
\begin{enumerate}[label=\({\alph*}]
\item
 \ If \,$\xi_C\restriction s$\, is in case 1, then one sets \,$\xi_C(s)=0$.
\item Otherwise, one sets \,$\xi_C(s)=i$\, where $i\in\{0,1\}$ \,is such that\\ 
\centerline{$\delta_C((\xi_C\restriction s)i)\leq\delta_C((\xi_C\restriction s)(1-i))$.}
\end{enumerate}
\end{defi}

\noindent From this definition and Fact~\ref{fact2}, one deduces:\\

\begin{fact}~\label{complex-xic}\hfill
\begin{itemize}
\item If \,\C\ =\ \pol, \,then \,$\xi_C\in\text{FDTIME}(n^{\lceil\log\, n\rceil})$.
\item If \,\C\ =\ \expo, \,then \,$\xi_C\in\text{FDTIME}(2^{n^{\lceil\log\, n\rceil}})$.
\item If \,\C\ =\ \toexp, \,then \,$\xi_C\in\text{FDTIME}(T(\lceil\log\, n\rceil,n))$.
\item If \,\C\, is \,\primrec, \,then \,$\xi_C$\, is recursive.
\end{itemize}
\end{fact}


The great lines of the proof are essentially Wang's ones. A trustful reader can skip our proof. However since we simplified the argument (for instance, deleting mention of $F(x)$ in the definition of case 1) and added the machine $L$, we provide some arguments.

\begin{clm}\label{claim1}
Let $\alpha\in\cantor$. Then\,$ \alpha\restriction s$ is in case 1 infinitely often.
\end{clm}

\begin{proof}\ \ Let $\alpha\in\cantor$ and $s_0\in\n$ be fixed.\\
 Since $L$ defines a total function, there must exist $s_1\geq s_0$ such that for all $e\leq T(\alpha\restriction s_0),\ \,L$ \,on input $(e)_1$ stops in \,$\leq s_1+1$\, steps.\\
 Also since for $e\in\n$, $h_e$\ ($=\varphi_{L((e)_1)}$) \,is an order, there must exist $s_2\geq s_1$ such that for every $e\leq T(\alpha\restriction s_0)$, there is  $m_e\leq s_2$ such that \,$h_e(m_e)>e+T(\alpha\restriction s_0)+3$.
\\Hence there is $s\geq s_0$ such that $P(s)$ holds where\\
 $P(s)=\left\{\begin{array}{l}
\text{for each } e\leq T(\alpha\restriction s_0),\ L \text{ on input } (e)_1 \text{ stops} \text{ in } \leq s+1 \text{ steps,}\\
\text{there is } m_e\leq s \text{ such that } M_u
\text{ on input } (L((e)_1),m_e)\ \text{stops in } \leq s+1\\ \text{ steps outputting }
 o_e \ (=h_e(m_e)) \text{ which satisfies \ } e+T(\alpha\restriction s_0)+3<o_e.
\end{array}\right.$

\noindent Let $s_3=\min \{s\geq s_0:P(s)\}$. Then by construction \,$T(\alpha\restriction s_3)=T(\alpha\restriction s_0)$. Hence \,$\alpha\restriction s_3$\, is in case 1.
\end{proof}

\begin{clm}\label{claim2}
$\lim\limits_{s\to\infty}\, F(\xi_C\restriction s)\,=\,\lim\limits_{s\to\infty} T(\xi_C\restriction s)\,=\,+\infty$. Hence \,$\xi_C$\, is not \,\pol\ random.
\end{clm}

\begin{proof}\ \ By definition of $F,\,T\ \,$and\, $\xi_C$, for $s\in\n$,
\begin{itemize}[label=$-$]
\item $\xi_C\restriction s$ is in case 1, \,$F(\xi_C\restriction s+1)=2F(\xi_C\restriction s)$ \ and \ $T(\xi_C\restriction s+1)=T(\xi_C\restriction s)+1$,
\item when $\xi_C\restriction s$ is in case 2, \,$F(\xi_C\restriction s+1)=F(\xi_C\restriction s)$ \ and \ $T(\xi_C\restriction s+1)=T(\xi_C\restriction s)$.
\end{itemize}
Hence we can conclude by the previous claim.
\end{proof}

\begin{clm}\label{claim3}
For any $s\in\n,\ \,\delta_C(\xi_C\restriction s)<2^{T(\xi_C\,\restriction s)+2}.$
\end{clm}

\begin{proof} (Sketch)
\noindent  Note that  \,$\Phi_C(\emptyset)\leq 2$. \,Hence \,$\delta_C(\emptyset)\leq 2+2=2^2$.
\\ Now $\delta_C$ is a martingale, hence for $\,x\in\words,\ i\in\{0,1\}$,\, $\delta_C(xi)\leq 2\delta_C(x)$. By clause (b) in definition~\ref{def-xic}, there is an increase of $\delta_C(x)$ ($\leq$ than mutiplication by 2) only when case 1 occurs, and case 1 has occured $T(\xi_C\restriction s)$ times along $\xi\restriction s$.
 \end{proof}
 
 \begin{clm}\label{claim4}
 For any $e\in\n, \ e+T(\xi_C\restriction s)+2< h_e(s)$ \ a.e. (relatively to $s$)
 \end{clm}
 
 \begin{proof} (Sketch)\  \\
 Let $e\in\n$ be fixed. By Claim~\ref{claim2}, there is $s_0$ such that \,$T(\xi_C\restriction s_0)>e$. By Claim~\ref{claim1}, there is $s_1\geq s_0$ such that \,$\xi_C\restriction s_1$ \,is in case 1.
\\One then checks by induction on \,$s\geq s_1+1$ that \,$e+T(\xi_C\restriction s)+2<h_e(s)$.
\begin{itemize}[label=$-$]
\item
 If $\xi_C\restriction s$ is in case 1, then there is $m_e\leq s$ such that \ $e+T(\xi_C\restriction s)+3<h_e(m_e)\leq h_e(s)$.\\
 Hence \ $e+T(\xi_C\restriction {s+1})+2\,=\,e+T(\xi_C\restriction s)+3<h_e(s)\leq h_e(s+1)$.
\item If $\xi_C\restriction s$ is in case 2, this follows directly from the induction hypothesis.\qedhere
\end{itemize}
 \end{proof} 
 
\noindent One derives from the previous claims:
 
 \begin{clm}\label{claim5}
 For any $e\in\n, \ \  d_{C,e}(\xi_C\restriction s)<2^{h_e(s)}$ \ a.e.
 \end{clm}
 
 \begin{proof}\ \ For any $e,s\in\n$, \ \ \ $d_{C,e}(\xi_C\restriction s)\leq 2^e\Phi_C(\xi_C\restriction s)\leq 2^e\delta_C(\xi_C\restriction s)\leq 2^{e}2^{T(\xi_C\restriction s)+2}.$
\\ Since \ $e+T(\xi_C\restriction s)+2< h_e(s)$ \ a.e, we deduce \ $d_{C,e}(\xi_C\restriction s)<2^{h_e(s)}$ \ a.e.
 \end{proof}
 
\noindent One can now conclude: for our classes \,\C, any true \,\C\, order is a true \,\primrec\, order, hence by Claim~\ref{bound-inv}(a) and fact~\ref{enum-mart-order}, given any couple $(d,h)$ such that $d$ is a martingale in \,\C,\ with $d(\emptyset)\leq 1$ \,and $h$ is a true \,\C\ order, there is $e\in\n$ such that \,$d=d_{C,e}$\, and \,$h_e\leq h$ . Hence $\xi_C$ is \,\C\,-S random. By claim~\ref{claim2}, $\xi_C$ is not \,\pol\, random. Finally  Fact~\ref{complex-xic} gives the complexity of $\xi_C$.
 \end{proof}
 
 \subsection{Subcomputable weak randomness.}
 
 We compared the notion of \,\C\,-S randomness with the stronger notion of \,\C\ randomness. In this subsection, we shall study the relation  of \,\C\,-S randomness with the weaker notion of ``Kurz \,\C\ randomness". There are two candidates for the notion. Wang~\cite[Definition 5]{wang2000} proposed a notion in terms of martingales and orders:
 
 \begin{defi}[Wang]
 Let \,\C\,\ be a class of functions and let $\xi\in\cantor$.
\begin{itemize}[label=$-$]
\item If $d,\, h$ are respectively a martingale  and an order, then $\xi$ fails the Kurz test $\,(d,h)$\,\linebreak if \ $d(\xi\restriction i)\geq h(i)$\ \ a.e.
\item $\xi$ \,is \,(\C,\C)-W random if $\xi$ passes all Kurz tests for $d,h$ both in\, \C.
\end{itemize}
 \end{defi}
 
 Restricting to the classes \,\primrec\ \,or\, \pspace, Buss, Cenzer and Remmel~\cite{buss} proposed a different notion (called BP-randomness) and gave three different characterizations in terms of ML-tests, Kolmogorov complexity and martingale property. We give here
 the martingale and the ML test characterization in the primitive recursive context:
 
 \begin{thm}[Buss,\,Cenzer,\,Remmel]\label{BCR}
 Let $\xi\in\cantor$.\\
  $\xi$ is BP-random \ \ $\begin{array}[t]{cl}  
\text{iff} & \ \ \left\{ \begin{array}{l}
 \text{for no primitive recursive sequence } (U_n)_{n\in\n} \text{ of clopen sets}\\
\text{ such that }
\mu(U_n)\leq 2^{-n},\ \xi\in\bigcap_{n\in\n}U_n.\end{array}\right.\\
\text{iff}& \ \ \left\{ \begin{array}{l}
 \text{for no primitive recursive martingale } d
 \text{ and for no primitive }\\\text{recursive } \text{function }   f\in\n^\n,\ 
 d(\xi\restriction f(n))\geq 2^n \text{ \ a.e.}
 \end{array}\right.
 \end{array}$
 \end{thm}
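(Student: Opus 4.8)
The statement is quoted from~\cite{buss}; the plan below is how I would reprove it. Since BP-randomness is introduced in~\cite{buss} via the first displayed condition (the clopen ML-test one), what one actually has to establish is the equivalence of the two displayed conditions, i.e.\ the primitive recursive refinement of the classical correspondence between Martin-L\"of tests and martingales; the only genuine care is to keep every object primitive recursive and $\q_2$-valued.

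\emph{From a martingale to a clopen test.} Suppose there are a primitive recursive martingale $d$ and a primitive recursive $f:\n\to\n$ with $d(\xi\restriction f(n))\geq 2^n$ for all $n\geq n_0$. Rescaling $d$ by a dyadic constant and replacing $f$ by $n\mapsto f(n+c)$ (which merely shifts the conclusion by a constant, absorbed in ``a.e.''), one may assume $d(\emptyset)\leq 1$. For $m\in\n$ set $U_m=\{X\in\cantor: d(X\restriction f(m+n_0))\geq 2^{m+n_0}\}$. Each $U_m$ depends only on $X\restriction f(m+n_0)$, hence is clopen, and its canonical generating set $\{x\in\{0,1\}^{f(m+n_0)}: d(x)\geq 2^{m+n_0}\}$ is obtained from $m$ by evaluating $f$ and then $d$; thus $(U_m)_{m\in\n}$ is a primitive recursive sequence of clopen sets. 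The level-$k$ form of Kolmogorov's inequality ($\sum_{|x|=k}d(x)=2^k d(\emptyset)$ forces $|\{x\in\{0,1\}^k: d(x)\geq 2^j\}|\leq 2^{k-j}d(\emptyset)$, hence $\mu(\{X:d(X\restriction k)\geq 2^j\})\leq 2^{-j}d(\emptyset)$) gives $\mu(U_m)\leq 2^{-(m+n_0)}\leq 2^{-m}$, while $\xi\in\bigcap_m U_m$ is exactly the hypothesis read at indices $\geq n_0$. So $\xi$ fails a primitive recursive clopen test.

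\emph{From a clopen test to a martingale.} Conversely, given a primitive recursive sequence $(U_n)_{n\in\n}$ of clopen sets with $\mu(U_n)\leq 2^{-n}$ and $\xi\in\bigcap_n U_n$, I would put $V_n=U_{2n+2}$, so $\mu(V_n)\leq 2^{-2n-2}$, and let $\ell:\n\to\n$ be a primitive recursive bound on the lengths of the generators of $V_n$; after refining, $V_n=[S_n]$ with $S_n\subseteq\{0,1\}^{\ell(n)}$ primitive recursive. For each $n$ take the conditional-measure martingale (Definition~\ref{meas-condi}) $c_n(x)=2^{n+1}\mu(V_n\mid x)=2^{n+1}2^{|x|}\mu(V_n\cap[x])$: it is primitive recursive in $(n,x)$, it is $\q_2$-valued because $V_n$ has a presentation of length $\ell(n)$, it satisfies $c_n(\emptyset)=2^{n+1}\mu(V_n)\leq 2^{-n-1}$ and $c_n(x)\leq 2^{n+1}2^{|x|}\mu(V_n)\leq 2^{|x|-n-1}$, and $c_n(x)=2^{n+1}$ once $[x]\subseteq V_n$. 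Then $d=\sum_n c_n$ is a martingale with $d(\emptyset)\leq\sum_n 2^{-n-1}=1$; and since $\xi\in V_n=[S_n]$ one has $\xi\restriction\ell(n)\in S_n$, hence $[\xi\restriction\ell(n)]\subseteq V_n$ and $d(\xi\restriction\ell(n))\geq c_n(\xi\restriction\ell(n))=2^{n+1}$ for every $n$. The sum defining $d$ is infinite, but the tail bound $c_n(x)\leq 2^{|x|-n-1}$ shows that $F(x,i)=\sum_{n\leq|x|+i+2}c_n(x)$ is a primitive recursive function into $\q_2$ with $\Vert d(x)-F(x,i)\Vert\leq 2^{-i}$, so $d$ is primitive-recursively approximable. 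By the (routine) primitive recursive analogue of Lemma~\ref{q2appro} there is a $\q_2$-valued primitive recursive martingale $\widehat{d}$ with $d(x)\leq\widehat{d}(x)\leq d(x)+2$; taking $f=\ell$ one gets $\widehat{d}(\xi\restriction f(n))\geq d(\xi\restriction\ell(n))\geq 2^{n+1}\geq 2^n$ for all $n$, so $\xi$ fails the martingale condition.

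\emph{Main obstacle.} The one real difficulty is the step just sketched: the martingale one naturally writes down, $\sum_n c_n$, is an infinite sum of dyadic rationals with unbounded denominators, hence neither $\q_2$-valued nor primitive recursive; the remedy is that its finite truncations already approximate it primitive-recursively, so a Lemma~\ref{q2appro}-type rounding converts it into a legitimate primitive recursive $\q_2$-martingale at the cost of a bounded additive error — which is precisely why the martingale characterisation must read ``$\widehat{d}(\xi\restriction f(n))\geq 2^n$'' with an auxiliary schedule $f$ rather than ``$\widehat{d}$ succeeds on $\xi$''. The third characterisation (Kolmogorov complexity, not displayed here) would then be read off from the clopen-test one by a Kraft--Chaitin coding argument in the spirit of Propositions~\ref{prop-Ml-meas} and~\ref{prop-comp}; the full details are in~\cite{buss}.
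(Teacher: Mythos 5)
Your argument is correct. Note first that the paper itself offers no proof of this theorem: it is quoted verbatim from Buss--Cenzer--Remmel \cite{buss}, so there is no in-paper proof to match yours against. What you have reconstructed is the standard test-to-martingale correspondence, carried out so that every object stays primitive recursive, and both directions check out: the Kolmogorov-inequality counting argument gives the measure bound $\mu(U_m)\leq 2^{-m}$ in one direction, and in the other your conditional-measure martingales $c_n(x)=2^{n+1}\mu(V_n\mid x)$ with the tail estimate $c_n(x)\leq 2^{|x|-n-1}$, the truncation $F(x,i)$, and the rounding via (the primitive recursive analogue of) Lemma~\ref{q2appro} are exactly the tools the paper itself deploys in Proposition~\ref{ml-mart} and Definition~\ref{meas-condi}, so your reconstruction is stylistically continuous with the surrounding text. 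Two small points worth tightening if you write this out in full: (i) when you ``refine'' $V_n$ to have all generators of length exactly $\ell(n)$, say explicitly that this padding is primitive recursive uniformly in $n$ (it is, since $2^{\ell(n)}$ is primitive recursive in $n$ -- this is precisely where the argument would break for \pol\ but is harmless for \primrec); (ii) your closing remark attributes the need for the schedule $f$ to the additive error of the rounding step, but the deeper reason is that this is a Kurtz-style (``a.e.'') notion rather than a success notion -- the schedule is intrinsic to weak randomness, not an artifact of the $\q_2$-approximation. Neither point is a gap.
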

 
 \begin{rem}\label{incr}
 In the theorem, one can replace ``no primitive recursive function $f$" by ``no primitive recursive strictly increasing function $f$": in the proof of~\cite[Thm 2.8 (2)$\Rightarrow$(1)]{buss}, the function $f$ can clearly be chosen strictly increasing.
 \end{rem}\vspace{-1mm}
 
Our notion of \,\C\,-S randomness cannot be compared with Wang's notion of (\C,\C)-W randomness since by~\cite[Cor 17]{wang2000}, for computable sequences, (\pol,\pol)-W randomness and p-randomness coincide.
~\cite{buss} notion is the right weakening of our notion of \,\C\,-S randomness:

\begin{fact}\label{weak-true}
Let $\xi\in\cantor$.\\
\espa $\xi$ is BP-random \ \
$ \Leftrightarrow \ \ \ \left\{ \begin{array}{l}
 \text{for no  martingale } d \text{ in \,\primrec\ }
 \text{ and for no true }\\\text{ \,\primrec\, order } h, \ 
 d(\xi\restriction n)\geq 2^{h(n)} \text{ \ a.e.}
 \end{array}\right.$
 \end{fact}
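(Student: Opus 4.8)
The plan is to derive Fact~\ref{weak-true} from the martingale form of the Buss--Cenzer--Remmel theorem. By Theorem~\ref{BCR} (second equivalence) together with Remark~\ref{incr}, $\xi$ is \emph{not} BP-random iff there are a primitive recursive martingale $d$ and a strictly increasing primitive recursive $f\in\n^{\n}$ with $d(\xi\restriction f(n))\ge 2^{n}$ a.e. Taking contrapositives, it is enough to establish the equivalence
\[
\big(\exists\,d,f\big)\ \ d(\xi\restriction f(n))\ge 2^{n}\ \text{a.e.}
\qquad\Longleftrightarrow\qquad
\big(\exists\,d',h\big)\ \ d'(\xi\restriction n)\ge 2^{h(n)}\ \text{a.e.},
\]
where on the left $d$ ranges over primitive recursive martingales and $f$ over strictly increasing primitive recursive functions, and on the right $d'$ ranges over primitive recursive martingales and $h$ over true \,\primrec\, orders. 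For the implication $\Leftarrow$ I would keep the same martingale: given $d'$ and a true \,\primrec\, order $h$, let $g$ be the strictly increasing majorant of $\text{Inv}_h$ defined by $g(0)=\text{Inv}_h(0)$, $g(n+1)=\max\{g(n)+1,\text{Inv}_h(n+1)\}$. Since $\text{Inv}_h\in\,$\primrec\ (primitive recursion on an element of \,\primrec), $g$ is strictly increasing and primitive recursive, and $g\ge\text{Inv}_h$ gives $h(g(n))\ge h(\text{Inv}_h(n))\ge n$; hence $d'(\xi\restriction g(n))\ge 2^{h(g(n))}\ge 2^{n}$ a.e., so $(d',g)$ witnesses the left-hand side.

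The substantive direction is $\Rightarrow$, whose tool is the classical \emph{savings trick}. We may assume $d(\emptyset)=1$: if $d(\emptyset)=0$ then $d\equiv 0$, contradicting the hypothesis; otherwise, rescaling $d$ by $1/d(\emptyset)$ and, if $d(\emptyset)>1$, replacing $f$ by $n\mapsto f\!\big(n+\lceil\log_{2}d(\emptyset)\rceil\big)$ reduces to this case. From $d$ one builds a martingale $d'$ that bets the ratios of $d$ on a ``working capital'' kept in $[1,2)$, moving the integer overflow to an irrevocable, nondecreasing reserve whenever the working capital reaches $2$; writing $d'(x)$ as the sum of the working capital and the reserve at $x$ makes it a martingale. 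Since $d'(x)$ is computed by tracing this (rational) bookkeeping along the prefixes of $x$, using only the values of $d$ on those prefixes, all intermediate quantities have primitive‑recursively bounded size and $d'$ is primitive recursive when $d$ is (and if a $\q_2$‑valued martingale is required, one first replaces the rational-valued $d'$, which is trivially approximable, by a $\q_2$‑valued primitive recursive martingale lying within $2$ of it via Lemma~\ref{q2appro}). The standard estimate for this construction yields a fixed constant $c_{0}$ with
\[
d'(\alpha\restriction m)\ \ge\ \log_{2}\!\Big(\max_{k\le m} d(\alpha\restriction k)\Big)-c_{0}
\qquad\text{for every }\alpha\in\cantor\text{ and every }m\in\n .
\]

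It remains to turn this rate into a true \,\primrec\, order. Fix $n_{0}>c_{0}$ with $d(\xi\restriction f(n))\ge 2^{n}$ for $n\ge n_{0}$, and set $h_{0}(m)=\max\{\,j\le m: f(j)\le m\,\}$ (with $h_{0}(m)=0$ for $m<f(0)$). As $f$ is strictly increasing, $f(j)\ge j$, so this is a bounded search; thus $h_{0}$ is primitive recursive, nondecreasing and unbounded, and $h_{0}(m)\ge t\iff f(t)\le m$ (for $t\ge 1$) gives $\text{Inv}_{h_{0}}=f$ up to the value at $0$, so $h_{0}$ is a true \,\primrec\, order. For $m\ge f(n_{0})$, writing $n=h_{0}(m)$ we have $m\ge f(n)$, hence $\max_{k\le m}d(\xi\restriction k)\ge 2^{n}$ and therefore $d'(\xi\restriction m)\ge n-c_{0}=h_{0}(m)-c_{0}$. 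Finally put $h(m)=\lfloor\log_{2}(h_{0}(m)-c_{0})\rfloor$ when $h_{0}(m)>c_{0}$ and $h(m)=0$ otherwise; then $h$ is primitive recursive, nondecreasing and unbounded, and $h(m)\ge t\iff h_{0}(m)\ge 2^{t}+c_{0}$ (for $t\ge 1$) gives $\text{Inv}_{h}(t)=f(2^{t}+c_{0})$, so $h$ is a true \,\primrec\, order, while $d'(\xi\restriction m)\ge h_{0}(m)-c_{0}\ge 2^{h(m)}$ a.e. This proves $\Rightarrow$, and hence the Fact.

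I expect the main obstacle to be the savings trick itself: one has to set up the construction so that the reserve provably grows like the logarithm of the running supremum of $d$ along the path — so that the profit is immune to the martingale's subsequent losses — and to verify that the rational bookkeeping stays primitive‑recursively bounded in size. This is, however, only a routine adaptation of the standard argument (cf.\ the proofs that a martingale succeeding on $\alpha$ can be replaced by one whose value along $\alpha$ tends to infinity), and all the remaining manipulations of $\text{Inv}$ and of true \,\primrec\, orders are immediate from Claim~\ref{inv-inv}.
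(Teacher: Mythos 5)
Your argument is correct and takes the same route as the paper: both reduce the statement to the Buss--Cenzer--Remmel martingale characterization, settle the easy direction by evaluating $d'$ along (a strictly increasing majorant of) $\mathrm{Inv}_h$, and settle the substantive direction with a savings-account martingale, finishing with the same bookkeeping of inverses to certify a true \primrec\ order. The only real divergence is inside the savings step: the paper's construction banks half the current capital at each checkpoint $f(n)$ and thereby achieves $\delta(\xi\restriction m)\geq 2^{h(m)}$ directly with $h=\mathrm{Inv}_f\dotminus 1$, whereas your off-the-shelf savings trick yields only growth linear in $\mathrm{Inv}_f(m)$, which you correctly repair by composing with $\lfloor\log_2(\,\cdot\,)\rfloor$ --- this still gives a true \primrec\ order (with $\mathrm{Inv}_h(t)=f(2^{t}+c_0)$ up to finitely many values), so both versions close the proof.
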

 
 \begin{proof}\ \\ $\pmb{\Rightarrow}\,:$ \ It is possible to derive this result from the ML characterization (see~\cite[7.2.13]{dohi} for a similar situation), but it can also be deduced from the above martingale characterization by the method of saving accounts (~\cite[6.3.8]{dohi}).
\\ Let $\xi\in\cantor$ and let $\ d,\ f$ be primitive recursive such that $f$ is strictly  increasing (Remark~\ref{incr}) and for  $n\geq n_0,\ d(\xi\restriction f(n))\geq 2^n$.
\\By considering the function $n\mapsto f(2n+1)$, we can assume \ for $n\geq n_0,\ d(\xi\restriction f(n))\geq 2^{2n+1}$.
\\ let us define inductively the primitive recursive $\q$-valued martingale $\delta$:
\begin{itemize}[label=$-$]
\item
  For $|x|\leq f(n_0)$, let \ $\delta(x)=d(x)$.
\item We assume now that $|x|\geq f(n_0)$ and for any $y\curl x,\ \delta(y)$ is defined.\\ Let \ $ n_0\leq n\leq |x|$ \  be such that \ $f(n)\leq |x|< f(n+1)$. For $i\in\{0,1\},$ 
 one sets
\centerline{ $\delta(xi)=\frac{\delta(x\,\restriction f(n))}{2}+(\delta(x)-\frac{\delta(x\,\restriction f(n))}{2})\frac{d(xi)}{d(x)}$.}
\end{itemize}
This defines a primitive recursive martingale and one checks by induction on $n\geq n_0$, that for any $x\in\words$, \ if $|x|=f(n)$,\, then \ $\frac{\delta(x)}{d(x)}\geq 2^{-(n-n_0)}$.
\\By definition of $\delta$, for any $x$ such that \ $f(n)\leq |x|\leq f(n+1)$, \ $\delta(x)\geq (1/2)\delta(x\,\restriction f(n))$.
\\ Hence one deduces for any $m,n$ such that \ $f(n)\leq m <f(n+1)$,

\centerline{$\delta(\xi\,\restriction m)\geq (1/2)\delta(\xi\,\restriction f(n))\geq 2^{-(n-n_0+1)}d(\xi\,\restriction f(n))\geq 2^{-(n-n_0+1)}2^{2n+1}\geq 2^n$.}
Let us define \,$h=\text{Inv}_f\dotminus 1$. \,Then $h$ is a true \,\primrec\, order because $f$ is strictly increasing.
If \,$m>f(n_0)$\, and \,$f(n)\leq m<f(n+1)$, \,then\, $0<\text{Inv}_f(m)\leq n+1$\, and hence \,$h(m)\leq n$.
 Therefore \,$\delta(\xi\,\restriction m)\geq 2^n\geq 2^{h(m)}$,\, for any \,$m>f(n_0)$.
(Rigorously $\delta$ is  $\q$-valued and not  $\q_2$-valued, but we can approximate it in $\q_2$ and apply Lemma~\ref{q2appro}).\\

\noindent $\pmb{\Leftarrow}$ :
 Let us assume \,$d(\xi\restriction n)\geq 2^{h(n)}$ \,a.e. for $h$ a true \,\primrec\, order. Then \,$\text{Inv}_h$ \,is primitive recursive.
For $k\in\n$, if $\,n_k=\text{Inv}_h(k)$, \,then \,$h(n_k)\geq k$. \,Hence for any $k\in\n,$
\espa\espa $ d(\xi\restriction \text
{Inv}_h(k))=d(\xi\restriction n_k)\geq 2^{h(n_k)}\geq 2^k.$
\end{proof}

Hence by the previous fact, \,\primrec\,-S randomness implies the BP-randomness of~\cite{buss}.

Schnorr~\cite{schno} showed that any p-random sequence $\xi$ satisfies the law of Large Numbers (that is if $s_n(\xi)=\sum_{k\leq n}\xi(k)$, then $\lim\limits_n s_n(\xi)/n =1/2$).  As noted in~\cite[Thm 5.1.8]{wang-phd}, his argument applies to \,(\pol,\pol)-S random sequences. This is also the case for \,\pol -S random sequences:

\begin{thm}[Schnorr]\label{large-numbers}\   \\
Every \,\pol\,-S random sequence satisfies the law of Large Numbers.
\end{thm}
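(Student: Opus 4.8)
The plan is to establish the contrapositive, adapting Schnorr's classical argument while being careful to stay polynomial time. Suppose $\xi\in\cantor$ fails the law of Large Numbers; I will produce a martingale $d:\words\to\q_2$ in \,\pol\, and a true \,\pol\, order $h$ with $d(\xi\restriction i)\geq 2^{h(i)}$ i.o., so that $\xi$ is not \,\pol\,-S random (Definition~\ref{def-csrandom}(c)). If $s_n(\xi)/n\not\to 1/2$ then $\limsup_n s_n(\xi)/n>1/2$ or $\liminf_n s_n(\xi)/n<1/2$; the two cases are symmetric (betting on $0$'s instead of $1$'s, i.e.\ using $1-2x(i)$ in place of $2x(i)-1$ below), so I treat the first one and fix a rational $\varepsilon>0$ so that $k(\xi\restriction n)\geq(\tfrac12+\varepsilon)n$ for infinitely many $n$, where $k(x)$ denotes the number of $1$'s in the string $x$.

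The martingale I would use is the biased-Bernoulli one: for a dyadic rational $\delta\in(0,1)$ to be fixed later, put
\[ m_\delta(x)\ =\ \prod_{i<|x|}\bigl(1+\delta(2x(i)-1)\bigr)\ =\ (1+\delta)^{\,k(x)}(1-\delta)^{\,|x|-k(x)}\,. \]
It is a martingale since $\tfrac12[(1+\delta)+(1-\delta)]=1$, and $m_\delta(\emptyset)=1$. Writing $\delta=a2^{-b}$ with $a,b$ fixed, one has $m_\delta(x)=(2^b+a)^{k(x)}(2^b-a)^{\,|x|-k(x)}\,2^{-b|x|}$, so $m_\delta$ is $\q_2$-valued, its values have size $O(|x|)$ (matching our coding convention for dyadic rationals), and they are computable in polynomial time — scan $x$ for $k(x)$, then form the two integer powers by iterated multiplication of $O(|x|)$-bit numbers — hence $m_\delta\in$\ \pol. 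Using this explicit product in place of Schnorr's merely computable martingale is exactly what keeps us inside \,\pol.

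Next I would carry out the Chernoff-type estimate. Since $\log_2\frac{1+\delta}{1-\delta}>0$, the map $k\mapsto k\log_2(1+\delta)+(n-k)\log_2(1-\delta)$ is increasing, so for every $n$ with $k(\xi\restriction n)\geq(\tfrac12+\varepsilon)n$,
\[ \log_2 m_\delta(\xi\restriction n)\ \geq\ n\,g(\delta)\,,\qquad g(\delta)\ :=\ \bigl(\tfrac12+\varepsilon\bigr)\log_2(1+\delta)+\bigl(\tfrac12-\varepsilon\bigr)\log_2(1-\delta)\,. \]
Now $g(0)=0$ and $g'(0)=2\varepsilon/\ln 2>0$, so I can fix a dyadic $\delta\in(0,1)$ with $g(\delta)>0$ and then a rational $c$ with $0<c<g(\delta)$. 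Finally set $h(n)=\lfloor cn\rfloor$: it is nondecreasing and unbounded, hence an order, and both $h$ and $\inv_h(m)=\lceil m/c\rceil$ are polynomial-time computable, so $h$ is a true \,\pol\, order (alternatively one may invoke Claim~\ref{inv-inv}(b)). For each of the infinitely many $n$ with $k(\xi\restriction n)\geq(\tfrac12+\varepsilon)n$ we then get $m_\delta(\xi\restriction n)\geq 2^{g(\delta)n}>2^{cn}\geq 2^{h(n)}$, so $m_\delta(\xi\restriction i)\geq 2^{h(i)}$ i.o., contradicting \,\pol\,-S randomness.

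The calculations involved — the complexity/size bookkeeping for $m_\delta$ and the one-variable fact $g'(0)>0$ — are routine; the point to keep an eye on is that one genuinely needs the \emph{exponential} lower bound $m_\delta(\xi\restriction n)\geq 2^{cn}$, so the ``$2^{h(i)}$'' form of martingale-\,\pol\,-S randomness is the one that applies, and $\delta$ must be taken dyadic so that $m_\delta$ is an honest $\q_2$-valued \,\pol\, martingale. Here the weakenings built into our notion (true \,\pol\, orders, the $2^{h}$ bound rather than $h$) cost nothing: the witnessing order $h(n)=\lfloor cn\rfloor$ is essentially linear and trivially a true \,\pol\, order.
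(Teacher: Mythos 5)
Your proof is correct and is essentially the paper's own argument (which follows Schnorr via Wang's exposition): the same biased-Bernoulli martingale $F(x1)=(1+q)F(x)$, $F(x0)=(1-q)F(x)$ with a dyadic bias, the same logarithmic estimate showing exponential growth along the exceptional times, and the same choice of an essentially linear true \pol\ order $h$. Your version is merely more self-contained, spelling out the polynomial-time bookkeeping for $m_\delta$ and justifying the existence of a suitable $\delta$ via $g'(0)>0$ where the paper simply asserts it.
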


\begin{proof}\ \ We refer to the exposition of Schnorr's theorem in~\cite[Thm 5.2.12]{wang-phd} and mention here only the small modification at the end.
Let us suppose $\xi\in\cantor$ does not satisfy the law of Large numbers. One can assume w.l.o.g.  \,$\limsup_n s_n(\xi)/n >1/2$. 
 \\Let $a=(\limsup_n s_n(\xi)/n )-(1/2)\,>0$ \ and \ 
let \,$q$\, in \,$\q_2\cap (0,1)$ \,be so that

\centerline{$\frac{1}{2}(\log(1+q)+\log(1-q))+a(\log(1+q)-\log(1-q))=c>0$.}
\noindent Defining the  $\q_2$-valued martingale $F$ in \,\pol\, as in~\cite{wang-phd} by:

\centerline{$\begin{array}{lcl}
F(\emptyset) &=& 1\\
F(x1)&=&(1+q)F(x)\\
F(x0)&=&(1-q)F(x),
\end{array}$}
\noindent one checks that \ $\limsup_n \log(F(\xi\restriction_n))/n\,=\,c$. \ 
Hence \ $\log(F(\xi\restriction_n))/n\,\geq c/2$ \ \  \emph{i.o.}
 Taking \,$k_0\in\n$ \,such that \,$1/k_0\leq c/2$\, and setting \ $h(n)=\lfloor n/k_0\rfloor$, for $n\in\n$, one obtains that \ $F(\xi\restriction n)\geq 2^{h(n)}$ \ \emph{i.o.} \ 
Since $\,h\,$ is a true \,\pol-order, \,$\xi$\, cannot be \,\pol -S random.
\end{proof}

On the opposite, it is known~\cite{kurz} that weak randomness does not imply satisfaction of the law of Large Numbers. In the context of primitive recursiveness, Buss, Cenzer and Remmel obtained the following:
\begin{thm}[{{~\cite[Thm 2.16]{buss}}}]\label{nolaw}
There exists a computable BP-random sequence which does not satisfy the law of Large Numbers.
\end{thm}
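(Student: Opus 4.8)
The plan is to run the classical Kurtz\nobreakdash-style construction in the primitive recursive setting, as is done in~\cite{buss}: we build a computable $\xi\in\cantor$ in stages, devoting stage~$e$ to escaping the $e$\nobreakdash-th primitive recursive clopen test (so that $\xi$ is BP\nobreakdash-random by the first characterization of Theorem~\ref{BCR}) and, between consecutive stages, appending a long block of zeros (so that $s_n(\xi)/n$ is pushed well below $1/2$ infinitely often). The point that makes this work is that a primitive recursive clopen test $(U_n)_{n\in\n}$ catches $\xi$ only if $\xi\in\bigcap_n U_n$, so it suffices to escape \emph{one} level $U_n$; and escaping a clopen set by a cylinder is a permanent commitment, so the later zero\nobreakdash-blocks can never undo it.

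First I would fix a computable enumeration $\bigl((U^e_n)_{n\in\n}\bigr)_{e\in\n}$ of all primitive recursive sequences of clopen subsets of $\cantor$. This is available because the primitive recursive functions admit a computable enumeration consisting of total functions (the same technique as in the proof of Lemma~\ref{prim-inv}), and under any reasonable coding of finite unions of cylinders a primitive recursive clopen sequence is such a function; for each $e,n$ the set $U^e_n$ is then a finite, effectively computable union of cylinders and $\mu(U^e_n)$ is an effectively computable dyadic rational. We do \emph{not} need to decide which indices code genuine ML tests.

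The construction: set $\xi\!\restriction n_0=\emptyset$ with $n_0=0$. At stage $e$, with $\xi\!\restriction n_e$ already defined, compute the clopen set $U^e_{n_e+1}$ and its measure. If $\mu(U^e_{n_e+1})\le 2^{-(n_e+1)}$ (automatic when $(U^e_n)_n$ really is a test), then, since $\mu([\xi\!\restriction n_e])=2^{-n_e}>2^{-(n_e+1)}\ge\mu(U^e_{n_e+1})$, the cylinder $[\xi\!\restriction n_e]$ is not contained in $U^e_{n_e+1}$; letting $m_e$ bound the lengths of the generators of $U^e_{n_e+1}$, some length\nobreakdash-$m_e$ extension cylinder of $\xi\!\restriction n_e$ is disjoint from $U^e_{n_e+1}$, and we let $x_e$ be the leftmost such string. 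If the measure test fails, set $x_e=\xi\!\restriction n_e$. Now choose $t_e\ge 2|x_e|$, put $\xi\!\restriction n_{e+1}=x_e\,0^{t_e}$ and $n_{e+1}=|x_e|+t_e$. Everything here is effective, so $\xi=\bigcup_e \xi\!\restriction n_e\in\cantor$ is computable.

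It then remains to verify the two properties. For BP\nobreakdash-randomness: given any primitive recursive clopen sequence $(U_n)_n$ with $\mu(U_n)\le 2^{-n}$, it equals $(U^e_n)_n$ for some $e$, and at stage $e$ we committed to $x_e\curl\xi$ with $[x_e]\cap U^e_{n_e+1}=\emptyset$, hence $\xi\notin U_{n_e+1}$ and a fortiori $\xi\notin\bigcap_n U_n$; by Theorem~\ref{BCR}, $\xi$ is BP\nobreakdash-random. For the failure of the law of Large Numbers: writing $s_m(\xi)=\sum_{k\le m}\xi(k)$, at each length $n_{e+1}$ every $1$ of $\xi\!\restriction n_{e+1}$ lies inside $x_e$, so $s_{n_{e+1}}(\xi)\le |x_e|\le |x_e|+t_e)/3=n_{e+1}/3$, whence $s_{n_{e+1}}(\xi)/n_{e+1}\le 1/3$; thus $\liminf_m s_m(\xi)/m\le 1/3<1/2$ and $\xi$ does not satisfy the law of Large Numbers. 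The step needing the most care is the first: arranging that the enumeration of primitive recursive clopen tests is genuinely computable with all functions total (so the construction never stalls), together with the observation that one diagonalization level per index is enough and that the required measure inequality $\mu(U^e_{n_e+1})\le 2^{-(n_e+1)}$ holds automatically for every genuine test; a minor secondary point is fixing a coding of clopen sets from which the generating finite set is computably extractable.
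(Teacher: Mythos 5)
Your proof is correct. The paper itself gives no argument for this theorem --- it is imported verbatim from~\cite{buss} --- and your construction (diagonalizing against a total computable enumeration of primitive recursive clopen sequences, escaping one level of each by a cylinder chosen via the measure comparison $2^{-n_e}>2^{-(n_e+1)}\ge\mu(U^e_{n_e+1})$, and interleaving zero-blocks of relative length $\ge 2/3$ to force $\liminf_m s_m(\xi)/m\le 1/3$) is exactly the standard Kurtz-style argument underlying the cited result, and it correctly uses the first characterization in Theorem~\ref{BCR}. The only point to tidy is that the generator-length bound $m_e$ should be replaced by $\max(m_e,n_e+1)$ before searching for the escaping extension $x_e$ (a generator shorter than $n_e$ meeting $[\xi\restriction n_e]$ would contain that whole cylinder, contradicting the measure inequality, so this costs nothing), together with the stray parenthesis in the estimate $s_{n_{e+1}}(\xi)\le|x_e|\le(|x_e|+t_e)/3$.
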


Hence \,\primrec -S randomness is strictly stronger than BP-randomness (even for computable sequences).

\subsection{A summary}

In order to summarize all results (some already known, some obtained here) in two tableaux, we agree on the following definitions (only the third one is new):

\begin{defi}\label{defi-agree}
For a class \,\C\, and \,$\xi\in\cantor$ (the martingales are \,$\q_2$-valued).

$\begin{array}{lcl}
\text{\point \ }\xi\text{ is \,\C\, random }\ & \text{iff}&\ \ \  \text{for no martingale $d$ in \,\C, \ $\limsup_{n\in\n} d(\xi\restriction n)=+\infty$.}\\
\text{\point \ $\xi$ is \,\C\,-S random }\ & \text{iff}& \ \begin{array}[t]{l}\text{for no martingale $d$ in \,\C\, and no true \,\C\, order $h$,}\\ d(\xi\restriction n)\geq 2^{h(n)}\text{ \ i.o}
\end{array}\\
\text{\point \ $\xi$ is \,\C\,-W random }\ & \text{iff}& \ \begin{array}[t]{l}\text{for no martingale $d$ in \,\C\, and no true \,\C\, order $h$,}\\ d(\xi\restriction n)\geq 2^{h(n)}\text{ \ a.e.}
\end{array}
\end{array}$
\end{defi}

\noindent If \,\C\, is the class of computable functions, then this corresponds to the classical notions of computable randomness, Schnorr randomness and weak randomness.

\begin{rem}\label{obvious}
If an infinite binary sequence \,$\xi$\, is in the class \,C\,, then $\xi$ is not \,\C\,-W random.
\end{rem}

To see this is true, we cannot simply say \ $\xi\in\bigcap_{n\in\n}[\xi\restriction n]$. The equivalence  between the ML definition and the martingale definition has only been  shown for the class \,\primrec\ (Theorem~\ref{BCR}) and may be problematic for low time-complexity classes.\\
To justify the remark, let us note that if \,$\xi$\, is in the class \,C\,, then one can consider the martingale $d$ defined as \ $d(x)=\sum_{i\in\n}2^i\mu([\xi\restriction 2i]|x)$.  $d$ is a $\q_2$-valued martingale in \,\C\, and for any 
$j\in\n,\ \ d(\xi\restriction j)\geq 2^{\lfloor j/2\rfloor}$.
The function \ $j\mapsto\lfloor j/2\rfloor$ \ is a true \,\pol\, order, hence $\xi$ is not \,\C\,-W random.

In the following tableau, no implication can be reversed:

\centerline{$\begin{array}{|ccccc|}\hline
&&&&\vspace{-2mm}\\
\!\text{Computable randomness}\!\!&\!\!\overset{\esp}{\Rightarrow}&\text{Schnorr randomness}&\Rightarrow&\text{weak randomness}
\\
&&&&\vspace{-3mm}\\
\text{\small{(1)}}\,\Downarrow\esp &&\text{\small{(1)}}\,\Downarrow\esp&&\text{\small{(1)}}\,\Downarrow\esp
\\
&&&&\vspace{-3mm}\\
\!\text{\primrec\ randomness}\!\!\!\!&\!\!\!\!\underset{\text{\small{(2)}}}{\Rightarrow}\!\!\!&\!\!\!\text{\primrec\,-S randomness}\!\!\!\!&\!\underset{\text{\small{(3)}}}{\Rightarrow}\!\!\!&\!\!\!\text{\primrec\,-W randomness\!}\\
\hline
\end{array}$}

The impossibility of reversing the implications in the first line is classical (Schnorr, Wang).
Concerning the other implications:
\begin{enumerate}
\item let   \ \C\ $\strin$\ \C '\ be two classes in our collection of time-complexity classes, or let \,C\, be in our collection and let \,\C ' be the class of recursive functions. It is known that there is a sequence $\xi$ in \,\C '\ which is \,\C\ random (One can use the martingale $\Psi_C(x)=\sum_e2^{-e}d_C(e,x)$, with $d_C$ given in Definition~\ref{easy}). By Remark~\ref{obvious}, $\xi$ is not \,\C '-W\ random,
\item is by Proposition~\ref{csrandom-not-crandom}(i),
\item is by Proposition~\ref{large-numbers} and Theorem~\ref{nolaw}.
\end{enumerate}

In the next tableau, the non-reversibility of the implication holds also when the notion is restricted to the class of computable sequences. (1),\,(2) and (3) refer to the above justifications (we abbreviate \toexp\ to \texp\ to be able to insert the tableau).\\


\centerline{$\begin{array}{|ccccc|}\hline
&&&&\vspace{-2mm}\\
\!\!\text{\primrec\ randomness}\!\!\!\!&\!\!\!\!\underset{\text{\small{(2)}}}{\Rightarrow}\!\!\!\!&\!\!\!\!\text{\primrec\,-S\ randomness}\!\!\!\!&\!\!\!\!\underset{\text{\small{(3)}}}{\Rightarrow}\!\!\!\!&\!\!\!\!\text{\primrec\,-W randomness}\!\vspace{-1mm}\\
\text{\small{(1)}}\,\Downarrow\esp &&\text{\small{(1)}}\,\Downarrow\esp&&\text{\small{(1)}}\,\Downarrow\esp\\
&&&&\vspace{-3mm}\\
\text{\texp\ randomness}\!\!\!&\!\!\!\underset{\text{\small{(2)}}}{\Rightarrow}\!\!\!&\!\!\!\text{\texp\,-S randomness}\!\!\!&\!\!\!\underset{\text{\small{(3)}}}{\Rightarrow}\!\!\!&\!\!\!\text{\texp\,-W randomness.}\vspace{-1mm}\\
\text{\small{(1)}}\,\Downarrow\esp &&\text{\small{(1)}}\,\Downarrow\esp&&\text{\small{(1)}}\,\Downarrow\esp\\
&&&&\vspace{-3mm}\\
\text{\expo\ randomness}&\underset{\text{\small{(2)}}}{\Rightarrow}&\text{\expo\,-S\ randomness}&\underset{\text{\small{(3)}}}{\Rightarrow}&\text{\expo\,-W randomness}\vspace{-1mm}\\
\text{\small{(1)}}\,\Downarrow\esp &&\text{\small{(1)}}\,\Downarrow\esp&&\text{\small{(1)}}\,\Downarrow\esp\\
&&&&\vspace{-3mm}\\
\text{\pol\ randomness}&\underset{\text{\small{(2)}}}{\Rightarrow}&\text{\pol\,-S\ randomness}&\underset{\text{\small{(3)}}}{\Rightarrow}&\text{\pol\,-W randomness}\\
\hline
\end{array}$}


\nocite{ch,ko}

\end{document}